\documentclass{article}
\usepackage{graphicx}
\usepackage[left=2cm]{geometry}
\usepackage[page]{appendix}
%
% \usepackage{mathptmx}      % use Times fonts if available on your TeX system
%
% insert here the call for the packages your document requires
\usepackage{amsmath}
\usepackage{amssymb}
\usepackage{amsthm}
\usepackage{slashbox}
\usepackage{stmaryrd}
\usepackage{txfonts}
\usepackage{pstricks,pst-node}
\usepackage{caption}
\usepackage{subcaption}
\usepackage{url}
\usepackage{multirow}
\newtheorem{lemma}{Lemma}[section]

\newtheorem{proposition}[lemma]{Proposition}

\newtheorem*{theorem*}{Theorem}

\DeclareMathOperator{\Var}{Var}

%\theoremstyle{remark}
%\newtheorem{remark}[lemma]{Remark}
% please place your own definitions here and don't use \def but
% \newcommand{}{}
%\newcommand{\down}{\mathop{\downarrow}}

\hyphenation{Gro-then-dieck morph-ism morph-isms iso- morph-ism
  iso-morph-isms homo-morph-ism homo-morph-isms funct-or funct-ors
  pro-funct-or pro-funct-ors mono-morph-ism mono-morph-isms}

\begin{document}

\title{Prior Distributions for Ranking Problems} 
\author{Toby~Kenney \and Hao~He \and Hong Gu}

\maketitle

\begin{abstract}
The ranking problem is to order a collection of units by some
unobserved parameter, based on observations from the associated
distribution. This problem arises naturally in a number of contexts,
such as business, where we may want to rank potential projects by
profitability; or science, where we may want to rank variables
potentially associated with some trait by the strength of the
association. Most approaches to this problem are empirical Bayesian,
where we use the data to estimate the hyperparameters of the prior
distribution, then use that distribution to estimate the unobserved
parameter values. There are a number of different approaches to this
problem, based on different loss functions for mis-ranking
units. However, little has been done on the choice of prior
distribution. Typical approaches involve choosing a conjugate prior
for convenience, and estimating the hyperparameters by MLE from the
whole dataset. In this paper, we look in more detail at the effect of
choice of prior distribution on Bayesian ranking. We focus on the use
of posterior mean for ranking, but many of our conclusions should
apply to other ranking criteria, and it is not too difficult to adapt
our methods to other choices of prior distributions.
\end{abstract}

\subsection*{keywords}
Empirical Bayes; posterior mean ranking; choice of prior

\section{Introduction}

Suppose we have a collection of units we want to rank by a certain
feature of each unit: for example, we may wish to rank genes by the
risk they cause of a particular condition; we may wish to rank
sportsmen by their success-rate at particular standardised trials; we
may wish to rank business opportunities by the profit they will
generate. This is a very common inference problem first studied as a
formal statistical problem by Bechhofer (1954) and by Gupta
(1956). Typically, for each unit we wish to rank, we will have some
data on the associated feature, but will not know the true value of that
feature. Based on our data, we will have a point estimate for the
feature, and an associated error distribution. The amount of data we
might have for different units can vary wildly, meaning that the
associated error distributions can be very different for different
units. This means that when we select the top units using only our
point estimates, the units for which we have largest errors have a
higher chance of appearing among the top units, because a large error
increases the chance of the point estimate being large. We are
therefore likely to select a large number of false positives if we
select based solely on the point estimates.

We can illustrate this with a simple example. Suppose we
have 300 coins, we toss 100 of them six times each, 100 of them eight
times each, and the remaining 100 of them ten times each, and rank the
coins by the proportion of heads observed. If the coins are all fair,
then among the 100 that we toss six times each, there is likely to be
at least one that achieves 100\% heads. Among the 100 that we toss eight
times, there might be one that achieves 100\% heads, and there are
likely to be several that achieve 87.5\% heads. Among the 100 that we
toss ten times each, it is fairly unlikely than any will exceed
80\%, so the highest ranked units will almost certainly come from
among the coins that we toss only six times. That is, the highest
ranked units are almost all false-positives arising only out of
chance. This is still true, even if one or more of the coins that are
tossed ten times have a slightly higher probability of heads. 

On the other hand, if our main aim is to avoid false positives, we
could use a testing-based approach, where for each unit, we perform an
hypothesis test of whether the unit has some null status --- for
example whether the probability of heads is 0.5. We can then rank by
the $p$-values of these tests. This has the advantage of minimising
false positives, but in many cases there are a large number of true
positives, but only a few of them are truly important. If we apply the
testing approach, we will often select the units on which we have
collected most data, simply because the more data we have, the more
evidence that they are not null cases. This may lead to neglecting
some units which have much higher underlying value, but for which we
have less data.

Other approaches to the problem mainly take a Bayesian approach. They
assume that the true values of the relevant feature fall under some
distribution. We can estimate this underlying distribution from all
the data points. Then for each observed unit, we use this distribution
as a prior to estimate a posterior distribution of the true value for
this unit. We then perform our ranking based on these posterior
distributions and a choice of loss function. There are a range of
different methods based on different loss functions. For example,
posterior expected rank (Laird and Louis, 1989) use a loss function
the squared difference between the true rank of a unit (based on the
actual value of the feature) and the estimated rank. The $r$-values
method (Henderson and Newton, 2015), corresponds to a loss function
the sum of absolute differences between estimated rank and true
rank. Both of these loss functions are based entirely upon ranks, with
no consideration of the actual true values. That is, they consider
mis-ranking two units with almost identical true values to be as bad
as mis-ranking units with very different true values. For the vast
majority of practical ranking problems, this will not be the
case. Gelman and Price (1999) present the interesting case of looking
for spatial patterns among the top-ranked units, where artificial
patterns can arise from patterns in available sample sizes. For their
purposes, the ideal ranking method would be in such a way that the
distribution of rank is the same for all values of standard error. For
a known prior, it is possible to calculate this rank, though we are
not aware of any work applying such a ranking method. However, methods
with loss functions based only on rank, rather than value might be
expected to perform better on this criterion, since all errors in
ranking can cause this issue equally.

The aim of a ranking analysis is often to maximise the average true
value from the selected units. For instance, in the business profit
example, the aim would be to maximise the expected total profit. For
these purposes, the loss function is the difference between the
largest true values and the true values of selected units. This loss
function is introduced in Gupta and Hsiao (1983), with some
additional thought given to the situation where the loss is different
for the case of omitting a variable that should be included, from the
case of including a variable that should be omited. They show that for
this loss function with known prior the Bayes rule is to rank by
posterior mean (though they are not very explicit about this, and
include some unnecessary hypotheses). This posterior mean ranking is
used for example in Aitkin and Longford (1986). A range of other loss
functions have also been considered, for example, Lin {\em et al}.
(2006) summarise a range of choices of loss function. For this paper,
we will be focussing on the posterior mean ranking method, and its
corresponding loss function, although many of our methods can be
easily adapted to other Bayesian ranking methods.

The key difficulty in Bayesian ranking methods is to choose the form
of the prior. Two common choices are the conjugate prior (which for
normal error is normal), and a non-parametric prior, which can be
calculated using the results of
Laird~(1978). Figure~\ref{MotivatingProblem} shows the sort of problem
that can arise with this approach. The lines on that figure show
points that are ranked equally by posterior mean under a normal prior
estimated from the whole dataset. As can be seen in that plot, a lot
of emphasis gets placed on points with small variance. The reason is
that the normal prior is light-tailed, so large true values are deemed
implausible, and discounted. However, the true prior distribution
seems to be more heavy-tailed than the normal, so larger values should
not be discounted so much. For example, consider the point in the red
circle. While it does have a larger standard error, it is very
significantly non-zero, and it is likely that the true log-odds ratio
is high. Intuitively, we would probably want to rank this data point
among the very top-ranked units. However, the posterior mean under the
normal prior ranks it below a lot of other points which, while
certainly significantly non-zero, have very small effect size. For
practical purposes, this is not desirable. We are usually interested
in units with a large effect size.

\hfil\begin{figure}[htbp]
\hfil\includegraphics[width=10cm, clip=true,trim=0cm 1.5cm 0cm 1.5cm]{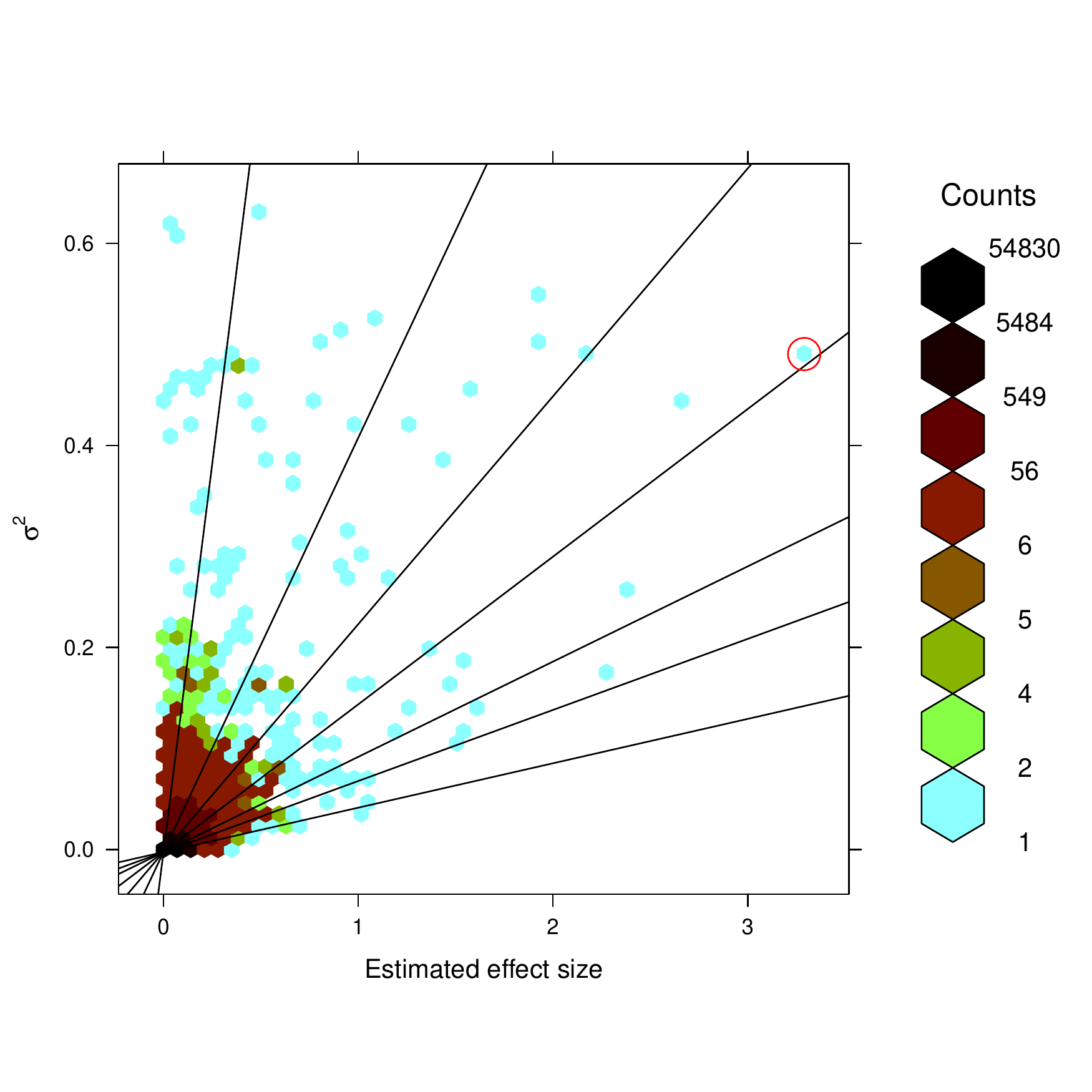}
\caption{Estimated log-odds ratio versus variance of estimator for SNP
data from a GWAS study into type-2 diabetes (Morris {\em et
  al.}, 2012). The lines show points ranked equally under posterior
mean with a normal prior.}\label{MotivatingProblem}
\end{figure}

The aim of this paper is to study the effect that choice of prior can
have on the ranking problem, and determine suitable choices of prior
for such analyses. Despite a fair amount of literature on Bayesian
ranking methods, there has been a noticeable lack of work on the
question of choice of prior. In view of the fact that selecting a
suitable model for the prior distribution is a very difficult problem
in model selection, it is important to consider the effects of a
misspecified prior distribution. As will become apparent later,
certain choices of prior are inherently more robust to
misspecification than others. Furthermore, some choices of prior are
more sensitive to parameter estimation than others. 

We describe the objective more formally as follows. A ranking problem
consists of a collection of units with unobserved parameters
$\theta_i$. For each unit, we have a point estimate $x_i$ for
$\theta_i$. We assume that $x_i$ is normally distributed with mean
$\theta_i$ and variance $\sigma_i{}^2$, where $\sigma_i$ is known. It
is straightforward to adapt our approach to a number of other error
distributions, but for this paper, we will focus on the normal error
case. We assume that the unobserved values $\theta_i$ follow what we
will refer to as the {\em true prior} distribution. We will rank by
posterior mean using what we will refer to as the {\em estimating
  prior}, which may or may not be the same as the true prior. We are
interested in how choice of the estimating prior affects the ranking.

The structure of this paper is as follows: In
Section~\ref{TheorySection}, we develop some theory behind posterior
mean ranking, and the loss from using the wrong estimating prior. In
Section~\ref{SectionIsotaxes}, we give a visual representation of the
effect of choice of estimating prior on posterior mean ranking. In
Section~\ref{SectionNonparametric}, we show that using the
non-parametric MLE as an estimating prior for posterior mean ranking
produces a robust ranking. In Section~\ref{Simulations}, we apply our
theory to some examples of misspecified estimating priors, and perform
a simulation study to confirm the results are as expected. We show
that an exponential estimating prior is a good general-purpose choice
for posterior mean ranking. In Section~\ref{SectionRealData}, we apply
this to some real data examples where we show the difference in the
ranking between using a normal distribution for the estimating prior
and using an exponential distribution. In
Section~\ref{SectionConclusions}, we make some concluding remarks and
suggestions for further investigations.

\section{Theory}\label{TheorySection}

\subsection{Approximate Posterior Mean for given Prior Distribution}

We suppose that our true prior distribution is continuous and has
density function $\pi(\theta)$. Suppose that we have a point estimate
$x$, whose error distribution is normal with variance $\sigma^2$,
where $\sigma$ is small. Since $\sigma$ is small, values of $\theta$
that are far from $x$ are extremely implausible, and contribute little
to the posterior mean for most choices of $\pi(\theta)$. We therefore
focus on the form of $\pi(\theta)$ for values of $\theta$ close to
$x$. Taking a first order Taylor expansion about $x$
gives $$\pi(\theta)=\pi(x)+\pi'(x)(\theta-x)$$ Using this
approximation to $\pi(\theta)$ gives that the posterior mean is
\begin{align*}
\frac{\int (x+(\theta-x))
  \left(\pi(x)+\pi'(x)(\theta-x)\right)e^{-\frac{(\theta-x)^2}{2\sigma^2}}\,d\theta}{\int \left(\pi(x)+\pi'(x)(\theta-x)\right)e^{-\frac{(\theta-x)^2}{2\sigma^2}}\,d\theta}&=x+\frac{\int (\theta-x) \left(\pi(x)+\pi'(x)(\theta-x)\right)e^{-\frac{(\theta-x)^2}{2\sigma^2}}\,d\theta}{\int
  \left(\pi(x)+\pi'(x)(\theta-x)\right)e^{-\frac{(\theta-x)^2}{2\sigma^2}}\,d\theta}\\
%&=x+\frac{\int (\theta-x) \left(\pi(x)+\pi'(x)(\theta-x)\right)e^{-\frac{(\theta-x)^2}{2\sigma^2}}\,d\theta}{\int \left(\pi(x)+\pi'(x)(\theta-x)\right)e^{-\frac{(\theta-x)^2}{2\sigma^2}}\,d\theta}\\
&=x+\frac{\pi(x)\int
  (\theta-x)e^{-\frac{(\theta-x)^2}{2\sigma^2}}\,d\theta+ \pi'(x)\int
    (\theta-x)^2 e^{-\frac{(\theta-x)^2}{2\sigma^2}}\,d\theta}{\pi(x)
    \int e^{-\frac{(\theta-x)^2}{2\sigma^2}}\,d\theta +\pi'(x)\int (\theta-x)e^{-\frac{(\theta-x)^2}{2\sigma^2}}\,d\theta}\\
&=x+\frac{\pi'(x)}{\pi(x)}\sigma^2\\
\end{align*}

This means that the key part of choice of estimating prior is to
estimate the quantity $\lambda(x)=-\frac{\pi'(x)}{\pi(x)}$. For the
tail of the distribution, this quantity is positive, and
asymptotically approaches the hazard rate. For an exponential
distribution, it is constant. For heavier-tailed distributions it
tends to zero as $x\rightarrow\infty$. For light-tailed distributions,
it tends to infinity as $x\rightarrow\infty$.

\subsection{Loss function in terms of posterior misestimation}

Suppose we should estimate the posterior mean as $x-\lambda\sigma^2$,
but in fact, we estimate it as $x-\hat{\lambda}\sigma^2$, for some
particular value of $x$. The question is what is the average loss
function resulting from this. 
For a ranking of all the observations, we can consider the total loss
as the sum of losses due to individual mis-rankings. That is, suppose we
rank the observations $\theta_{[1]},\theta_{[2]},\ldots,\theta_{[n]}$,
when the correct ranking is
$\theta_{(1)},\theta_{(2)},\ldots,\theta_{(n)}$. If we choose our
selection cutoff as the first $k$ units, then the loss function is
\begin{align*}
l_k&=\sum_{i=1}^k (\theta_{(i)}-\theta_{[i]})\\
&=\left(\sum_{\substack{i\leqslant k\\\theta_{(i)}\not\in\{\theta{[1]},\ldots,\theta{[k]}\}}}
\theta_{(i)}\right)-\left(\sum_{\substack{j\leqslant  k\\\theta_{[j]}\not\in\{\theta{(1)},\ldots,\theta{(k)}\}}}
\theta_{[j]}\right)\\
\end{align*}

We can move from the correct ranking to the estimated ranking by a
series of transpositions of adjacent units in the current ranking. For
example, if the correct ranking is $1,2,3,4,5,6$ and the estimated
ranking is $2,3,1,6,5,4$, we can change from the correct ranking to
the estimated ranking via the following sequence:
\begin{align*}
&1\ 2\ 3\ 4\ 5\ 6\\
&2\ 1\ 3\ 4\ 5\ 6\\
&2\ 3\ 1\ 4\ 5\ 6\\
&2\ 3\ 1\ 4\ 6\ 5\\
&2\ 3\ 1\ 6\ 4\ 5\\
&2\ 3\ 1\ 6\ 5\ 4\\
\end{align*}

For each such transposition, exchanging the position of $\theta_{(i)}$
in the $m$th postion, with $\theta_{(j)}$ in the $(m+1)$th position,
the change in loss function is
$$\left\{\begin{array}{ll}\theta_{(i)}-\theta_{(j)}&\textrm{if }m=k\\0&\textrm{otherwise}\end{array}\right.$$
The total loss from
this mis-ranking is then given by the sum of the loss functions for
each transposition. We see that the loss for each transposition is
non-negative for each value of $k$, so we can analyse the overall loss of
a misranking by looking at the loss of each pairwise misranking. 

If we consider the overall loss as the total of the loss functions for
all values of $k$, we see that this loss function is just the sum of
the loss functions for each transposition. Furthermore, whatever
sequence of transpositions is performed, there will be one
transposition for each misranked pair. Therefore the total loss
function is the sum of the losses from each misranked pair. We can
therefore study the total loss function by studying the misranking
loss for any pair of observations. In practice, we will often consider
only the loss of the upper tail of the distribution. That is, we will
choose some cutoff $a$ and evaluate the sum of the loss function for
all $k$ such that $x_{(k)}>a$. For this we have the following
proposition (proof in Appendix~\ref{AppProofOfLossFunction})

\begin{proposition}
Suppose the true prior distribution of the parameter $\theta$ has
density function $\pi(\theta)$, and that we have two observations
$x_1$ and $x_2$ which are normally distributed with means $\theta_1$
and $\theta_2$ and standard deviations $\sigma_1$ and $\sigma_2$
respectively, where $\theta_1$ and $\theta_2$ are random samples from
the true prior distribution, and $\sigma_1$ and $\sigma_2$ are assumed
to be small.

\begin{enumerate}
\renewcommand\labelenumi{(\roman{enumi})}
\renewcommand\theenumi\labelenumi

\item The expected loss when the estimating prior and the true prior
  are the same (which we will refer to as the {\em optimal expected
    loss}) is approximately given by
$$\frac{\sigma_1{}^2+\sigma_2{}^2}{2}{\mathbb
  E}(\pi(x))$$

\item When the estimating prior has density $\hat{\pi}$, the
  difference between the expected loss and the optimal expected loss
  is approximately given by
\begin{equation}\label{approxtotalloss}\frac{1}{2}(\sigma_1{}^2-\sigma_2{}^2)^2\int_{a}^\infty \pi(x)^2(\lambda(x)-\hat{\lambda}(x))^2\,dx\end{equation}
where $\hat{\lambda}(x)=-\frac{\hat{\pi}'(x)}{\hat{\pi}(x)}$.

\item The difference between the expected loss from using the point
  estimate $x_i$ and the optimal expected loss is approximately given by

$$\frac{1}{2}\left(\sigma_1{}^2-\sigma_2{}^2\right)^2\int_a^\infty \pi'(x)^2\,dx$$

\end{enumerate}
\end{proposition}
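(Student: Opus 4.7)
The plan is to exploit the pairwise reduction already established just above the proposition: the total loss decomposes as a sum over mis-ranked pairs, so it suffices to analyse a single pair with $\theta_i \sim \pi$ independent and $x_i \mid \theta_i \sim N(\theta_i, \sigma_i^2)$. The small-$\sigma$ calculation already carried out in Section~\ref{TheorySection} tells us that, conditional on $x_i$, $\theta_i$ is approximately $N(m_i, \sigma_i^2)$ with $m_i = x_i - \lambda(x_i)\sigma_i^2$, so $Z := \theta_1 - \theta_2 \mid x_1, x_2 \sim N(\mu, s^2)$ where $\mu = m_1 - m_2$ and $s^2 = \sigma_1^2 + \sigma_2^2$. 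At leading order in $\sigma$, the marginal of $x_i$ is $\pi(x_i)$, so $(x_1, x_2)$ has joint density $\pi(x_1)\pi(x_2)$. All three parts then reduce to Gaussian expectations conditional on $(x_1, x_2)$, integrated against this leading-order density.

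For part (i), the ranking is by $\mu$, so the conditional loss is $L^{*} = \mathbb{E}[|Z|\,I[\mu Z < 0] \mid x_1, x_2]$. A direct Gaussian computation gives $L^{*} = s\phi(\mu/s) - |\mu|\Phi(-|\mu|/s)$. I would integrate this against $\pi(x_1)\pi(x_2)\,dx_1\,dx_2$, change variables to $v = (x_1+x_2)/2$ and $u = (x_1-x_2)/s$ (Jacobian $s$), and use $\mu \approx x_1-x_2 = su$ at leading order. Sending $s \to 0$ collapses $\pi(v \pm su/2) \to \pi(v)$, and the elementary $u$-integral $\int (\phi(u) - |u|\Phi(-|u|))\,du = 1/2$ yields $L^{*} \approx \tfrac{s^2}{2}\int \pi(v)^2\,dv = \tfrac{\sigma_1^2 + \sigma_2^2}{2}\,\mathbb{E}(\pi(\theta))$.

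For part (ii), the key observation is that conditional on $(x_1, x_2)$, $L - L^{*} = |\mu|\,I[\mu\hat\mu < 0]$, where $\hat\mu = \hat m_1 - \hat m_2$: if $\mu,\hat\mu$ have the same sign the rankings agree, while on disagreement the conditional loss flips between $\mathbb{E}[|Z|I[Z<0]]$ and $\mathbb{E}[|Z|I[Z>0]]$, whose difference is $\mathbb{E}[Z] = \mu$. Writing $\delta = \hat\lambda - \lambda$, we have $\mu - \hat\mu = \sigma_1^2 \delta(x_1) - \sigma_2^2 \delta(x_2)$. The event $\mu\hat\mu < 0$ forces $\mu$ to lie between $0$ and $\mu - \hat\mu$, so $\mu = O(\sigma^2)$ and hence $x_1 \approx x_2$ on that event; to leading order $\mu - \hat\mu \approx (\sigma_1^2 - \sigma_2^2)\delta(x_1)$. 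Changing variables $(x_1,x_2)\to(x_1,\mu)$, replacing $\pi(x_2)$ by $\pi(x_1)$, and evaluating $\int_0^{(\sigma_1^2 - \sigma_2^2)\delta(x_1)} \mu\,d\mu = \tfrac12 (\sigma_1^2-\sigma_2^2)^2 \delta(x_1)^2$ yields formula~(\ref{approxtotalloss}); the restriction to $x > a$ comes from the tail cutoff imposed on the outer $x_1$ integration.

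Part (iii) follows immediately by specialising (ii) to $\hat\lambda \equiv 0$, which corresponds to ranking by the raw point estimate $x_i$: the integrand becomes $\pi(x)^2 \lambda(x)^2 = \pi'(x)^2$. The main obstacle will be making the small-$\sigma$ asymptotics in part (ii) rigorous, since the mis-ranking event lives on a set of size $O(\sigma^2)$ and the leading answer is itself $O(\sigma^4)$, so the Taylor remainders in $\pi(x_2) - \pi(x_1)$, in $\delta(x_2) - \delta(x_1)$, and in the posterior-mean/variance approximation must all be controlled uniformly enough to be genuinely sub-leading; some smoothness and mild tail-decay assumptions on both $\pi$ and $\hat\pi$ will be needed to close this out cleanly.
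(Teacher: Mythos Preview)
Your argument is correct. Parts~(ii) and~(iii) follow essentially the same route as the paper: condition on $(x_1,x_2)$, recognise that the conditional excess loss is the posterior-mean difference $\mu$ on the disagreement event, approximate $\pi(x_2)\approx\pi(x_1)$ and $\delta(x_2)\approx\delta(x_1)$ on that $O(\sigma^2)$-thin set, and integrate $\int_0^{(\sigma_1^2-\sigma_2^2)\delta}\mu\,d\mu$. Your identity $L-L^{*}=|\mu|\,I[\mu\hat\mu<0]$ is exactly what the paper derives, just stated more compactly.

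Part~(i) is where you genuinely diverge. The paper conditions on the \emph{true} parameters: it fixes $\theta_1$, writes $\Delta=\theta_1-\theta_2$, computes the misranking probability $\Phi\bigl((\lambda(\sigma_1^2-\sigma_2^2)-\Delta)/s\bigr)$ over the noise, Taylor-expands $\pi(\theta_1-\Delta)$, and then grinds through several Gaussian moment integrals, eventually discarding higher-order terms in $d=\sigma_1^2-\sigma_2^2$. You instead condition on the \emph{observations}, use the approximate posterior $Z\mid x_1,x_2\sim N(\mu,s^2)$, and reduce everything to the single clean identity $\int\bigl(\phi(u)-|u|\Phi(-|u|)\bigr)\,du=\tfrac12$ after the change of variables $(x_1,x_2)\to(v,u)$. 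The two computations are Fubini-dual, but yours is tidier: it reaches the leading term $\tfrac{s^2}{2}\int\pi^2$ directly, without generating the $d\lambda^2\sigma^2$ and $d^3\lambda^4$ terms that the paper has to argue away. The price is that your approach leans on the posterior-normal approximation (mean $m_i$, variance $\sigma_i^2$) rather than just the posterior-mean approximation, so the ``making it rigorous'' caveat you flag at the end applies to~(i) as well as~(ii).
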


We see that for $\hat\lambda(x)\leqslant\lambda(x)$, $\int_{a}^\infty
\pi(x)^2(\lambda(x)-\hat{\lambda}(x))^2\,dx$ is bounded by
$\int_{-\infty}^\infty \pi(x)^2\lambda(x)^2\,dx$, which is the
expected information of $\theta$, and is bounded for most
distributions. This means that if the estimating prior is too
heavy-tailed, we can do no worse than ranking by point estimators
alone. On the other hand, if we have $\hat{\lambda}(x)\geqslant
\lambda(x)$, then the integral can approach $\int_{-\infty}^\infty
\pi(x)^2\hat{\lambda}(x)^2\,dx$, which can be unbounded if the true
prior has a heavy tail, but the estimating prior has a light tail. In
most cases, the expression will not be unbounded. For example, for a
normal estimating prior and a Pareto true prior, we have that
$\hat{\lambda}(x)=\frac{x}{\tau^2}$ and
$\pi(x)=\frac{\alpha\eta^\alpha}{x^{\alpha+1}}$, so
$$\int_0^\infty
\pi(x)^2\hat{\lambda}(x)^2\,dx=\frac{\alpha^2\eta^{2\alpha}}{\tau^4}\int_0^\infty
x^{2-2(\alpha+1)}\,dx=\frac{\alpha^2\eta^{2\alpha}}{\tau^4}\int_0^\infty
x^{-2\alpha}\,dx$$ which diverges whenever
$\alpha\leqslant\frac{1}{2}$. Thus for very heavy-tailed true priors,
the loss from using a light-tailed estimating prior can diverge.

We see that there is a risk of this unbounded loss whenever
$\hat{\lambda}(x)$ diverges. This can happen for any estimating prior
with a lighter tail than an exponential distribution. We therefore
suggest using an exponential distribution for the estimating prior to
ensure the loss is not too great.  This has the added mathematical
convenience that the posterior mean is easily calculated as
$\hat{\theta}=x-\lambda\sigma^2$ for some constant $\lambda$. If we
use an improper exponential prior with density proportional to
$e^{-\lambda \theta}$ for all $\theta$ (not just $\theta>0$) then this
formula for the posterior mean is exact. Indeed the posterior
distribution is given by
\begin{align*}
\pi_x(\theta)&\propto
e^{-\lambda\theta}e^{-\frac{(\theta-x)^2}{2\sigma^2}}=e^{-\frac{(\theta-x+\lambda\sigma^2)^2}{2\sigma^2}+\frac{\lambda^2\sigma^2}{2}-\lambda  x}\propto e^{-\frac{(\theta-x+\lambda\sigma^2)^2}{2\sigma^2}}\\
\end{align*}
which is the density of a normal distribution with mean
$x-\lambda\sigma^2$ and variance $\sigma^2$.

In this proposition, part~(ii) gives the measure of the cost of
using the wrong estimating prior. (i) and (iii) give measures of the overall
difficulty of the ranking problem. (i) is the irreducible cost of
misranking. (iii) is the additional cost from using the point
estimates to rank, instead of using the posterior mean. It is an
indication of the extent to which the ranking can be improved by using
Bayesian methods.

\section{Shapes of ranking thresholds}\label{SectionIsotaxes}

Henderson and Newton (2015) describe different ranking methods in
terms of the shapes of what they refer to as ``threshold functions'',
namely the functions $t_\alpha(\sigma^2)$ which are the smallest value
of $x$, such that the observation $(x,\sigma^2)$ is ranked in the top
$\alpha$ proportion under the ranking method in question. These
threshold functions are curves joining points of equal rank: we will
therefore refer these curves as {\em isotaxes} (singular: {\em
  isotaxis}, from Greek {\em iso} meaning equal, and {\em taxis}
meaning rank). Henderson and Newton (2015) then describe their
$r$-values procedure directly by calculating the shape of these
isotaxes. We will examine the shape of the isotaxes as a method to
better determine the effect of the estimating prior on ranking.

For Bayesian methods, the shape of these isotaxes depends heavily on
the choice of estimating prior. For the normal estimating prior with
mean 0 and variance $\tau^2$, for an observation $x$ with standard
error $\sigma$, the posterior mean is
$\frac{\tau^2}{\tau^2+\sigma^2}x$, so isotaxes are given by solutions
to $\frac{\tau^2}{\tau^2+\sigma^2}x=C$ for constant $C$, or to
$\sigma^2=\frac{\tau^2}{C}x-\tau^2$. When plotted on a graph of
$\sigma^2$ against $x$, these are lines of varying slope, with
shallower slope at higher ranks. (Indeed, these lines all pass through
the point $(0,-\tau^2)$.)

For an exponential estimating prior with hazard rate $\lambda$, as
mentioned above, the posterior mean is given by $x-\lambda\sigma^2$.
The isotaxes are therefore given by the equation
$x-\lambda\sigma^2=C$, or
$\sigma^2=\frac{x}{\lambda}-\frac{C}{\lambda}$, so they are lines of
constant slope.

For a heavy-tailed distribution, recall that we have posterior mean
approximately $x+\frac{\pi'}{\pi}\sigma^2$. Therefore 
%if $-\frac{\pi'}{\pi}$ is a decreasing function of $x$, then 
the isotaxes
are functions of the form $x+\frac{\pi'(x)}{\pi(x)}\sigma^2=C$. A typical
example is $\pi(x)=x^{-\alpha}$, so that
$\frac{\pi'(x)}{\pi(x)}=-\frac{\alpha}{x}$. This means the isotaxes
are curves of the form
\begin{align*}
x-\frac{\alpha\sigma^2}{x}&=C\\
x^2-\alpha\sigma^2&=Cx\\
\sigma^2&=\frac{1}{\alpha}\left(x-\frac{C}{2}\right)^2-\frac{C^2}{4\alpha}\\
\end{align*}

which gives a parabola. We plot the shapes of the isotaxes for these
estimating prior distributions in Figure~\ref{Fig_isotaxis}.

\begin{figure}[htbp]
\caption{Isotaxis plots for various choices of estimating prior distribution
  using posterior mean ranking}\label{Fig_isotaxis}
\begin{subfigure}{0.33\textwidth}\caption{Normal}
\includegraphics[width=5.5cm]{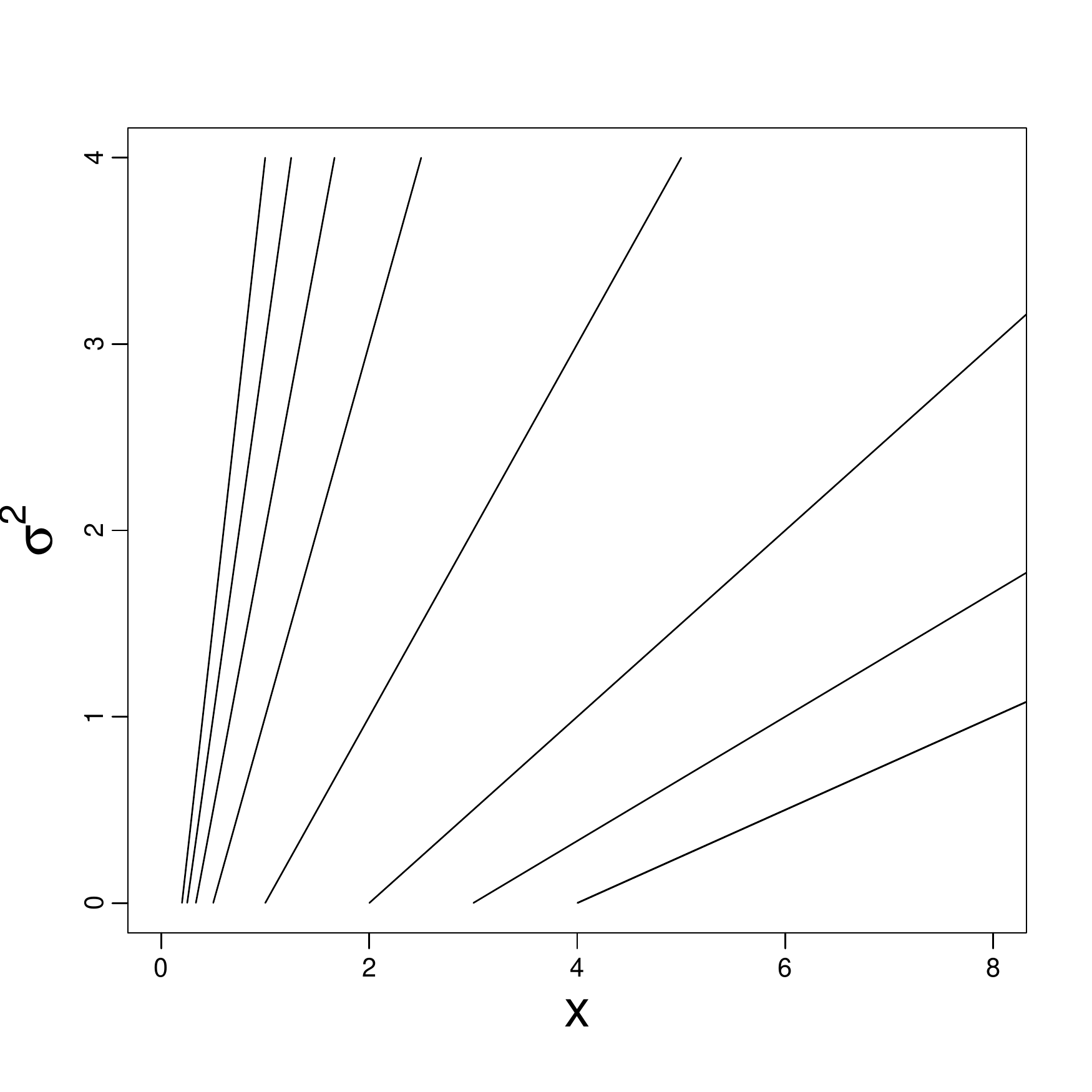}
\end{subfigure}
\begin{subfigure}{0.33\textwidth}\caption{Exponential}
\includegraphics[width=5.5cm]{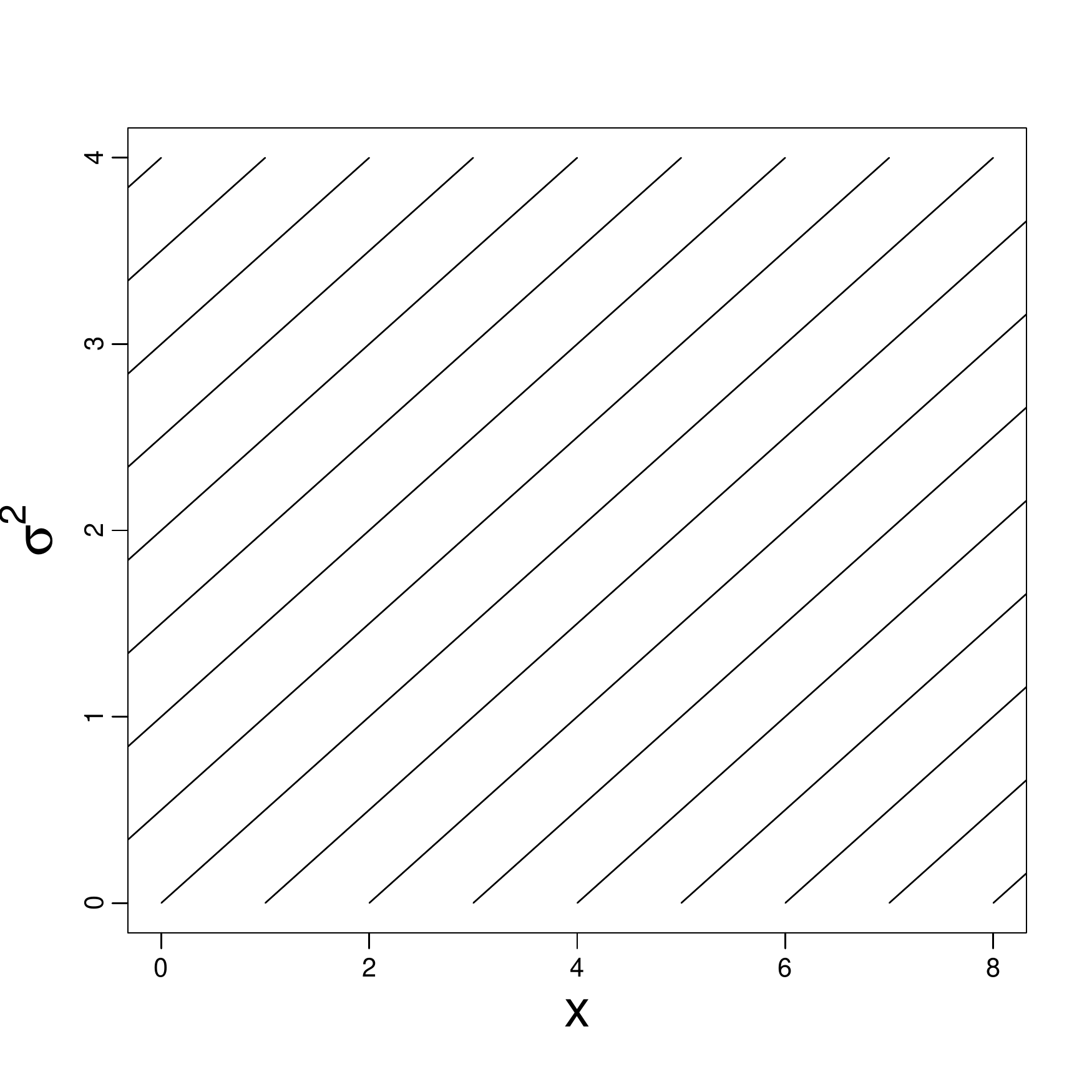}
\end{subfigure}
\begin{subfigure}{0.33\textwidth}\caption{Pareto}
\includegraphics[width=5.5cm]{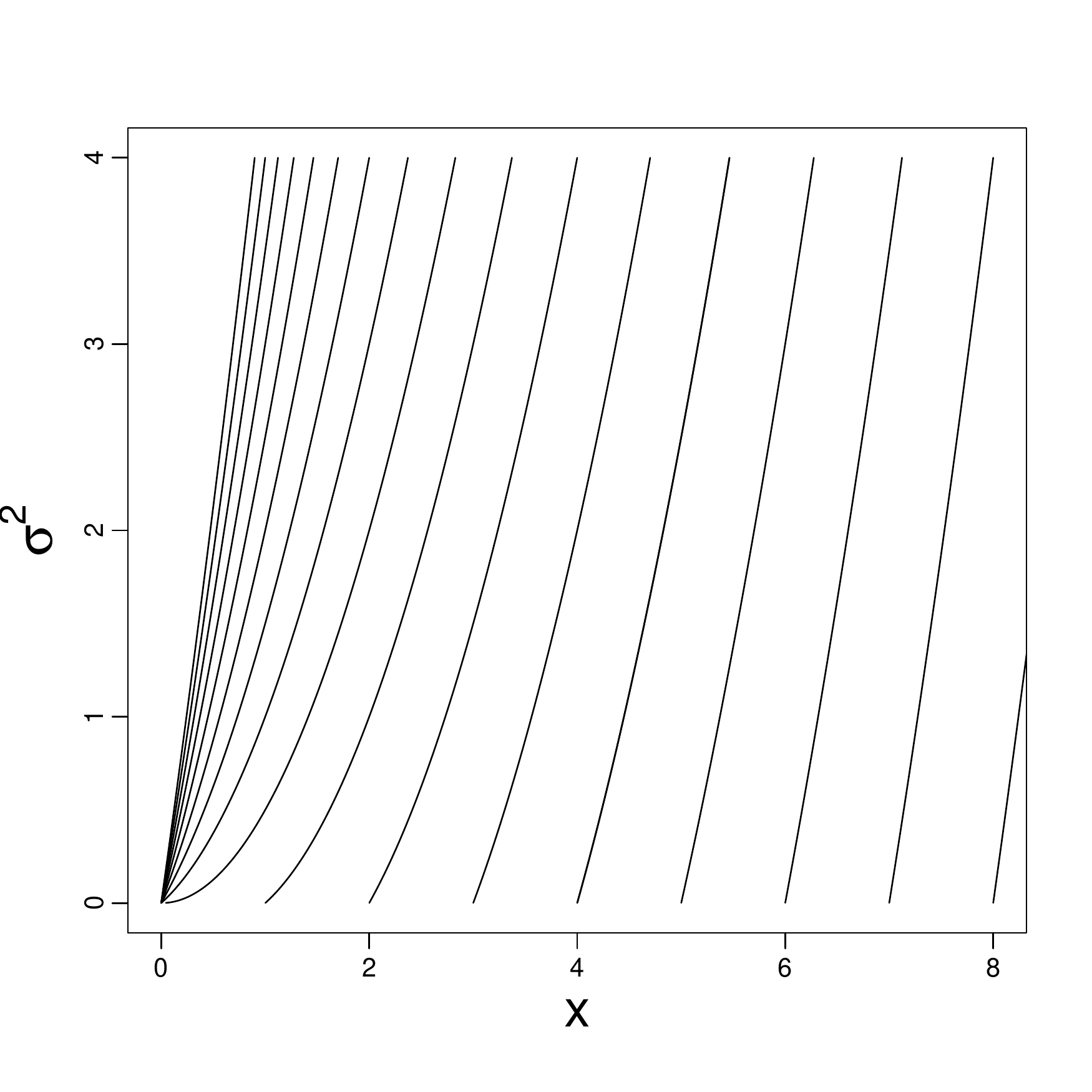}
\end{subfigure}

\end{figure}

We see that for the exponential and heavy-tailed estimating priors,
the slopes of isotaxes are bounded away from zero, so the posterior
mean cannot be very far from the point estimate for $x$. Since by
assumption, the true value also will not be so far from the point
estimate, this means that the posterior mean cannot be too far from
the true value.

From the shapes in Figure~\ref{Fig_isotaxis}, we see that for the
normal estimating prior, the standard error becomes increasingly
important as we move towards the tail of the distribution, and that
the posterior mean can be arbitrarily far away from the true
value. For the exponential estimating prior, the standard error
remains equally important throughout. For the heavy-tailed estimating
prior, the standard error becomes less important as we move to the
tail of the distribution. Furthermore, the standard error is most
important for small standard error, and differences in standard error
become less important as the standard error increases.

\section{Non-parametric Prior}\label{SectionNonparametric}

It is also possible to calculate a non-parametric maximum likelihood
estimate for the prior distribution. It was shown by Laird (1978) that
the prior in this case is a discrete distribution with finite
support. An implementation of this non-parametric prior estimation is
given in the \texttt{rvalues} package in \texttt{R}. However, this
implementation is buggy, so we were unable to compare this method in
Section~\ref{Simulations}. We show that for such a choice of
estimating prior, provided the support of the prior distribution
includes points sufficiently close to all the observed data, then the
posterior mean estimators are robust.  Proofs of the following lemmas
are in Appendix~\ref{NPPriorProof}.

\begin{lemma}
Let $\pi$ be a discrete distribution with probability at least
$\frac{1}{r+1}$ in the interval $[x-a,x+a]$ for some $a>0$. Let
$\hat{\theta}$ be the posterior mean for an observation $x$ with
standard error $\sigma$. Then
$$|\hat{\theta}-x|\leqslant a+\sigma\sqrt{2\log(r)}$$
\end{lemma}

This means that provided the prior distribution assigns some
probability to a region near to each observed value of $x$, then
the posterior mean estimate will have some robustness to model
misspecification.

\begin{lemma}
For a sample of $n$ datapoints and their corresponding standard
errors, the non-parametric MLE estimate for the prior distribution
always assigns probability at least $\frac{1-e^{-\frac{1}{2}}}{n}$ to
the interval $(x-\sigma\sqrt{2\log(n)+1},x+\sigma\sqrt{2\log(n)+1})$,
for every observed data point $(x,\sigma)$.
\end{lemma}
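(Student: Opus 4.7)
The plan is to exploit the first-order optimality (variational) characterization of the NPMLE and then control the Gaussian tail outside the specified interval.

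First I would write down the marginal likelihood attached to a candidate prior $\pi$, namely $L(\pi) = \sum_{i=1}^n \log f_i(x_i)$ where $f_i(x_i) = \int \phi_{\sigma_i}(x_i-\theta)\,d\pi(\theta)$ and $\phi_\sigma$ denotes the $N(0,\sigma^2)$ density. For the NPMLE $\hat\pi$, I would consider the perturbed mixture $\hat\pi_\epsilon = (1-\epsilon)\hat\pi + \epsilon\delta_{\theta^*}$ and compute the directional derivative of $L$ at $\epsilon=0$; it equals $\sum_i \phi_{\sigma_i}(x_i-\theta^*)/\hat f_i(x_i) - n$. Since $\hat\pi$ maximizes $L$, this directional derivative must be non-positive for every $\theta^*$, giving the well-known inequality
\begin{equation*}
\sum_{i=1}^n \frac{\phi_{\sigma_i}(x_i-\theta^*)}{\hat f_i(x_i)} \leqslant n \quad \text{for all } \theta^*.
\end{equation*}

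Next, I would specialize to $\theta^* = x$, where $(x,\sigma)$ is the fixed data point of interest, say with index $j$. Keeping only the $j$-th (non-negative) term on the left yields
\begin{equation*}
\frac{\phi_\sigma(0)}{\hat f_j(x)} = \frac{1}{\sqrt{2\pi}\,\sigma\,\hat f_j(x)} \leqslant n, \qquad \text{so} \qquad \hat f_j(x) \geqslant \frac{1}{n\sqrt{2\pi}\,\sigma}.
\end{equation*}
This is the crucial lower bound on the marginal density at $x$ under the NPMLE, and extracting it from the full variational inequality is the main conceptual step.

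Then I would split $\hat f_j(x) = \int_I \phi_\sigma(x-\theta)\,d\hat\pi(\theta) + \int_{I^c} \phi_\sigma(x-\theta)\,d\hat\pi(\theta)$, where $I = (x-\sigma\sqrt{2\log n + 1},\, x+\sigma\sqrt{2\log n + 1})$. On $I$, I would use the trivial bound $\phi_\sigma \leqslant 1/(\sqrt{2\pi}\,\sigma)$ to get a contribution of at most $\hat\pi(I)/(\sqrt{2\pi}\,\sigma)$. On $I^c$, the defining property of $I$ gives $(x-\theta)^2/(2\sigma^2) \geqslant \log n + \tfrac12$, so $\phi_\sigma(x-\theta) \leqslant 1/(n\sqrt{e}\sqrt{2\pi}\,\sigma)$, and since $\hat\pi(I^c)\leqslant 1$ the contribution from $I^c$ is at most $1/(n\sqrt{e}\sqrt{2\pi}\,\sigma)$.

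Finally, I would combine the lower bound $\hat f_j(x) \geqslant 1/(n\sqrt{2\pi}\,\sigma)$ with this upper bound and cancel the common factor $1/(\sqrt{2\pi}\,\sigma)$ to obtain $1/n \leqslant \hat\pi(I) + 1/(n\sqrt{e})$, i.e.\ $\hat\pi(I) \geqslant (1-e^{-1/2})/n$, as claimed. The only genuinely delicate piece is justifying the NPMLE optimality inequality (treating the subgradient at a measure on the boundary of the simplex of probability measures); the tail bound and the algebra are routine.
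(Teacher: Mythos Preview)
Your proof is correct and follows essentially the same approach as the paper: both arguments perturb the NPMLE toward a point mass at $x$, obtain the lower bound $\hat f_j(x)\geqslant 1/(n\sqrt{2\pi}\,\sigma)$, and combine it with the same $I$/$I^c$ decomposition of the marginal density. The only difference is packaging---you invoke the gradient (directional-derivative) inequality for the NPMLE directly, whereas the paper carries out the finite perturbation $(1-\alpha)\hat\pi+\alpha\delta_x$ explicitly and then sends $\alpha\to 0$.
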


From the preceding lemmas, we conclude that ranking based on posterior
mean under the non-parametric MLE estimate for the prior is relatively
robust, with 
$$|\hat{\theta}-x|\leqslant
\sigma\left(\sqrt{2\log(n)+1}+\sqrt{2\log\left(\frac{n}{\left(1-e^{-\frac{1}{2}}\right)}-1\right)}\right)$$
We also know that for large $n$, the non-parametric MLE estimate is
consistent, so the ranking will be optimal with the non-parametric
MLE. Overall, we conclude that non-parametric estimation of prior
provides a reasonable compromise between efficiency and robustness.

However, as is typically the case with non-parametric methods, there
is a trade-off between bias and variance. For the non-parametric
method, the estimated ranking is asymptotically unbiassed, but can
have fairly large variance for smaller sample
sizes. Figure~\ref{NPsimulationExample} gives an illustration of this.

\begin{figure}

\caption{Comparison of Isotaxes for Non-parametric and Parametric Estimation}\label{NPsimulationExample}

\begin{subfigure}{0.33\textwidth}\caption{Non-parametric}
\includegraphics[width=5.5cm]{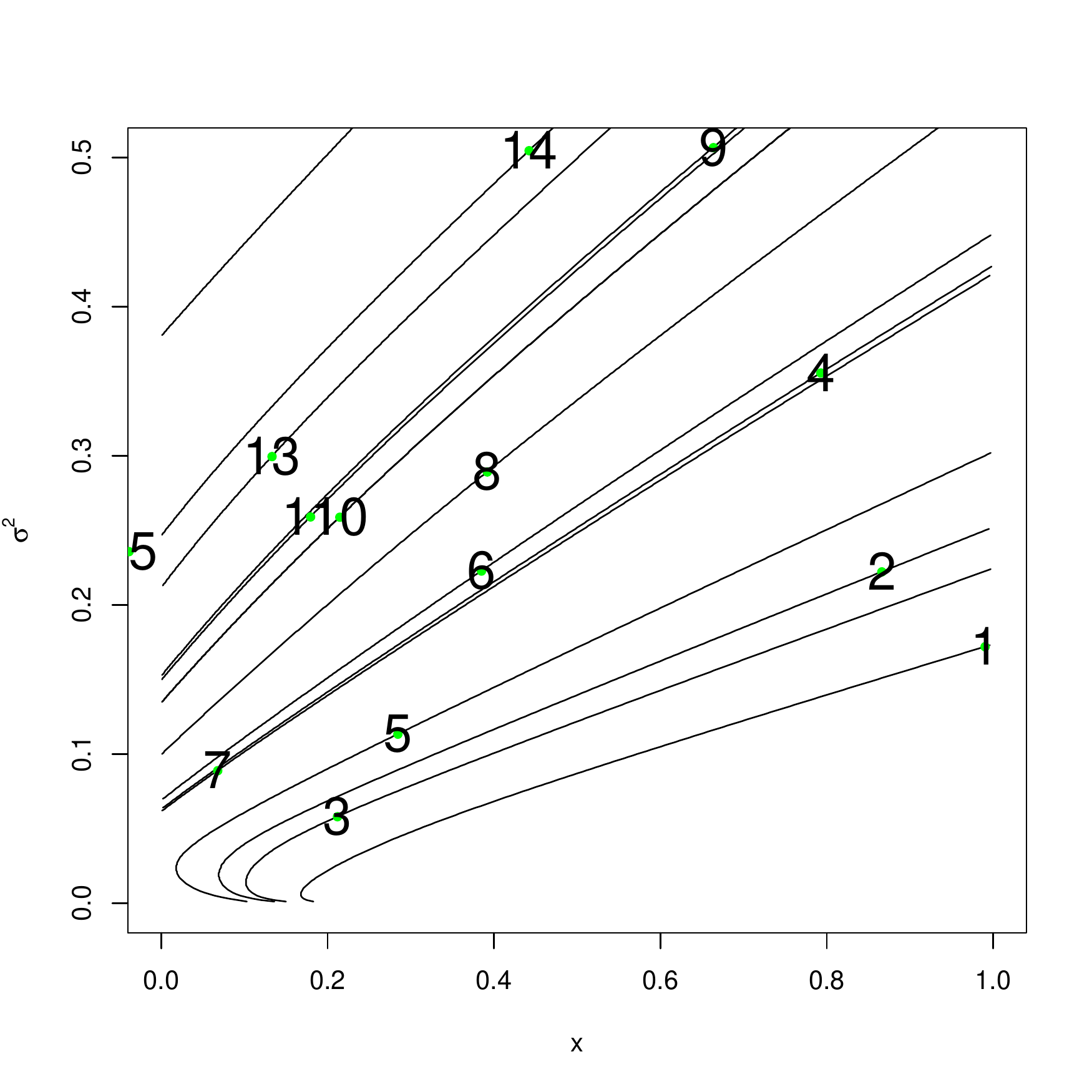}
\end{subfigure}
\begin{subfigure}{0.33\textwidth}\caption{Exponential Estimating Prior}
\includegraphics[width=5.5cm]{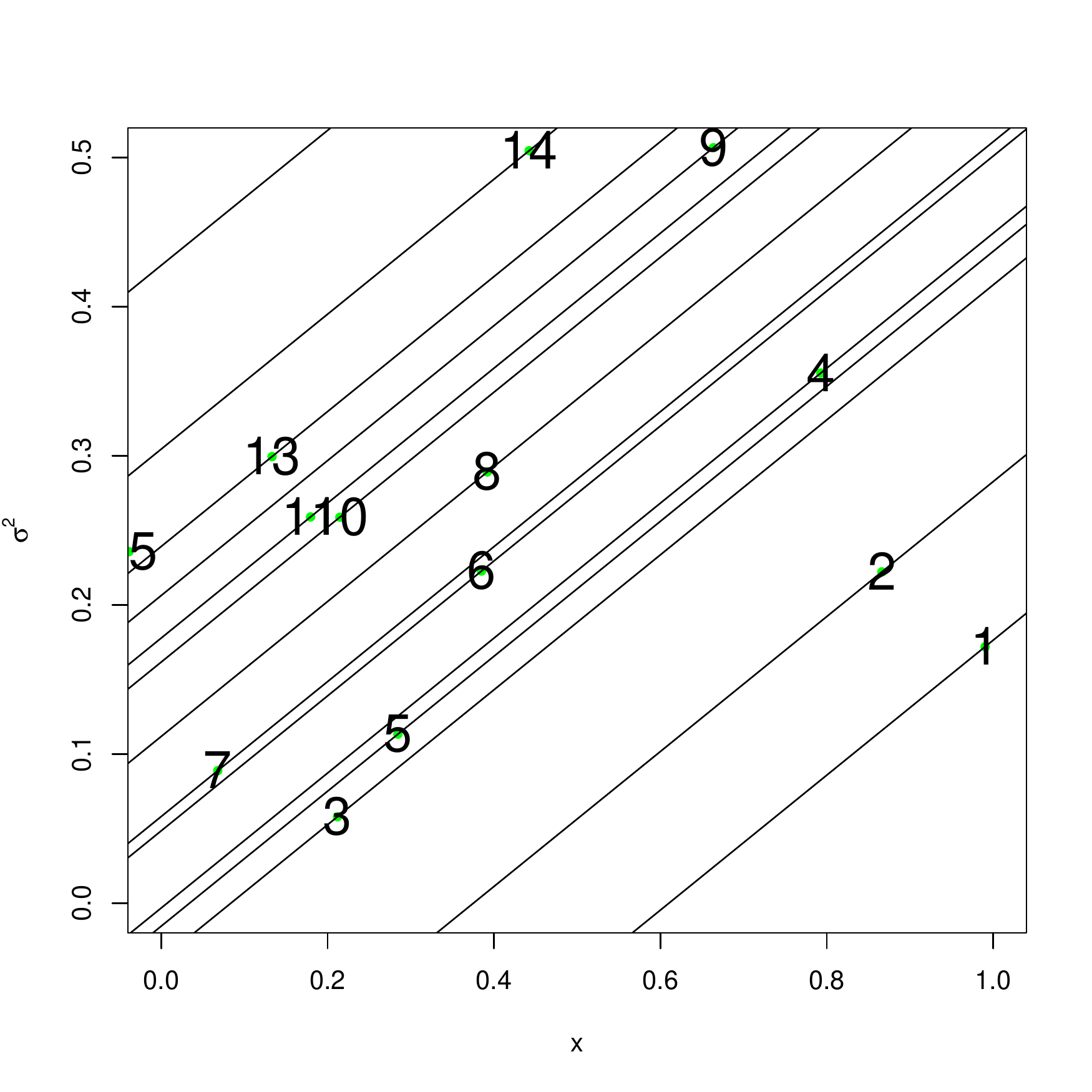}
\end{subfigure}
\begin{subfigure}{0.33\textwidth}\caption{True Prior}
\includegraphics[width=5.5cm]{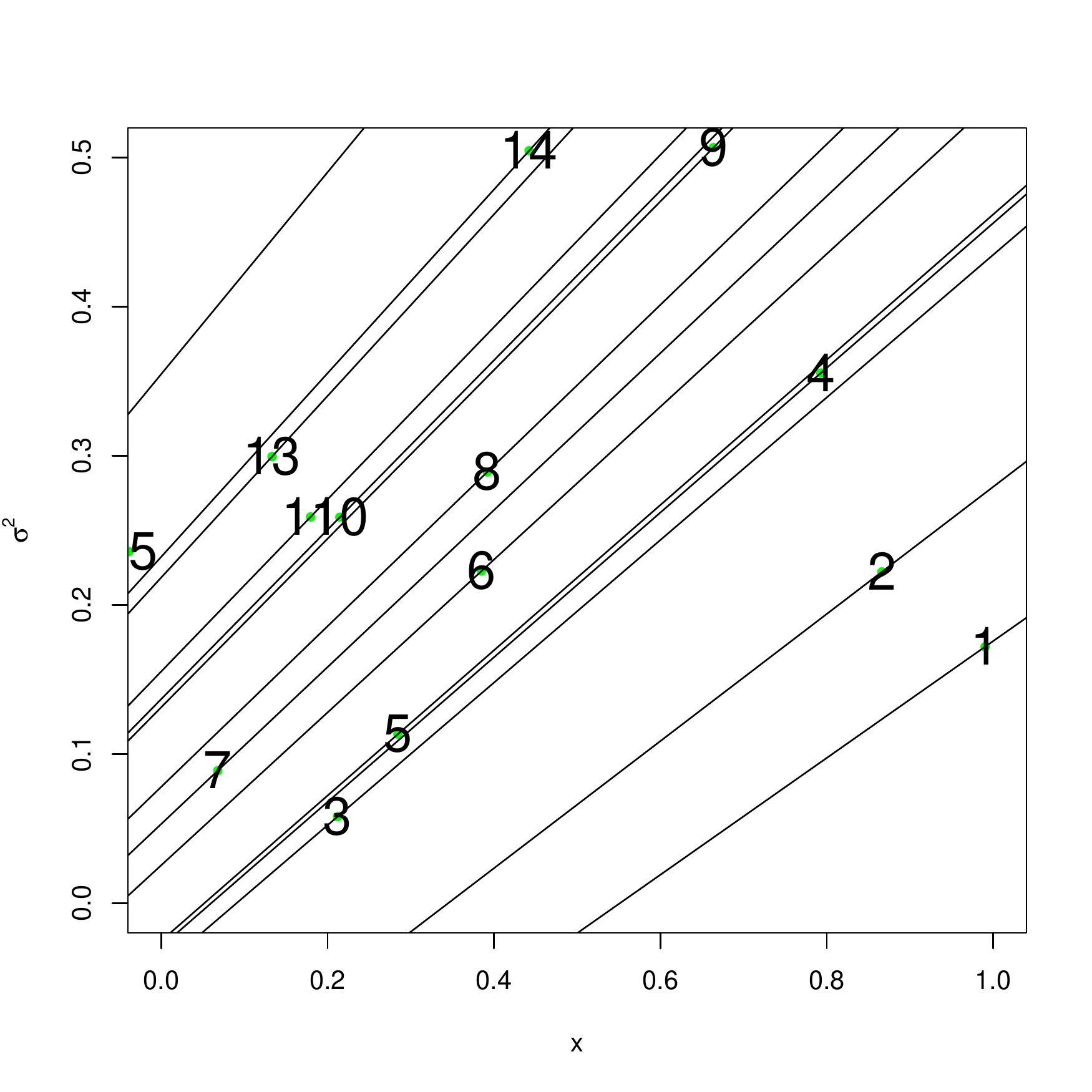}
\end{subfigure}

Isotaxes for the upper tail of simulated data. 500 data points were
simulated with the true means following a normal distribution with
mean $-2.3$ and variance 1. Variances for the observed data points are
simulated following a gamma distribution with shape parameter 2 and
scale parameter 0.1. Plot (a) shows the isotaxes for the
non-parametric MLE estimate for the prior distribution. Plot (b) shows
the isotaxes for an exponential estimating prior. Plot (c) shows the
isotaxes for the true prior. Points are numbered according to their
rank by posterior means under the true prior. Note that some points
are outside the region shown, hence the missing numbers. The isotaxes
shown are the ones passing through observed data points.

\end{figure}

We can see that while the non-parametric approach has the isotaxes in
approximately the right direction for larger variances, they are
somewhat distorted for smaller variances. This is particularly
observable at the tail, because the support of the MLE (which is
discrete by Laird~(1978)) is fairly sparse around the tail. This has a
big effect on the posterior mean estimates for points with small
standard error. However, it is worth noting that this distortion
usually has limited influence on the estimated ranking. The reason for
this is that the distortion is only for small standard error, compared
with the standard error of the data points, so if some of the data
points have small standard error, the isotaxes for posterior mean
ranking based on the MLE prior will be close to the correct isotaxes
except for very small standard error. Meanwhile, if the standard
errors are large, the MLE isotaxes will become further from the
correct isotaxes, but not many of the observed data points will be
included in this region where the isotaxes are far from optimal. The
example given in Figure~\ref{NPsimulationExample} is a typical example
where the non-parametric MLE prior gives a poor ranking. There are
other typical examples where the MLE prior does not give such a poor
ranking.

\section{Simulation}
\label{Simulations}

\subsection{Simulation Design}

We use three simulation distributions for the priors (both the true
priors and the estimating priors): A normal distribution with known
mean 0 and variance $\tau^2$; An exponential distribution with hazard
rate $\lambda$; and a Pareto distribution with density function
$\pi(\theta)=\frac{\alpha\eta^\alpha}{\theta^{\alpha+1}}$ for
$\theta>\eta$ where we take $\eta=\frac{1}{2}$ as known. (We have
taken one parameter as known for the normal and Pareto distributions,
so that each prior has one hyperparameter to be estimated.) For the
true priors in the simulation, we set $\tau=1$ for the normal
distribution, $\lambda=1$ for the exponential distribution and
$\alpha=2$ for the Pareto distribution. For each simulation
distribution, we simulate datasets of size 1000, 10000, and 100000.

We simulate the standard error $\sigma$ for each data set as following
an exponential distribution. We present results for the mean of this
exponential distribution equal to 0.02. Results for mean 0.01, 0.05
and 0.1 are presented in the supplementary materials. The values of
$\sigma$ are independent of the values of $\theta$ and values of
$\sigma$ for different data points are independent. To avoid some
computational issues caused by values of $\sigma$ too close to 0, we
added 0.0001 to all values of $\sigma$. We do not expect this to
significantly impact the results, but we found that some numerical
integration routines produced errors when the value of $\sigma$ was
very close to zero.

For each simulated dataset, we analyse with each of the normal,
exponential and Pareto distributions as the estimating prior. We will
assess the performance of the ranking by the average increase in the
loss function from using the given estimating prior compared to using
the true prior. That is, the loss function is:
\begin{equation}\label{LossFunctionEqn}
L=\sum_{i=1}^{0.1k} (0.1k-i)(\theta_{(i)}-\theta_{[i]})
\end{equation}
where $\theta_{(i)}$ is the true value of $\theta$ for the $i$th
ranked unit under the true prior, and $\theta_{[i]}$ is the true value
of $\theta$ for the $i$th ranked unit under the estimating prior.

\subsection{Theoretical Analysis of Expected Loss for Simulation Distributions}

In order to better understand issues related to parameter estimation,
we examine the loss function for both optimal parameter estimates
(based on minimising the expected loss function) and estimated
parameters (estimated from the upper tail of the data). We do not
compare the effect of estimating the hyperparameters from the whole
data set because two of the distributions used for analysis had
support only on the positive real numbers, so estimating these based
on the whole data including negative values might lead to strange
results. Even if the supports were all the same, estimating the
parameters for the estimating prior based on the whole data set when
the focus is on the ranking of the top units leads to suboptimal
results in ranking.

\begin{table}
\renewcommand{\arraystretch}{2.5}
\caption{Expected Loss functions}\label{LossFunctions}
\begin{tabular}{lll}
\hline
True & Estimate & Loss function\\
\hline
Normal & Normal & $\displaystyle \left(\frac{1}{\tau^2}-\frac{1}{\hat{\tau}^2}\right)^2\left(\frac{a
e^{-\frac{a^2}{\tau^2}} }{4\pi}+\frac{\tau}{4\sqrt{\pi}}\left(1-\Phi\left(\frac{\sqrt{2}a}{\tau}\right)\right)\right)$\\
Normal & Exponential & $\displaystyle \frac{\hat{\lambda}^2}{2\sqrt{\pi}\tau}\left(1-\Phi\left(\frac{\sqrt{2}a}{\tau}\right)\right)
-\frac{\hat{\lambda}e^{-\frac{a^2}{\tau^2}}}{2\pi\tau^2}+\frac{ae^{-\frac{a^2}{\tau^2}}}{4\pi\tau^4}+\frac{1-\Phi\left(\frac{\sqrt{2}a}{\tau}\right)}{4\sqrt{\pi}\tau^3}$ \\
Normal & Pareto & $\displaystyle\frac{1}{2\pi\tau^2}\left((\hat{\alpha}+1)^2\int_a^\infty
\frac{1}{\theta^2}e^{-\frac{\theta^2}{\tau^2}}\,d\theta-\left(2\hat{\alpha}+\frac{3}{2}\right)
\frac{\sqrt{\pi}}{\tau}\left(1-\Phi\left(\frac{\sqrt{2}a}{\tau}\right)\right) +\frac{ae^{-\frac{a^2}{\tau^2}}}{2\tau^2}\right)$\\
Exponential & Normal & $\displaystyle\frac{e^{-2\lambda a}}{\lambda}\left(\frac{\lambda^4}{2}-\frac{2\lambda^3
  a+\lambda^2}{2\hat{\tau}^2}+\frac{2\lambda^2a^2+2\lambda
  a+1}{4\hat{\tau}^4}\right)$ \\
Exponential & Exponential & $\displaystyle \frac{\left(\lambda-\hat{\lambda}\right)^2}{2\lambda}e^{-2\lambda a}$\\
Exponential & Pareto & $\displaystyle\lambda^2\left((\hat{\alpha}+1)^2\int_a^\infty
\frac{1}{\theta^2}e^{-2\lambda\theta}\,d\theta-2\lambda(\hat{\alpha}+1)\int_a^\infty
\frac{1}{\theta}e^{-2\lambda\theta}\,d\theta +\frac{\lambda}{2}
e^{-2\lambda a}\right)$\\
Pareto & Normal
&
$\displaystyle\frac{\alpha^2\eta^{2\alpha}}{a^{2\alpha+3}}\left(\frac{(\alpha+1)^2}{(2\alpha+3)}-\frac{2(\alpha+1)a^2}{(2\alpha+1)\hat{\tau}^2}+\frac{a^4}{(2\alpha-1)\hat{\tau}^4}\right)$\\
Pareto & Exponential &
$\displaystyle\frac{\alpha^2\eta^{2\alpha}}{a^{2\alpha+3}}\left(\frac{(\alpha+1)^2}{(2\alpha+3)}-\hat{\lambda}a+\frac{\hat{\lambda}^2a^2}{(2\alpha+1)}\right)$\\
Pareto &  Pareto & $\displaystyle \frac{\alpha^2\left(\alpha-\hat{\alpha}\right)^2\eta^{2\alpha}}{(2\alpha+3)a^{2\alpha+3}}$\\
\hline
\end{tabular}
\renewcommand{\arraystretch}{1}
\end{table}

We calculate the expected loss function in each case (details in
Appendix~\ref{OptimalParameters}). Table~\ref{LossFunctions} gives the
expected loss function (using Equation~(\ref{approxtotalloss})) as a
function of the true and estimated parameters for each scenario.  The
optimal parameter values for the estimating priors are therefore the
values that minimise these loss functions. Table~\ref{SimExpLoss}
gives the optimal parameter values in all scenarios, and the
corresponding expected additional loss in each scenario from using the
misspecified estimating prior distribution. The final column uses the
point estimate instead of posterior mean ranking.

\begin{table}[htbp]

\caption{Expected loss --- Optimal parameter values (values which
  minimise the expected loss function from Table~\ref{LossFunctions})
  and the resulting values of the expected loss
  function compared with the true prior}\label{SimExpLoss}

\begin{subtable}{0.48\textwidth}
\caption{Optimal Parameter Values}
\begin{tabular}{ll|lll}
\hline
&&\multicolumn{3}{c}{Estimating Prior}\\
\cline{3-5}
&&Normal & Exp. & Pareto \\
&&$\hat{\tau}$ & $\hat{\lambda}$ & $\hat{\alpha}$\\
\hline
\multicolumn{1}{c|}{\multirow{3}{2em}{True Prior}} & Normal ($\tau=1$) & 1 &
1.561 & 1.290\\
\multicolumn{1}{c|}{} & Exp.  ($\lambda=1$) & 1.701 & 1 &  1.677 \\
\multicolumn{1}{c|}{} & Pareto  ($\alpha=2$)& 1.179 & 1.581 & 2 \\
\hline
\end{tabular}
\end{subtable}
\begin{subtable}{0.53\textwidth}
\caption{Expected Loss}
\begin{tabular}{ll|llll}
\hline
&&\multicolumn{3}{c}{Estimating Prior}& Point\\
\cline{3-5}
&&Normal & Exp. & Pareto & Estimate\\
\hline
\multicolumn{1}{c|}{\multirow{3}{2em}{True Prior}} & Normal ($\tau=1$)& 0 &
0.00062 & 0.00208 & 0.0247 \\
\multicolumn{1}{c|}{} & Exp.  ($\lambda=1$)& 0.00015 & 0 & 0.00010 & 0.005 \\
\multicolumn{1}{c|}{} & Pareto  ($\alpha=2$)& 0.00208 & 0.00036 & 0 &
0.0130 \\
\hline
\end{tabular}
\end{subtable}
\end{table}

From Table~\ref{LossFunctions}, We see that the loss functions are
quadratic in the parameters of the estimating prior (or in
$\frac{1}{\tau^2}$ for the normal distribution). This means that the
sensitivity of the loss function to misestimation of the parameter
values is roughly proportional to the mean squared difference between
the parameter estimate and the optimal value. We calculate the
constants of proportionality for our particular choices of parameter
values in Table~\ref{MisestimateLoss}. This gives a measure of the
sensitivity of the loss function to errors in parameter estimation.

\begin{table}[htbp]

\caption{Misestimation Loss --- The loss functions are all quadratic
  in the estimated parameter (or $\frac{1}{\tau^2}$ for the
  normal). This table gives the second derivative of the loss function
  --- it gives an indication of the relative cost of misestimating the
  parameter values. Optimal parameter estimates are given in
  Table~\ref{SimExpLoss}(a). If the optimal parameter value is
  $\theta$ and the value used is $\theta'$, then the additional loss
  is $a(\theta-\theta')^2$, where $a$ is the number in this table. For
  the normal estimating prior, the parameter to be estimated is
  $\frac{1}{\tau^2}$, rather than $\tau$. For the exponential it is
  the rate $\lambda$. For the Pareto, it is the index $\alpha$. This
  table gives a measure of the sensitivity of each estimating prior to
  misestimation of the parameter.  }\label{MisestimateLoss}

\hfil\begin{tabular}{ll|lll}
\hline
&&\multicolumn{3}{c}{Estimating Prior}\\
\cline{3-5}
&&Normal & Exp. & Pareto \\
\hline
\multicolumn{1}{c|}{\multirow{3}{2em}{True Prior}} 
                      & Normal      & 0.04933429 & 0.009862926 & 0.004307 \\
\multicolumn{1}{c|}{} & Exponential & 0.04052242 & 0.005 & 0.0006835 \\
\multicolumn{1}{c|}{} & Pareto      & 0.02108185 & 0.005059644 & 0.001445613 \\
\hline
\end{tabular}

\end{table}

As we see in Table~\ref{MisestimateLoss}, the normal estimating prior
is most sensitive to parameter estimation. This makes sense, since the
variance of the normal distribution has a very significant impact on
the slopes of the isotaxes in the tail of the distribution. The
exponential estimating prior is less sensitive to misestimation of
parameter values, and the Pareto estimating prior is least sensitive to
parameter estimates. This is because in the tail of the distribution,
the isotaxes for the Pareto estimating prior become very steep, regardless
of the parameter estimates. This indicates an advantage of using a
heavy-tailed estimating prior, particularly for small sample sizes,
where our parameter estimates have higher MSE. Even for large sample
sizes, the parameter estimates are likely to be different from the
optimal values, because we typically estimate parameters by a method
such as MLE, based on the observed data.  We were only able to
optimise the loss function for the simulations where we knew the true
prior distribution, but in a real situation we would not know the true
prior. The parameters estimated by MLE are not optimal for posterior
mean ranking.

We now look at the question of parameter estimation. Because we are
interested in fitting the tail of the distribution well, we truncate
the distribution at the 90th percentile (for the simulations, we used
the 90th percentile of the true prior), and estimate the parameters by
maximum likelihood for the truncated distribution. Details of the MLE
estimates, with derivation, are in Appendix~\ref{AppMLEestderivation}
We compare the theoretically best values and the expected MLE
estimates in Table~\ref{TheoEstPar}. (The Pareto distribution used for
simulation has infinite variance, so the MLE estimate for the normal
variance does not converge to a constant as sample size increases.)
Some of the MLE estimates used here are approximate, so may not
exactly reflect parameter values; empirical mean parameter estimates
are in Table~\ref{SimulationParEsts}. We see that the expected MLE
estimates are in many cases quite far from the optimal values (and the
empirical mean for the simulations are also far from optimal). As a
consequence, we expect using MLE to estimate hyperparameter values to
lead to substantially worse ranking than using the optimal values.

\begin{table}[htbp]

\caption{Parameter values. Left: optimal parameter values (repeated
  from Table~\ref{SimExpLoss}(a)) that
  minimise expected loss over top 10\% of data. Right: expected MLE
  estimates for parameter values estimated from truncated
  data.}\label{TheoEstPar}

\begin{subtable}{0.49\textwidth}
\caption{Optimal Parameter Values}
\begin{tabular}{ll|lll}
\hline
&&\multicolumn{3}{c}{Estimating Prior}\\
\cline{3-5}
&&Normal & Exp. & Pareto \\
\hline
\multicolumn{1}{c|}{\multirow{3}{2em}{True Prior}} & Normal & 1 & 1.5614 & 1.2898\\
\multicolumn{1}{c|}{} & Exponential & 1.7005 & 1 &  1.6772 \\
\multicolumn{1}{c|}{} & Pareto & 1.1785 & 1.5811 & 2 \\
\hline
\end{tabular}
\end{subtable}
\begin{subtable}{0.49\textwidth}
\caption{Expected Parameter Estimates}\label{ExpParEsts}
\begin{tabular}{ll|lll}
\hline
&&\multicolumn{3}{c}{Estimating Prior}\\
\cline{3-5}
&&Normal & Exp. & Pareto \\
\hline
\multicolumn{1}{c|}{\multirow{3}{2em}{True Prior}} & Normal & 1 & 2.1122
 & 3.47 \\
\multicolumn{1}{c|}{} & Exponential & 2.5701 & 1 & 3.15  \\
\multicolumn{1}{c|}{} & Pareto & NA  & 0.6325 & 2 \\
\hline
\end{tabular}
\end{subtable}
\end{table}

\subsection{Simulation Results}

The results of the simulation are shown in
Table~\ref{SimResults002}. This table gives the average of loss
function from Equation~(\ref{LossFunctionEqn}) over the simulated
datasets, for each scenario. As expected, with optimal parameter
estimates, using the normal estimating prior when the true prior is
heavy-tailed causes a bigger loss, relative to the difficulty of the
problem (measured as the loss arising from using a point estimate),
than using a heavy-tailed estimating prior when the true prior is
normal --- when the true prior is normal, the problem is much more
difficult (the increase in loss from using the point estimate is
larger), but the increase in loss from using the Pareto estimating
prior is about the same as the increase when using a normal estimating
prior in the easier case where the true prior follows a Pareto
distribution. When we use estimated hyperparameter values, the loss
from using the Pareto estimating prior when the true prior is normal
is larger than using a normal estimating prior when the true prior is
Pareto, even taken relative to the loss from using a point
estimate. This is explained by the fact that the MLE estimate for the
Pareto parameter is further from the optimal value than the MLE
estimate of $\frac{1}{\tau^2}$ is from it's optimal value. Using an
exponential estimating prior does not perform too badly in any of the
cases. All methods perform much better than the use of the point
estimates. These results show a similar result to the theoretically
estimated values in Table~\ref{SimExpLoss}(b) with many values
approximately proportional to that table. The error in the case when
the estimating prior is normal and the true prior is heavy-tailed, is
theoretically bounded because the Pareto distribution has
$\alpha>0.5$, but results are still poor.

\begin{table}[htbp]

\caption{Simulation Results: average over simulated data sets of loss
  function (from Equation~\ref{LossFunctionEqn}). Left tables use
  optimal parameter values. Right tables use MLE estimated parameter
  values (for data truncated at the true 90th percentile). Top row is
  for sample size 1000 (1000 datasets), middle row sample size 10000
  (100 datasets), bottom row 100000 (10 datasets). Mean of $\sigma$ is
  0.02. }\label{SimResults002}

\begin{subtable}{0.48\textwidth}
\caption{Theoretical, sample size 1000}
\begin{tabular}{ll|lllr}
\hline
&&\multicolumn{3}{c}{Estimating Prior}&Point \\
\cline{3-5}
&&Normal & Exp. & Pareto & Estimate \\
\hline
\multicolumn{1}{c|}{\multirow{3}{2em}{True Prior}} & Normal & 0     &
0.003 & 0.008 & 0.038 \\
\multicolumn{1}{c|}{} & Exp. & 0 &     0 &     0 &  0.009 \\
\multicolumn{1}{c|}{} & Pareto & 0.003 &     0 &     0 & 0.022 \\
\hline
\end{tabular}
\end{subtable}
\begin{subtable}{0.48\textwidth}
\caption{Estimated, sample size 1000}
\begin{tabular}{ll|rrrr}
\hline
&&\multicolumn{3}{c}{Estimating Prior}&Point \\
\cline{3-5}
&&Normal & Exp. & Pareto & Estimate \\
\hline
\multicolumn{1}{c|}{\multirow{3}{2em}{True Prior}} & Normal & 0.000     & 0.008 & 0.039 & 0.038 \\
\multicolumn{1}{c|}{} & Exp. & 0.001 & 0.000 & 0.003 & 0.009 \\
\multicolumn{1}{c|}{} & Pareto & 0.017 & 0.008 & 0.000 & 0.022 \\
\hline
\end{tabular}
\end{subtable}
\begin{subtable}{0.48\textwidth}
\caption{Theoretical, sample size 10000}
\begin{tabular}{ll|lllr}
\hline
&&\multicolumn{3}{c}{Estimating Prior}&Point \\
\cline{3-5}
&&Normal & Exp. & Pareto & Estimate \\
\hline
\multicolumn{1}{c|}{\multirow{3}{2em}{True Prior}} & Normal &    0 & 0.12 & 0.35 & 3.82 \\
\multicolumn{1}{c|}{} & Exp. & 0.01 &    0 & 0.02 & 0.75 \\
\multicolumn{1}{c|}{} & Pareto & 0.31 & 0.05 &    0 & 2.03 \\
\hline
\end{tabular}
\end{subtable}
\begin{subtable}{0.48\textwidth}
\caption{Estimated, sample size 10000}
\begin{tabular}{ll|rrrr}
\hline
&&\multicolumn{3}{c}{Estimating Prior}& Point \\
\cline{3-5}
&&Normal & Exp. & Pareto & Estimate \\
\hline
\multicolumn{1}{c|}{\multirow{3}{2em}{True Prior}} & Normal & 0.00 &
\phantom{1}0.61 & 3.57 & 3.82 \\
\multicolumn{1}{c|}{} & Exp. & 0.07 &    0.00 & 0.25 & 0.75 \\
\multicolumn{1}{c|}{} & Pareto & 1.74 & 0.73 & 0.00 & 2.03
\\
\hline
\end{tabular}
\end{subtable}
\begin{subtable}{0.52\textwidth}
\caption{Theoretical, sample size 100000}
\begin{tabular}{ll|rrrr}
\hline
&&\multicolumn{3}{c}{Estimating Prior}&Point \\
\cline{3-5}
&&Normal & Exp. & Pareto & Estimate\\
\hline
\multicolumn{1}{c|}{\multirow{3}{2em}{True Prior}} & Normal &  0\phantom{.0} &
10.1 & \phantom{1}32.7 & 385.1 \\
\multicolumn{1}{c|}{} & Exp. &  2.0 &    0\phantom{.0} &    2.0 & 76.5 \\
\multicolumn{1}{c|}{} & Pareto & 33.0 &  5.9 &    0\phantom{.0} & 209.0 \\
\hline
\end{tabular}
\end{subtable}
\begin{subtable}{0.48\textwidth}
\caption{Estimated, sample size 100000}
\begin{tabular}{ll|rrrr}
\hline
&&\multicolumn{3}{c}{Estimating Prior}&Point \\
\cline{3-5}
&&Normal & Exp. & Pareto & Estimate \\
\hline
\multicolumn{1}{c|}{\multirow{3}{2em}{True Prior}} & Normal &   0.0 &
\phantom{1}59.5 & 353.6 & 385.1 \\
\multicolumn{1}{c|}{} & Exp. & 8.2 &  0.1 &  27.1 & 76.5 \\
\multicolumn{1}{c|}{} & Pareto & 187.0 & 77.8 &  0.1 & 209.0
 \\
\hline
\end{tabular}
\end{subtable}

\end{table}

Table~\ref{SimulationParEsts} gives the mean parameter estimates in
cases where we used MLE to estimate parameter values. We see that
these are mostly as predicted in Table~\ref{ExpParEsts}(b). The main
difference is when we use a normal estimating prior generated under an
exponential true prior. Here the estimated value is much closer to the
optimal value. This is because the approximation we used in deriving
the expected MLE estimate is not very accurate. This explains why the
normal estimating prior with estimated parameter values did not
perform so poorly in this scenario. We know that the ranking based on
a normal estimating prior is most sensitive to parameter
estimates. However, because MLE provides a fairly good estimate in
this case, the loss from using an estimated value is not so great. For
the exponential and Pareto estimating priors, the MLE does not provide
good parameter estimates for the purpose of ranking. Because the
ranking loss in these cases is less sensitive to estimation errors in
the hyperparameters, the resulting losses are not excessive. However,
this indicates there is great scope for improving results by devising
better parameter estimation techniques. It is also worth noting that
these hyperparameters were estimated to fit the tail well, rather than
the whole dataset. More common practice is to estimate the
hyperparameters based on the whole data. We would expect this to
result in much worse ranking results, particularly for the normal
estimating prior where the loss is particularly sensitive to the parameter
estimates.

\begin{table}[htbp]
\caption{parameter estimates}\label{SimulationParEsts}

\hfil\begin{tabular}{lllllll}
\hline
True Prior & Sample size & Normal & Exponential &
Pareto \\
\hline
\hline
\multirow{3}{4em}{Normal}
 & 1000   & 0.999(0.0883) & 2.135(0.1902) & 3.479(0.2631) \\
 & 10000  & 1.000(0.0186) & 2.120(0.0586) & 3.462(0.0811) \\
 & 100000 & 1.000(0.0061) & 2.119(0.0192) & 3.460(0.0265) \\
\cline{2-5}
\multirow{3}{4em}{Exponential}
 & 1000   & 2.051(0.1539) & 1.012(0.1031) & 3.113(0.2538) \\
 & 10000  & 2.055(0.0485) & 1.002(0.0316) & 3.094(0.0780) \\
 & 100000 & 2.055(0.0150) & 1.001(0.0097) & 3.093(0.0236) \\
\cline{2-5}
\multirow{3}{4em}{Pareto}
 & 1000   & 36.081(232.50) & 0.674(0.1494) & 2.024(0.2059) \\
 & 10000  & 80.509(445.18) & 0.639(0.0589) & 2.005(0.0640) \\
 & 100000 & 99.292(487.59) & 0.634(0.0225) & 2.003(0.0200) \\
\hline
\end{tabular}
\end{table}

\section{Real Data Analysis}\label{SectionRealData}

\subsection{Type 2 Diabetes}

We look at several real data sets. These datasets were studied by
Henderson \& Newton (2015) for their work on $r$-values.  The first
data set consists of GWAS data for log odds ratio between SNPs and
type-2 diabetes from Morris {\it et al.}~(2006). The data are
available from \url{http://diagram-consortium.org/downloads.html}. The
data consists of 137,899 SNPs from 12,171 type 2 diabetes cases and
56,862 controls. For each SNP, an odds ratio is available along with a
95\% confidence interval. Following Henderson \& Newton (2015), we
have taken the value as the log-odds ratio, assuming this estimate
follows a normal distribution, and that the standard deviation of this
distribution is one-quarter of the width of the log of the 95\%
confidence interval provided. The resulting positive data points and
isotaxes for a normal and exponential estimating prior are shown in
Figure~\ref{DiabetesIsotaxes}.

\begin{figure}[htbp]

\begin{subfigure}{0.35\textwidth}
\caption{Exponential Estimating Prior}
\includegraphics[width=5.6cm]{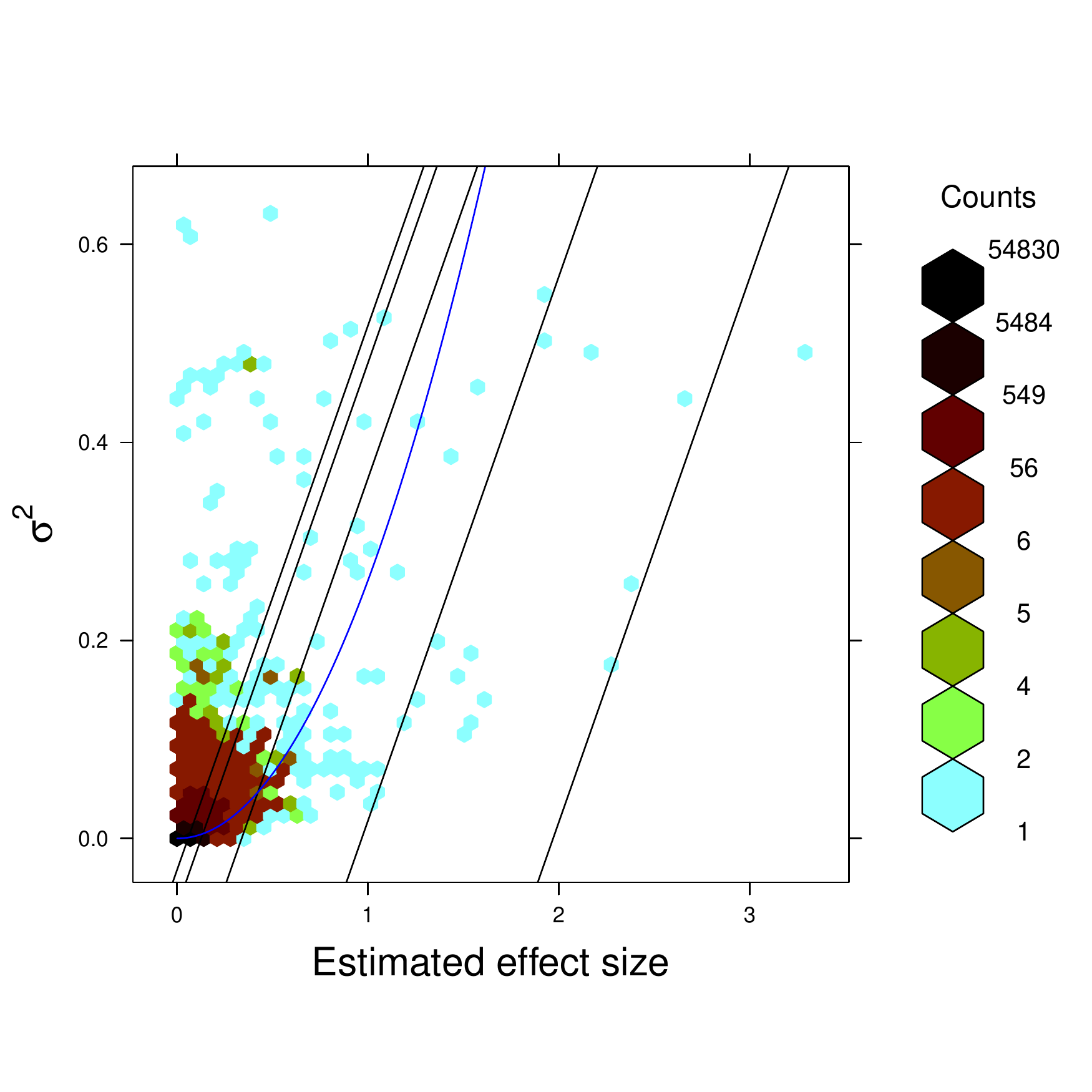} 
\end{subfigure}
\begin{subfigure}{0.35\textwidth}
\caption{Normal Estimating Prior, $\tau^2=0.001811156$ (estimated from full data
  by method of moments)}
\includegraphics[width=5.6cm]{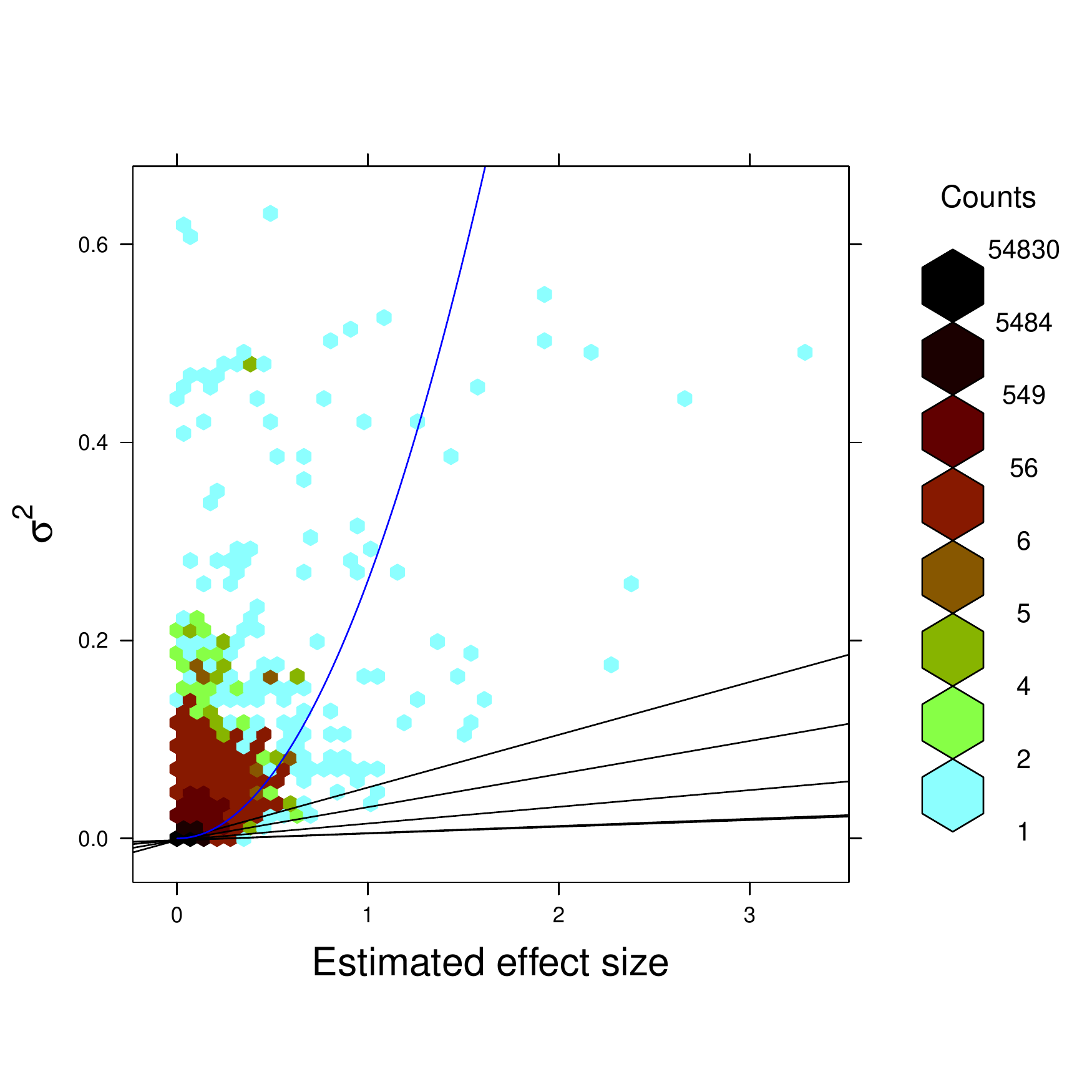}
\end{subfigure}
\begin{subfigure}{0.35\textwidth}
\caption{Normal Estimating Prior, $\tau^2=0.08$}\label{DiabetesIsotaxes_c}
\includegraphics[width=5.6cm]{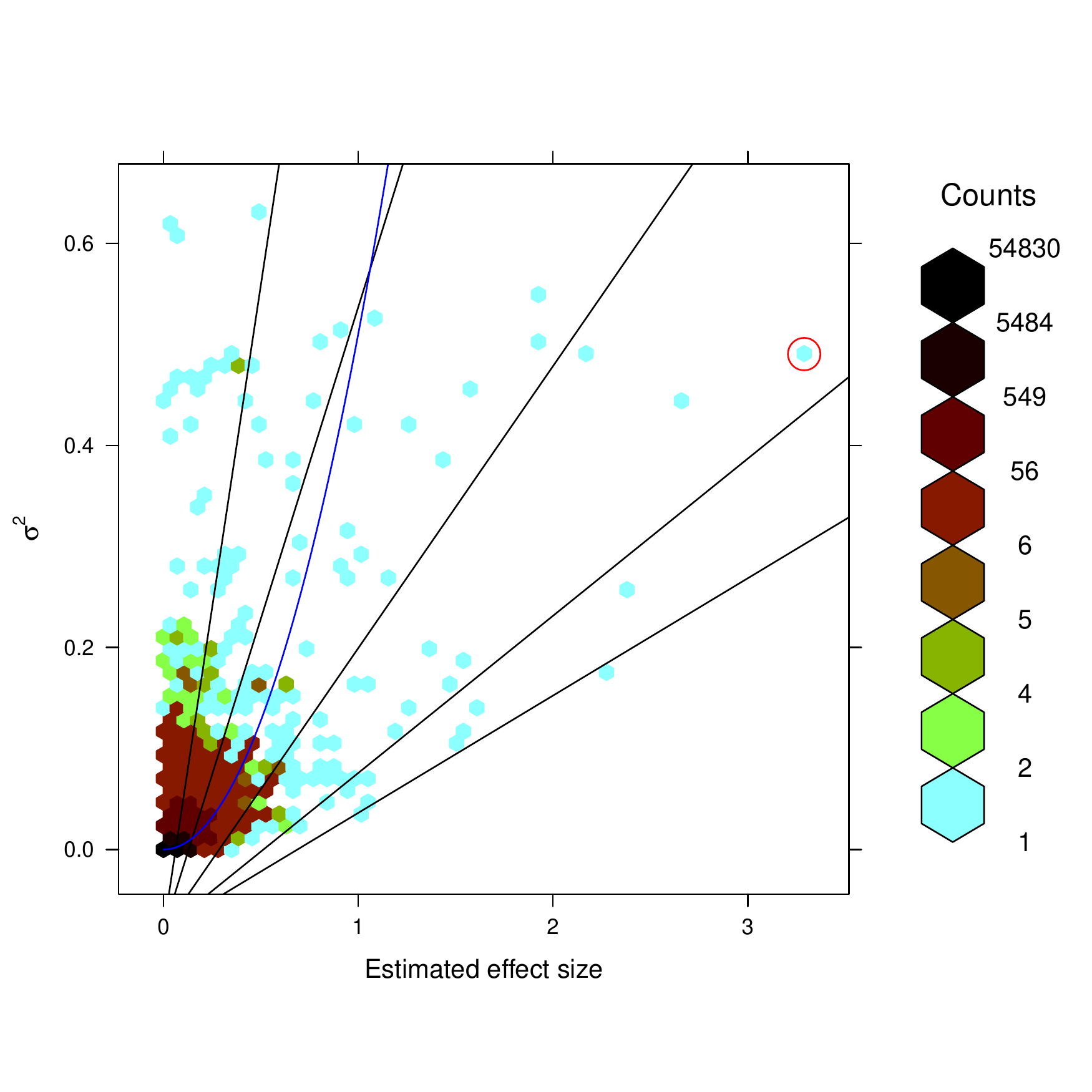}
\end{subfigure}

\caption{Isotaxes for exponential and normal estimating prior
  distributions for type-2 diabetes data. Isotaxes shown are for top
  5\%, top 1\%, top 0.1\%, top 0.01\% and top 0.001\%. The blue curve
  represents the 95\% significance level against the null hypothesis
  --- that is, points to the left of the blue curve do not have
  estimated effect size significantly different from
  zero.}\label{DiabetesIsotaxes}

\end{figure}

From this figure, we see that using a normal estimating prior with
naively estimated variance, the estimated variance is small, leading
to isotaxes passing close to the origin. This makes the ranking focus
on values with small variance, and rank values with larger observed
value and larger variance behind values with smaller variance. The
exponential estimating prior provides a ranking that selects many more
of the points with large estimated effect size. We can improve the
performance of a normal prior by artificially inflating the variance
to match the tail better. Figure~\ref{DiabetesIsotaxes_c} shows the
effect of this. This does select a lot more of the points with large
effect sizes, and indeed the top 1\% isotaxis is very similar to the
1\% isotaxis for the exponential estimating prior. On the other hand,
the higher isotaxes do put too much weight on having smaller standard
error, ranking a number of points with smaller estimated effect size
ahead of the point with largest estimated effect size (circled in
red). It is extremely implausible that this ranking is
correct. Overall, the ranking based on an exponential estimating prior
appears more plausible to us for this dataset.

\subsection{Breast Cancer}

Next we look at the gene expression data relating to breast cancer
from West {\it et al.} (2001). This dataset is available in the {\tt
  rvalues} package. The data set consists of gene expression
measurements of 7129 genes accross 49 breast tumour samples --- 25
oestrogen receptor (ER)+ samples and 24 ER$-$ samples. For each gene,
the difference in means between the ER$+$ and ER$-$ groups is
calculated, along with its appropriate standard error. In theory the
error distribution should be modelled as a $t$-distribution with 47
degrees of freedom. However, for the purpose of this paper, we have
used a normal distribution. The loss of accuracy should be fairly
small. The resulting plot of variance against estimated effect size,
and isotaxes for posterior mean with an exponential and a normal
estimating prior are shown in Figure~\ref{BCIsotaxes}.

\begin{figure}[htbp]

\begin{subfigure}{0.35\textwidth}
\caption{Exponential Estimating Prior\\\vphantom{Ey}}
\includegraphics[width=5.6cm]{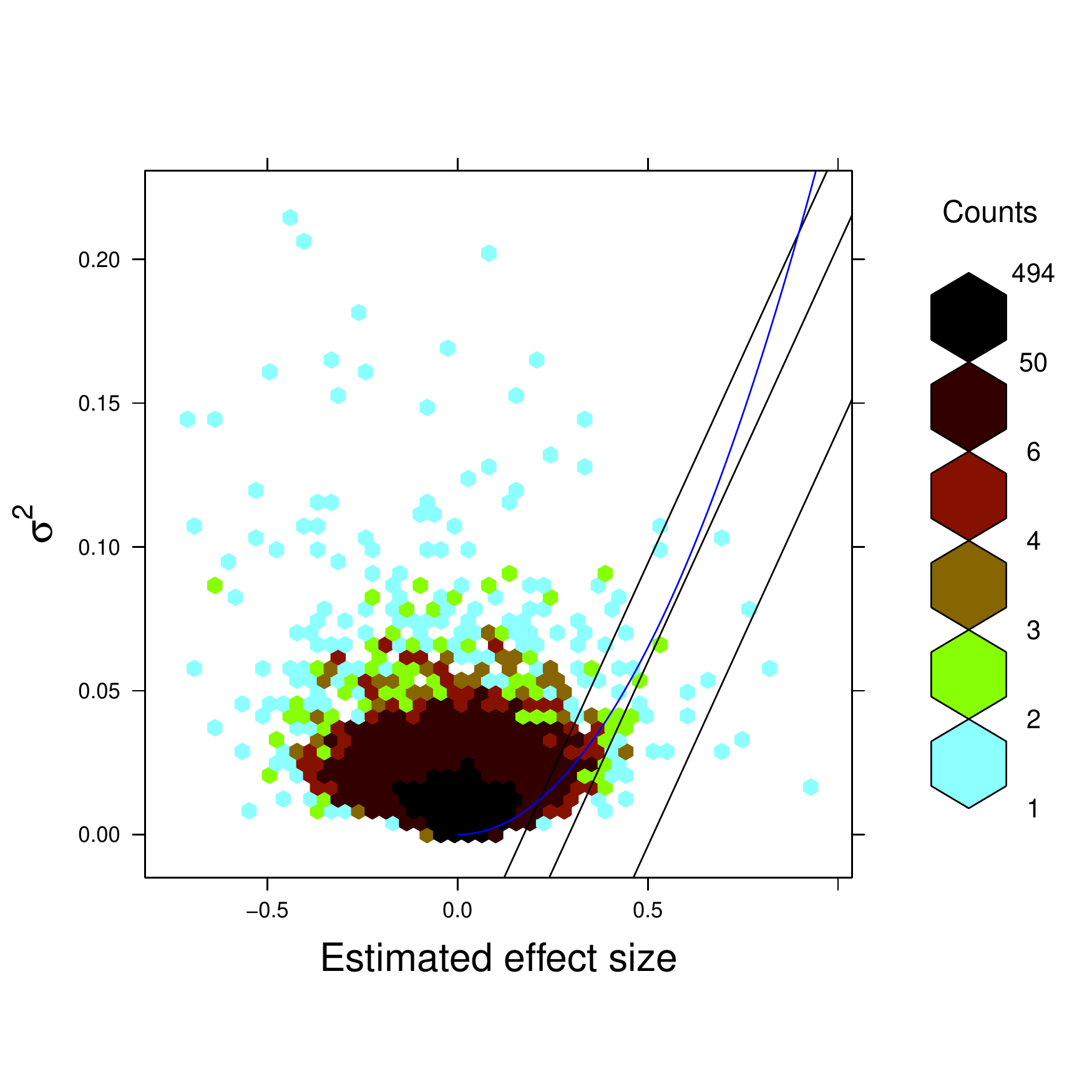}
\end{subfigure}
\begin{subfigure}{0.35\textwidth}
\caption{Normal Estimating Prior, $\tau^2=0.002478642$, estimated from data by
  method of moments}
\includegraphics[width=5.6cm]{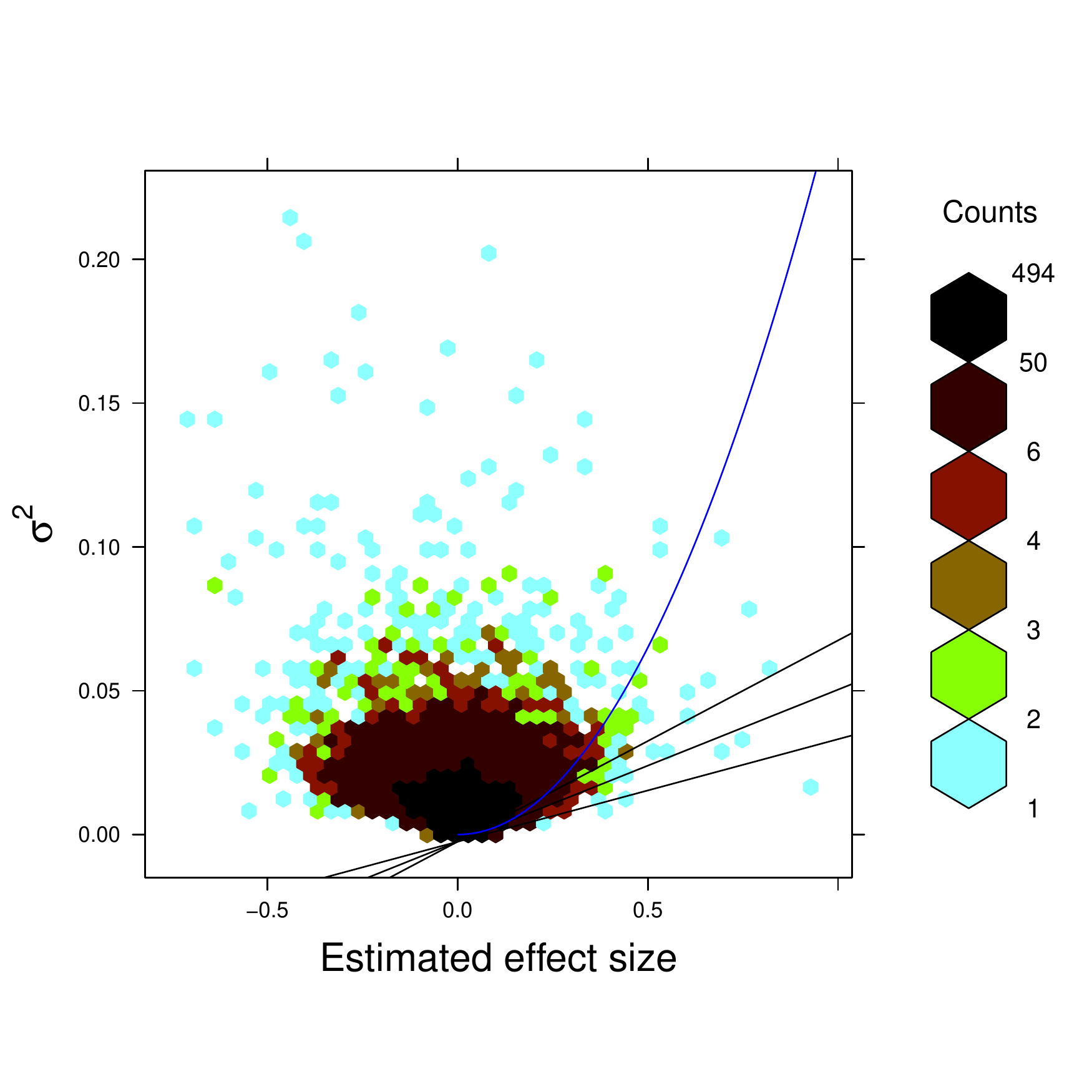}
\end{subfigure}
\begin{subfigure}{0.35\textwidth}
\caption{Normal Estimating Prior, $\tau^2=0.08$\\\vphantom{Ey}}
\includegraphics[width=5.6cm]{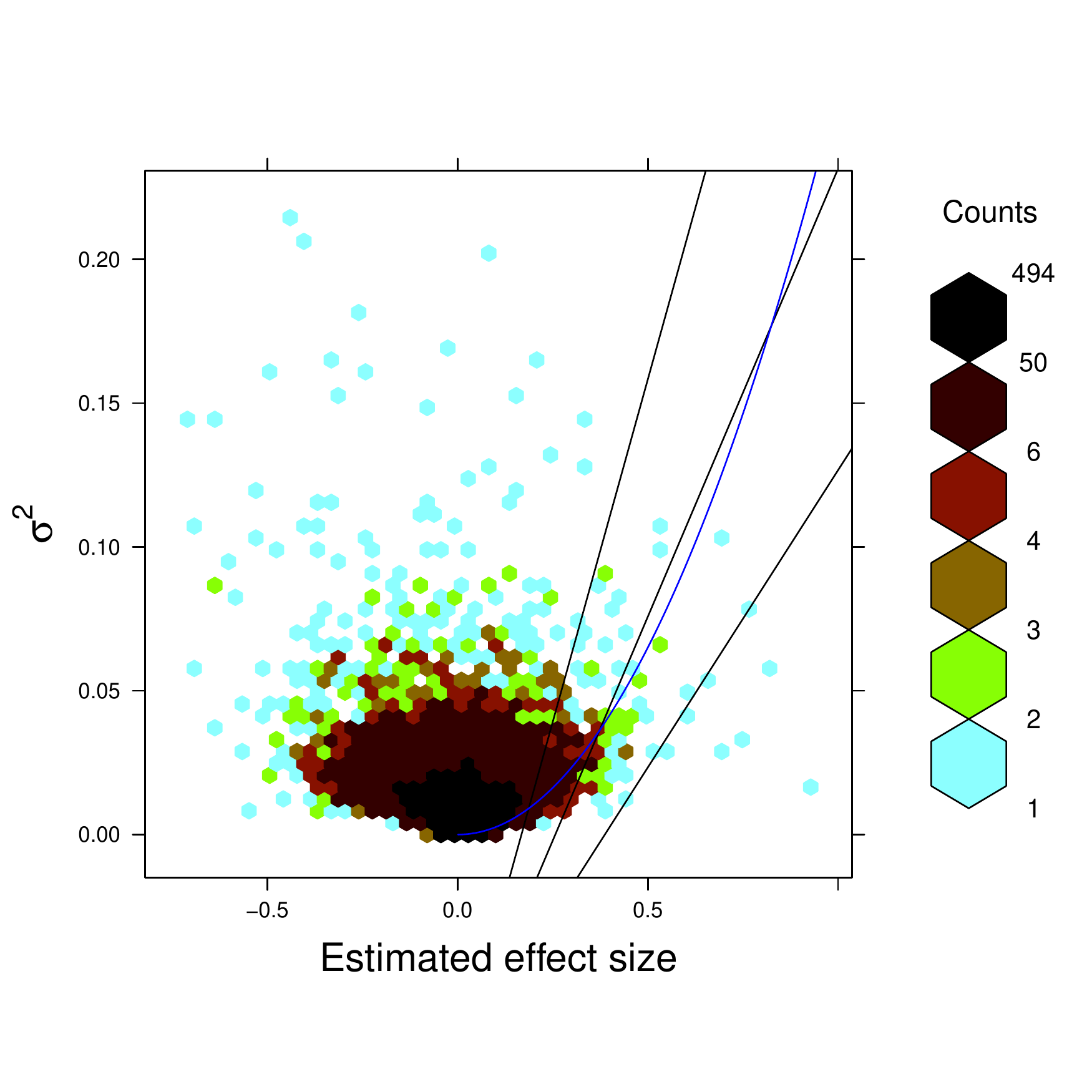}
\end{subfigure}

\caption{Isotaxes for exponential and normal estimating prior
  distributions for breast cancer data. Isotaxes shown are for top
  5\%, top 1\%, and top 0.1\%. The blue curve represents the 95\%
  significance level against the null hypothesis --- that is, points
  to the left of the blue curve do not have estimated effect size
  significantly more than zero using a two-sided
  test.}\label{BCIsotaxes}

\end{figure}

As for the diabetes data, we see that the normal estimating prior
results in very flat isotaxes, and therefore gives a high ranking to
observations with small variance. Meanwhile, the exponential
estimating prior puts a lot more weight on points with large estimated
effect size. Again, we see that using a normal estimating prior with
inflated variance results in isotaxes that are more similar to the
exponential estimating prior. In this case, the differences between
the two rankings are not so clear-cut as the previous case, where some
of the rankings using a normal estimating prior with increased
variance were completely implausible. In this case, both the rankings
for the exponential estimating prior and the normal estimating prior
with inflated variance seem reasonable.

\section{Conclusions and Future Work}\label{SectionConclusions}

We have seen that choice of estimating prior can have a very large
effect on Bayesian ranking methods. For the majority of ranking
problems, we are particularly interested in ranking at the upper tail
of the distribution. The ranking of the upper tail can be particularly
affected by choice of estimating prior.

Using a light-tailed estimating prior for posterior mean ranking can
lead to very bad results. If the true prior is heavy-tailed, the
posterior mean can be far away from the truth. Conversely, if the
estimating prior is too heavy-tailed, the posterior mean estimated
will be between the posterior mean under the true prior and the point
estimate. This cannot be too far from the true posterior mean. This
means that using an exponential or heavier-tailed distribution as the
estimating prior should be more robust to model misspecification.

In addition to being less robust to model misspecification,
light-tailed estimating priors can be more sensitive to estimated
parameter values. In cases where the estimating prior is misspecified,
using MLE estimates for hyperparameters can also be far from optimal,
so this can lead to bad results even in cases with large
datasets. Since we are usually particularly interested in the top
units, it is usually advisable to choose parameter values that fit the
tail of the distribution well.

Using a non-parametric prior is robust, in that there is an upper
bound on how far the posterior mean can be from the posterior mean
under the true prior. However, using a non-parametric prior can be
inefficient for smaller sample sizes, and can lead to some strange
rankings.

We have confirmed our results by simulation studies and real data
examples. In the simulation study, we found that an exponential
estimating prior performed relatively well regardless of the true
prior. Our simulation study also studied the effect of estimating
hyperparameters on the performance. As expected, estimating
hyperparameters does cause some loss. The estimation in this
simulation was done by maximum likelihood. However, since the loss
function we are aiming to minimise is not the standard squared error
loss, this is not the optimal estimation method. We based our
parameter estimation on the upper 10\% of the data points. It is
common for analyses which use the whole data to estimate
hyperparameters. Doing this could lead to far worse results when the
estimating prior is misspecified.

\iffalse
Even in the case where the true prior is not heavy-tailed, uncertainty
in the hyperparameters can lead to a heavy-tailed prior when we apply
model averaging over different possible values for the
hyperparameters. We saw this effect in our simulation, when we looked
at the results with estimated hyperparameters. Particularly in the
case where the prior distribution is not the true prior, estimation of
hyperparameters can have a big effect on the estimated ranking.
\fi

Overall, unless there is good evidence otherwise, we suggest an
exponential estimating prior will be a good compromise between robustness and
efficiency in most cases. It also offers easy computation of posterior
mean.

\subsection{Future Work}

The most obvious direction to need improvement in this research is
hyperparameter estimation. We have seen in our simulation study that
estimation by MLE can lead to bad ranking results. This is because the
loss function from prior misspecification is different from the loss
function that MLE estimation aims to minimise. This suggests that a
different method of estimating the hyperparameters is needed --- a
method specifically targeted at optimising ranking estimates. We know
the loss function that we are aiming to minimise, so it should be
possible to find an explicit way to solve this and derive a procedure
for estimating the hyperparameter. Given our recommendation to use an
exponential estimating prior in most cases, finding the best
hyperparameters should not prove too challenging a problem.

Our study has a number of limitations. We have considered only cases
where ranking is by posterior mean and the error distribution is
normal. In future work, we should study the problem for different
error distributions, not just normal. Further estimation is also
needed into cases where the variance of the error distribution depends
on the parameter $\theta$. This can allow certain approximations to be
applied. For example a Poisson distribution can be approximated by a
normal distribution where the variance depends on the mean. We should
also study the problem for different methods and objective functions,
e.g. $r$-values, posterior expected rank.

We also have not considered the effect of model selection on
ranking. If the estimating prior distribution is chosen based on
certain model selection criteria, this may improve the
ranking. However, model selection for mixture models can be difficult,
so it might not provide the improvements we hope for. Model selection
also depends upon a good set of candidate models. Our research
suggests that the form of the function
$\lambda(x)=-\frac{\pi'(x)}{\pi(x)}$ is most crucial in our choice of
estimating prior, so including a sufficient range of models to allow
flexibility in this function should allow us to obtain good ranking
results, provided the model selection criteria are well chosen to be
related to our objective function.

%We could also extend the work to develop better guidelines on choices for
%prior when there is some suggestion about a certain form of prior, but
%not conclusive evidence.

\appendix

\section{Loss Function Calculation}
\label{AppProofOfLossFunction}

\begin{proposition}
Suppose the true prior distribution of the parameter $\theta$ has
density function $\pi(\theta)$, and that we have two observations
$x_1$ and $x_2$ which are normally distributed with means $\theta_1$
and $\theta_2$ and standard deviations $\sigma_1$ and $\sigma_2$
respectively, where $\theta_1$ and $\theta_2$ are random samples from
the true prior distribution, and $\sigma_1$ and $\sigma_2$ are assumed
to be small.

\begin{enumerate}
\renewcommand\labelenumi{(\roman{enumi})}
\renewcommand\theenumi\labelenumi

\item The expected loss when the estimating prior and the true prior
  are the same (which we will refer to as the {\em optimal expected
    loss}) is approximately given by
$$\frac{\sigma_1{}^2+\sigma_2{}^2}{2}{\mathbb
  E}(\pi(x))$$

\item When the estimating prior has density $\hat{\pi}$, the
  difference between the expected loss and the optimal expected loss
  is approximately given by
\begin{equation*}\frac{1}{2}(\sigma_1{}^2-\sigma_2{}^2)^2\int_{a}^\infty
  \pi(x)^2(\lambda(x)-\hat{\lambda}(x))^2\,dx\end{equation*} %\tag{\ref{approxtotalloss}}
where $\hat{\lambda}(x)=-\frac{\hat{\pi}'(x)}{\hat{\pi}(x)}$.

\item The difference between the expected loss from using the point
  estimate $x_i$ and the optimal expected loss is approximately given by

$$\frac{1}{2}\left(\sigma_1{}^2-\sigma_2{}^2\right)^2\int_a^\infty \pi'(x)^2\,dx$$

\end{enumerate}
\end{proposition}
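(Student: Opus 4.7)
The plan is to reduce the proposition to a pairwise analysis, as motivated in the body preceding the statement: the total loss is the sum over misranked pairs of $|\theta_1 - \theta_2|$, and the cutoff $a$ corresponds, at leading order in small $\sigma$, to restricting to pairs whose ``location'' $x = (x_1+x_2)/2$ lies in $(a,\infty)$. It therefore suffices to evaluate, for a single pair with observations $(x_1,x_2)$, the expected misranking contribution $E[|\theta_1-\theta_2|\,\mathbf{1}\{\text{misranked}\}]$ under each decision rule.

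For the conditional analysis I would use the small-$\sigma$ expansion from Section~\ref{TheorySection}: given $x_i$, the posterior of $\theta_i$ is approximately $N(m_i,\sigma_i^2)$ with $m_i = x_i - \lambda(x_i)\sigma_i^2$, so $\theta_1-\theta_2 \mid x_1,x_2$ is approximately $N(M,s^2)$ with $M=m_1-m_2$ and $s^2=\sigma_1^2+\sigma_2^2$. A short calculation using $E[W^+] = s\phi(M/s)+M\Phi(M/s)$ for $W\sim N(M,s^2)$ shows that the Bayes-optimal rule (rank by $m_i$) has conditional loss $\ell^*(M,s) = s\phi(M/s) - |M|\Phi(-|M|/s)$, and that any other rule based on a decision statistic $D$ has excess conditional loss $|M|\mathbf{1}\{\operatorname{sgn}(D)\neq \operatorname{sgn}(M)\}$. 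Ranking by the estimating-prior posterior mean $\hat{m}_i = x_i - \hat{\lambda}(x_i)\sigma_i^2$ gives $D = M - \Delta b$, where $\Delta b = (\hat{\lambda}(x_1)-\lambda(x_1))\sigma_1^2 - (\hat{\lambda}(x_2)-\lambda(x_2))\sigma_2^2$; so the excess event is exactly that $M$ lies strictly between $0$ and $\Delta b$.

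Next I would integrate against the joint density $\pi(x_1)\pi(x_2)$, which at leading order factors nicely after the change of variables to $(x,u) = ((x_1+x_2)/2,\,x_1-x_2)$. For part~(i), the $u$-integral $\int \ell^*(u,s)\,du$ evaluates to $s^2/2$ using $\int s\phi(u/s)\,du = s^2$ together with $\int_0^\infty v(1-\Phi(v))\,dv = \tfrac14$, leaving $\tfrac12(\sigma_1^2+\sigma_2^2)\int \pi(x)^2\,dx = \tfrac12(\sigma_1^2+\sigma_2^2)E(\pi(x))$. For part~(ii), using $M\approx u$ and $\Delta b \approx (\hat{\lambda}(x)-\lambda(x))(\sigma_1^2-\sigma_2^2)$ at leading order, the excess-loss integral becomes
\begin{equation*}
\int_a^\infty \pi(x)^2 \int_0^{|\Delta b(x)|} u\,du\,dx = \tfrac12(\sigma_1^2-\sigma_2^2)^2 \int_a^\infty \pi(x)^2(\lambda-\hat{\lambda})^2\,dx.
\end{equation*}
Part~(iii) then drops out by specialising to the point-estimate rule $\hat{\lambda}\equiv 0$ and using the identity $\pi(x)\lambda(x)=-\pi'(x)$ to rewrite the integrand as $\pi'(x)^2$.

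The main obstacle is keeping track of orders of $\sigma$: the optimal loss in~(i) is $O(\sigma^2)$ while the excess in~(ii) is $O(\sigma^4)$, so I have to be careful that the $O(\sigma^4)$ corrections to the posterior mean and to the conditional density of $\theta$ given $x$, which were dropped from the leading Taylor expansion, do not contribute to the coefficient in~(ii). A secondary but equally critical bookkeeping point is the one-sided nature of the misranking interval: integrating $|u|$ over $(0,\Delta b)$ rather than over a symmetric interval of length $2|\Delta b|$ is precisely what produces the factor $\tfrac12$ in the final formulas, so the sign analysis of $M$ vs.\ $D$ must be done carefully and should not be symmetrised prematurely.
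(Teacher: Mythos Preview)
Your proposal is correct, and for parts~(ii) and~(iii) it is essentially the paper's argument in a slightly cleaner parametrisation: the paper fixes $x_1$ and integrates over $x_2$ in the misranking strip $[x_1-\hat\lambda(\sigma_1^2-\sigma_2^2),\,x_1-\lambda(\sigma_1^2-\sigma_2^2)]$, obtaining $\tfrac12\pi(x_1)(\lambda-\hat\lambda)^2(\sigma_1^2-\sigma_2^2)^2$ and then integrates over $x_1$, which is exactly your $(x,u)$ computation with $u=x_1-x_2$.

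For part~(i), however, you take a genuinely different route. The paper conditions on the \emph{true} values $\theta_1,\theta_2$, computes the misranking probability $\Phi\bigl((\lambda(\sigma_1^2-\sigma_2^2)-\Delta)/\sqrt{\sigma_1^2+\sigma_2^2}\bigr)$ with $\Delta=\theta_1-\theta_2$, and then integrates $|\Delta|$ times this probability against $\pi(\theta_1-\Delta)\approx\pi(\theta_1)(1+\Delta\lambda)$; this requires evaluating $\int_0^\infty\Delta\,\Phi((c-\Delta)/\sigma)\,d\Delta$ and $\int_0^\infty\Delta^2\Phi((c-\Delta)/\sigma)\,d\Delta$ explicitly and then discarding higher-order terms in $d=\sigma_1^2-\sigma_2^2$. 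You instead condition on the \emph{observed} $x_1,x_2$, use the approximate posterior $\theta_1-\theta_2\mid x_1,x_2\sim N(M,s^2)$, and reduce to the closed-form Bayes loss $\ell^*(M,s)=s\phi(M/s)-|M|\Phi(-|M|/s)$, whose $M$-integral is the elementary $s^2/2$. Your version is shorter and makes the decomposition ``optimal loss $+$ excess'' transparent at the level of a single pair, which is exactly what parts~(ii)--(iii) need; the paper's version has the advantage of never invoking the normal approximation to the posterior of $\theta$ (only the exact normality of $x_i\mid\theta_i$), at the cost of heavier integral bookkeeping. Your concern about $O(\sigma^4)$ corrections to the posterior in part~(ii) is well placed; the paper does not address it either and treats the result as a leading-order approximation throughout.
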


\begin{proof}

(i) Suppose that the true parameter values are $\theta_1$ and
  $\theta_2$ respectively. Let $\Delta=\theta_1-\theta_2$. Now the
  loss from mis-ranking is $|\Delta|$ if the points are mis-ranked and
  0 if they are not misranked. The points are misranked if either
  $\Delta>0$ and
  $x_1-\lambda(x_1)\sigma_1{}^2<x_2-\lambda(x_2)\sigma_2{}^2$ or if
  $\Delta<0$ and
  $x_1-\lambda(x_1)\sigma_1{}^2>x_2-\lambda(x_2)\sigma_2{}^2$. Since
  the points will not plausibly be misranked if $\Delta$ is large
  (since $x_1$ and $x_2$ will then with high probability be far
  apart), we will assume that $\Delta$ is small, so that we have
  $\lambda(x_1)\approx\lambda(x_2)\approx\lambda(\theta_1)$. We will
  denote this common value $\lambda$. Now for fixed $\theta_1$ and
  $\theta_2$, suppose $\Delta>0$; we want to calculate
  $P\left(x_1-\lambda\sigma_1{}^2<x_2-\lambda\sigma_2{}^2\right)$. We
  know that $x_1-x_2$ is normally distributed with mean $\Delta$ and
  variance $\sigma_1{}^2+\sigma_2{}^2$. Therefore,
$$P\left(x_1-\lambda\sigma_1{}^2<x_2-\lambda\sigma_2{}^2\right)=\Phi\left(\frac{\lambda(\sigma_1{}^2-\sigma_2{}^2)-\Delta}{\sqrt{\sigma_1{}^2+\sigma_2{}^2}}\right)$$
On the other hand, if $\Delta<0$, we have  
$$P\left(x_1-\lambda\sigma_1{}^2>x_2-\lambda\sigma_2{}^2\right)=\Phi\left(\frac{-\lambda(\sigma_1{}^2-\sigma_2{}^2)+\Delta}{\sqrt{\sigma_1{}^2+\sigma_2{}^2}}\right)$$
Now suppose that we fix $\theta_1$, and we want to take the expected
loss over the distribution of $\theta_2$. This is given by 
\begin{align*}
l(\theta_1)&=\int_0^\infty
\pi(\theta_1-\Delta)\Delta\Phi\left(\frac{\lambda(\sigma_1{}^2-\sigma_2{}^2)-\Delta}{\sqrt{\sigma_1{}^2+\sigma_2{}^2}}\right)\,d\Delta-\int_{-\infty}^0
\pi(\theta_1-\Delta)\Delta\Phi\left(\frac{-\lambda(\sigma_1{}^2-\sigma_2{}^2)+\Delta}{\sqrt{\sigma_1{}^2+\sigma_2{}^2}}\right)\,d\Delta\\
%=&\int_0^\infty
%\pi(\theta_1+\Delta)\Delta\Phi\left(\frac{\lambda(\sigma_1{}^2-\sigma_2{}^2)-\Delta}{\sqrt{\sigma_1{}^2+\sigma_2{}^2}}\right)\,d\Delta-\int_{-\infty}^0
%\pi(\theta_1+\Delta)\Delta\left(1-\Phi\left(\frac{\lambda(\sigma_1{}^2-\sigma_2{}^2)-\Delta}{\sqrt{\sigma_1{}^2+\sigma_2{}^2}}\right)\right)\,d\Delta\\
%=&\int_{-\infty}^\infty
%\pi(\theta_1+\Delta)\Delta\Phi\left(\frac{\lambda(\sigma_1{}^2-\sigma_2{}^2)-\Delta}{\sqrt{\sigma_1{}^2+\sigma_2{}^2}}\right)\,d\Delta-\int_{-\infty}^0
%\pi(\theta_1+\Delta)\Delta\,d\Delta\\
\end{align*}
Since the probability of misranking is negligible for large $\Delta$,
we will consider only small values of $\Delta$. For these small
values, we can take the Taylor expansion 
$$\pi(\theta_1-\Delta)=\pi(\theta_1)-\Delta\pi'(\theta_1)=\pi(\theta_1)\left(1+\Delta\lambda(\theta_1)\right)$$
Substituting this into the above loss function gives
\begin{align*}
l(\theta_1)&=\int_0^\infty
\pi(\theta_1)\left(1+\Delta\lambda(\theta_1)\right)\Delta\Phi\left(\frac{\lambda(\sigma_1{}^2-\sigma_2{}^2)-\Delta}{\sqrt{\sigma_1{}^2+\sigma_2{}^2}}\right)\,d\Delta-\int_{-\infty}^0
\pi(\theta_1)\left(1+\Delta\lambda(\theta_1)\right)\Delta\Phi\left(\frac{-\lambda(\sigma_1{}^2-\sigma_2{}^2)+\Delta}{\sqrt{\sigma_1{}^2+\sigma_2{}^2}}\right)\,d\Delta\\
\end{align*}

We recall that 

\begin{align*}
\int_{-c}^\infty \frac{\xi^3
e^{-\frac{\xi^2}{2\sigma^2}}}{\sqrt{2\pi}\sigma}\,d\xi&=\left[-\frac{\sigma}{\sqrt{2\pi}}\xi^2e^{-\frac{\xi^2}{2\sigma^2}}\right]_{-c}^\infty+\int_{-c}^\infty
  \frac{2\sigma}{\sqrt{2\pi}}\xi e^{-\frac{\xi^2}{2\sigma^2}}\,d\xi\\
&=\frac{1}{\sqrt{2\pi}}(\sigma c^2+2\sigma^3)e^{-\frac{c^2}{2\sigma^2}}\\
\int_{-c}^\infty \frac{\xi^2
e^{-\frac{\xi^2}{2\sigma^2}}}{\sqrt{2\pi}\sigma}\,d\xi&=\left[-\frac{\sigma}{\sqrt{2\pi}}\xi
  e^{-\frac{\xi^2}{2\sigma^2}}\right]_{-c}^\infty+\int_{-c}^\infty
  \frac{\sigma}{\sqrt{2\pi}} e^{-\frac{\xi^2}{2\sigma^2}}\,d\xi\\
&=\sigma^2\Phi\left(\frac{c}{\sigma}\right)-\frac{\sigma c
    e^{-\frac{c^2}{2\sigma^2}}}{\sqrt{2\pi}}\\
\int_{-c}^\infty \frac{\xi
e^{-\frac{\xi^2}{2\sigma^2}}}{\sqrt{2\pi}\sigma}\,d\xi&=\frac{\sigma e^{-\frac{c^2}{2\sigma^2}}}{\sqrt{2\pi}}\\
\int_{-c}^\infty
\frac{e^{-\frac{\xi^2}{2\sigma^2}}}{\sqrt{2\pi}\sigma}\,d\xi
&=\Phi\left(\frac{c}{\sigma}\right)\\
\end{align*}

Hence we calculate

\begin{align*}
\int_{0}^\infty
\Delta^2\Phi\left(\frac{c-\Delta}{\sigma}\right)\,d\Delta&=\left[\frac{\Delta^3}{3}\Phi\left(\frac{c-\Delta}{\sigma}\right)\right]_0^\infty+\int_0^\infty
  \frac{\Delta^3}{3}\frac{e^{-\frac{(c-\Delta)^2}{2\sigma^2}}}{\sqrt{2\pi}\sigma}\,d\Delta\\
&=\int_{-c}^\infty
  \frac{(\xi+c)^3}{3}\frac{e^{-\frac{\xi^2}{2\sigma^2}}}{\sqrt{2\pi}\sigma}\,d\xi\\
&=\frac{1}{3}\left(\frac{1}{\sqrt{2\pi}}(\sigma c^2+2\sigma^3)e^{-\frac{c^2}{2\sigma^2}}+3c\left(\sigma^2\Phi\left(\frac{c}{\sigma}\right)-\frac{\sigma c
    e^{-\frac{c^2}{2\sigma^2}}}{\sqrt{2\pi}}\right)+3c^2\frac{\sigma
    e^{-\frac{c^2}{2\sigma^2}}}{\sqrt{2\pi}}+c^3\Phi\left(\frac{c}{\sigma}\right)\right)\\
&=\frac{1}{3}\left((\sigma c^2+2\sigma^3-3c^2\sigma+3c^2\sigma)\frac{e^{-\frac{c^2}{2\sigma^2}}}{\sqrt{2\pi}}+(3c\sigma^2+c^3)\Phi\left(\frac{c}{\sigma}\right)\right)\\
&=\frac{1}{3}\left((2\sigma^3+\sigma
  c^2)\frac{e^{-\frac{c^2}{2\sigma^2}}}{\sqrt{2\pi}}+(3c\sigma^2+c^3)\Phi\left(\frac{c}{\sigma}\right)\right)\\
\int_{0}^\infty
\Delta\Phi\left(\frac{c-\Delta}{\sigma}\right)\,d\Delta&=\left[\frac{\Delta^2}{2}\Phi\left(\frac{c-\Delta}{\sigma}\right)\right]_0^\infty+\int_0^\infty
  \frac{\Delta^2}{2}\frac{e^{-\frac{(c-\Delta)^2}{2\sigma^2}}}{\sqrt{2\pi}\sigma}\,d\Delta\\
&=\int_{-c}^\infty
  \frac{(\xi+c)^2}{2}\frac{e^{-\frac{\xi^2}{2\sigma^2}}}{\sqrt{2\pi}\sigma}\,d\xi\\
&=\frac{1}{2}\left(\left(\sigma^2\Phi\left(\frac{c}{\sigma}\right)-\frac{\sigma c
    e^{-\frac{c^2}{2\sigma^2}}}{\sqrt{2\pi}}\right)+2c\frac{\sigma
    e^{-\frac{c^2}{2\sigma^2}}}{\sqrt{2\pi}}+c^2\Phi\left(\frac{c}{\sigma}\right)\right)\\
&=\frac{1}{2}\left((\sigma^2+c^2)\Phi\left(\frac{c}{\sigma}\right)+\frac{\sigma c
    e^{-\frac{c^2}{2\sigma^2}}}{\sqrt{2\pi}}\right)\\
\end{align*}

In the loss function, we let $c=\lambda(\sigma_1{}^2-\sigma_2{}^2)$
and $\sigma=\sqrt{\sigma_1{}^2+\sigma_2{}^2}$. Substituting these into
the loss function gives:

\begin{align*}
l(\theta_1)&=\pi(\theta_1)\left(\int_0^\infty
\left(\Delta+\Delta^2\lambda\right)\Phi\left(\frac{c-\Delta}{\sigma}\right)\,d\Delta-\int_{-\infty}^0
\left(\Delta+\Delta^2\lambda\right)\Phi\left(\frac{-c+\Delta}{\sigma}\right)\,d\Delta\right)\\
&=\pi(\theta_1)\left(\int_0^\infty
\left(\Delta+\Delta^2\lambda\right)\Phi\left(\frac{c-\Delta}{\sigma}\right)\,d\Delta+\int_0^{\infty}
\left(\Delta-\Delta^2\lambda\right)\Phi\left(\frac{-c-\Delta}{\sigma}\right)\,d\Delta\right)\\
&=\pi(\theta_1)\left(\int_0^\infty
\Delta\Phi\left(\frac{c-\Delta}{\sigma}\right)\,d\Delta+\int_0^{\infty}
\Delta\Phi\left(\frac{-c-\Delta}{\sigma}\right)\,d\Delta+\lambda\int_0^\infty\Delta^2\Phi\left(\frac{c-\Delta}{\sigma}\right)\,d\Delta-\lambda\int_0^\infty\Delta^2\Phi\left(\frac{-c-\Delta}{\sigma}\right)\,d\Delta\right)\\
&=\pi(\theta_1)\left(\frac{1}{2}\left((\sigma^2+c^2)\left(\Phi\left(\frac{c}{\sigma}\right)+\Phi\left(\frac{-c}{\sigma}\right)\right)+\frac{\sigma e^{-\frac{c^2}{2\sigma^2}}}{\sqrt{2\pi}}(c+(-c))\right)+\frac{\lambda}{3}\left((3c\sigma^2+c^3)\Phi\left(\frac{c}{\sigma}\right)+(3c\sigma^2+c^3)\Phi\left(\frac{-c}{\sigma}\right)\right)\right)\\
&=\pi(\theta_1)\left(\frac{1}{2}\left((\sigma^2+c^2)\right)+\frac{\lambda}{3}\left((3c\sigma^2+c^3)\left(\Phi\left(\frac{c}{\sigma}\right)+\Phi\left(-\frac{c}{\sigma}\right)\right)\right)\right)\\
&=\pi(\theta_1)\left(\frac{1}{2}\left((\sigma^2+c^2)\right)+\frac{\lambda(3c\sigma^2+c^3)}{3}\right)\\
\end{align*}

If we let $d=\sigma_1{}^2-\sigma_2{}^2$, so that $c=\lambda d$, then
we have

\begin{align*}
l(\theta_1)&=\pi(\theta_1)\left(\frac{1}{2}\left((\sigma^2+d^2\lambda^2)\right)+\frac{\lambda(3d\lambda\sigma^2+d^3\lambda^3)}{3}\right)\\
\end{align*}

(For anyone thinking at this point that the dimensions do not work in
this formula, it is worthwhile to remember that $\lambda$ and $\pi$
are inversely proportional to changes in the scale of $\theta$. That
is, if we change the units so that the value of $\theta$ doubles, the
values of $\lambda$ and $\pi$ will be halved. )

With $\frac{d\lambda}{\sigma}$ assumed to be small, we can neglect the
$d^3\lambda^4$ term to get 
\begin{align*}
l(\theta_1)&=\pi(\theta_1)\left(\frac{1}{2}\left(\sigma^2+d^2\lambda^2\right)+d\lambda^2\sigma^2\right)\\
\end{align*}
If we assume that $d^2$ is
negligible, then our expression becomes 
\begin{align*}
l(\theta_1)&=\pi(\theta_1)\sigma^2\left(\frac{1}{2}+d\lambda^2\right)\\
\end{align*}
We take the expectation of this over the distribution of $\theta_1$ to
get that the expected loss is approximately
$$\frac{\sigma^2}{2}\left({\mathbb
  E}(\pi(x))-2d{\mathbb E}(\pi(x)\lambda(x)^2)\right)$$
We have assumed that $d$ is small with respect to this second term, so
the expected loss is approximately $\frac{\sigma^2}{2}{\mathbb
  E}(\pi(x))$.

(ii) Since $\sigma_1$ and $\sigma_2$ are small, we can assume that
$\lambda(x_1)\approx\lambda(x_2)$. We will let $\lambda$ denote this
common value. Suppose that $x_1$ is the observed value of $X_1$ and
$x_2$ is the observed value of $X_2$, and that
$x_1-x_2>\lambda(\sigma_1{}^2-\sigma_2{}^2)$, but
$x_1-x_2<\hat{\lambda}(\sigma_1{}^2-\sigma_2{}^2)$ so that $x_1$ and
$x_2$ are mis-ranked compared to the ranking under the true prior. Let
$\theta$ be the underlying parameter value for $x_1$, and let $\phi$
be the underlying parameter value for $x_2$. The expected increase in
the loss function due to this misranking, compared to using the
true prior, is then
\begin{align*}
\iint
\pi_{x_1}(\theta)\pi_{x_2}(\phi)(\theta-\phi)\,d\theta\,d\phi&=\int_{-\infty}^\infty
\pi_{x_1}(\theta)\theta\,d\theta-\int_{-\infty}^\infty\pi_{x_2}(\phi)\phi\,d\phi\\
&=x_1-\lambda\sigma_1{}^2-(x_2-\lambda\sigma_2{}^2)\\
&=x_1-x_2-\lambda(\sigma_1{}^2-\sigma_2{}^2)\\
\end{align*}

(where $\pi_{x_1}(\theta)$ and $\pi_{x_2}(\phi)$ are the posterior
distributions of $\theta$ and $\phi$ given observations $x_1$ and
$x_2$ respectively, under the true prior). Similarly, if
$\hat{\lambda}(\sigma_1{}^2-\sigma_2{}^2)<x_1-x_2<\lambda(\sigma_1{}^2-\sigma_2{}^2)$,
then the expected increase in loss is
$x_2-x_1-\lambda(\sigma_2{}^2-\sigma_1{}^2)$.

Now suppose we fix $X_1=x_1$ and
take the expectation of the loss over $X_2$. The expected loss due to
mis-ranking them is
\begin{equation}\label{EqExpLossFixedX1}
\int_{x_1-\hat{\lambda}(\sigma_1{}^2-\sigma_2{}^2)}^{x_1-\lambda(\sigma_1{}^2-\sigma_2{}^2)}
\pi_2(x_2)(x_1-x_2-\lambda(\sigma_1{}^2-\sigma_2{}^2))\,dx_2\end{equation} where
$\pi_2$ is the marginal density of $x_2$.
In the case where
$(\hat{\lambda}-\lambda)(\sigma_1{}^2-\sigma_2{}^2)<0$, we get this by
calculating expected misranking loss over all values for which
$x_1$ and $x_2$ are misranked (compared to posterior mean ranking
using the true prior). In the case where
$(\hat{\lambda}-\lambda)(\sigma_1{}^2-\sigma_2{}^2)>0$, calculating
the expected misranking loss over all values where $x_1$ and $x_2$ are
misranked gives
$$\int^{x_1-\hat{\lambda}(\sigma_1{}^2-\sigma_2{}^2)}_{x_1-\lambda(\sigma_1{}^2-\sigma_2{}^2)}
\pi_2(x_2)(x_2-x_1-\lambda(\sigma_2{}^2-\sigma_1{}^2))\,dx_2$$ and by
reversing the limits and negating the integrand, we get the formula
from Equation~\ref{EqExpLossFixedX1} in this case also.

Since $\sigma_1{}^2$ and $\sigma_2{}^2$ are both small, we can assume
that $\pi$, $\pi_1$ and $\pi_2$ are approximately constant arround
$x_1$, so that $\pi_2(x_2)\approx\pi_1(x_1)\approx\pi(x_1)$ for all
$x_2$ in the relevant range. The integral is then approximately
\begin{align*}
\pi(x_1)\int_{x_1-\lambda(\sigma_1{}^2-\sigma_2{}^2)}^{x_1-\hat{\lambda}(\sigma_1{}^2-\sigma_2{}^2)}
\left(x_2-(x_1-\lambda(\sigma_1{}^2-\sigma_2{}^2))\right)\,dx_2&=\frac{1}{2}\pi(x_1)\left((x_1-\hat{\lambda}(\sigma_1{}^2-\sigma_2{}^2))-(x_1-\lambda(\sigma_1{}^2-\sigma_2{}^2))\right)^2\\
&=\frac{1}{2}\pi(x_1)(\lambda-\hat{\lambda})^2(\sigma_1{}^2-\sigma_2{}^2)^2\\
\end{align*}

For the overall mis-ranking loss, we take the expectation of this over
$x_1$. We are usually particularly interested in the mis-ranking loss
of the upper tail, that is the expected loss due to all misrankings in
the upper tail, so we usually take the expectation over the
distribution of $x_1$ for values $x_1>a$ for some chosen $a$. This is
given by

$$\frac{1}{2}(\sigma_1{}^2-\sigma_2{}^2)^2\int_{a}^\infty \pi(x_1)^2(\lambda(x_1)-\hat{\lambda}(x_1))^2\,dx_1$$

(iii) we calculate this loss by substituting $\hat{\lambda}=0$
into our expression for the additional loss, we get 
$$\frac{1}{2}\left(\sigma_1{}^2-\sigma_2{}^2\right)\int_a^\infty \pi(x)^2\lambda(\theta)^2\,d\theta=\frac{1}{2}\left(\sigma_1{}^2-\sigma_2{}^2\right)\int_a^\infty \pi'(x)^2\,d\theta$$

\end{proof}

%$$1-\frac{\int_a^\infty \pi(\theta)^2(\lambda(\theta)-\hat{\lambda}(\theta))^2\,d\theta}{\int_a^\infty \pi'(\theta)^2\,d\theta}=\frac{2\int_a^\infty \pi(\theta)^2\lambda(\theta)\hat{\lambda}(\theta)\,d\theta-\int_a^\infty \pi(\theta)^2\hat{\lambda}(\theta)^2\,d\theta}{\int_a^\infty \pi'(\theta)^2\,d\theta}$$

%This is the proportion of the avoidable loss function that is actually
%avoided (it can be negative if the loss function is worse than using
%just the observed values to rank the data points).

%That is we rank by the observed values $x_i$. We have already
%observed that if our true values are $\theta_1$ and $\theta_2$ with
%$\theta_1-\theta_2=\Delta$, then $x_1-x_2$ is normal with mean
%$\Delta$ and variance $\sigma^2$, so the units are misranked with
%probability $\Phi\left(\frac{-\Delta}{\sigma}\right)$. Now we fix
%$\theta_1$, and use the approximation
%$\pi(\theta_2)=\pi(\theta_1)(1+\lambda(\theta_1)\Delta)$. This gives
%the expected loss as
%$$\pi(\theta_1)\int_0^\infty \Delta(1+\lambda(\theta_1)\Delta)\Phi\left(-\frac{\Delta}{\sigma}\right)\,d\Delta$$
%which is the same as the expected loss under the posterior mean for
%the true prior.
%We therefore have that the additional loss compared to using the true
%prior is 
%
%
%We assume that
%the density of $\theta_1$ and $\theta_2$ is approximately constant
%over values which might plausibly be misranked. 
%
%
%From our
%previous equations, we get that the expected loss given $\Delta$ is 

\section{Non-Parametric Prior}
\label{NPPriorProof}

\begin{lemma}
Let $\pi$ be a discrete distribution with probability at least
$\frac{1}{r+1}$ in the interval $[x-a,x+a]$ for some $a>0$. Let
$\hat{\theta}$ be the posterior mean for an observation $x$ with
standard error $\sigma$. Then
$$|\hat{\theta}-x|\leqslant a+\sigma\sqrt{2\log(r)}$$
\end{lemma}

\begin{proof}
Let the support of $\pi$ be the values $x+b_i$, with probabilities
$\pi_i$. Then for the posterior distribution of $\theta$, the
probability of $x+b_i$ is 
$$\frac{\pi_ie^{-\frac{b_i{}^2}{2\sigma^2}}}{\sum
  \pi_je^{-\frac{b_j{}^2}{2\sigma^2}}}$$
The posterior mean is therefore 
$$\hat{\theta}=x+\frac{\sum b_i\pi_ie^{-\frac{b_i{}^2}{2\sigma^2}}}{\sum
  \pi_je^{-\frac{b_j{}^2}{2\sigma^2}}}$$

We see that the difference $|\hat{\theta}-x|$ is maximised when the
$b_i$ all have the same sign, which we will w.l.o.g. assume to be
positive. It is clear that $\hat{\theta}$ is maximised by setting all
the $b_i$ in the interval $[0,a]$ to equal $a$, since this both
minimises the posterior probability of the interval $[x,x+a]$ and
maximises the posterior mean conditional on lying in this interval.
We will therefore assume that $b_1=a$, and $\pi_1=\frac{1}{r+1}$, then we have
$$|\hat{\theta}-x|=\frac{\sum b_i\pi_ie^{-\frac{b_i{}^2}{2\sigma^2}}}{\sum
  \pi_je^{-\frac{b_j{}^2}{2\sigma^2}}}
=\frac{
 \frac{ae^{-\frac{a^2}{2\sigma^2}}}{r+1}+\sum_{i=2}^k
  b_i\pi_ie^{-\frac{b_i{}^2}{2\sigma^2}}}{\sum_{i=1}^k
  b_i\pi_ie^{-\frac{b_i{}^2}{2\sigma^2}}}
=\frac{
ae^{-\frac{a^2}{2\sigma^2}}}{(r+1)\sum \pi_je^{-\frac{b_j{}^2}{2\sigma^2}}}+\frac{\sum_{i=m+1}^k
  b_i\pi_ie^{-\frac{b_i{}^2}{2\sigma^2}}}{\sum
  \pi_je^{-\frac{b_j{}^2}{2\sigma^2}}}
$$

For fixed $b_i$, and fixed $\sum_{i=2}^k \pi_i$, if 
$\frac{
ae^{-\frac{a^2}{2\sigma^2}}}{(r+1)\sum \pi_je^{-\frac{b_j{}^2}{2\sigma^2}}}+\frac{\sum_{i=2}^k
  b_i\pi_ie^{-\frac{b_i{}^2}{2\sigma^2}}}{\sum
  \pi_je^{-\frac{b_j{}^2}{2\sigma^2}}}=C$ then
$\frac{
ae^{-\frac{a^2}{2\sigma^2}}}{(r+1)}+\sum_{i=2}^k
  b_i\pi_ie^{-\frac{b_i{}^2}{2\sigma^2}}=C\sum
  \pi_je^{-\frac{b_j{}^2}{2\sigma^2}}$
$$\frac{ (a-C)}{(r+1)}e^{-\frac{a^2}{2\sigma^2}}+\sum_{i=2}^k
  (b_i-C)\pi_ie^{-\frac{b_i{}^2}{2\sigma^2}}=0$$ This gives that the
  contours are linear functions in $\pi$, so the maximum value of $C$
  occurs at a vertex with only one non-zero value of
  $\pi_i$. The value of $\hat{\theta}-x$ is maximised
  subject to $b_1=a$, $\pi_1=\frac{1}{r+1}$ by setting $k=2$,
  $\pi_2=\frac{r}{1+r}$ and choosing the value of $b_2$ to maximise
  the resulting quantity. In this case, we have
$$|\hat{\theta}-x|=\frac{ae^{-\frac{a^2}{2\sigma^2}}+rbe^{-\frac{b^2}{2\sigma^2}}}{e^{-\frac{a^2}{2\sigma^2}}+re^{-\frac{b^2}{2\sigma^2}}}=a+\frac{r(b-a)e^{-\frac{b^2}{2\sigma^2}}}{e^{-\frac{a^2}{2\sigma^2}}+re^{-\frac{b^2}{2\sigma^2}}}
=a+\frac{r(b-a)}{r+e^{\frac{b^2-a^2}{2\sigma^2}}}
$$
Substituting $a=v\sigma$, $b-a=w\sigma$, this expression becomes
$$\hat{\theta}-x=a+\sigma\frac{rw}{r+e^{\frac{w^2+2vw}{2}}}
$$
The derivative of $\frac{r+e^{\frac{w^2+2vw}{2}}}{rw}$ with respect to
$w$ is 
$\left(\frac{w+v}{rw}-\frac{1}{rw^2}\right)e^{\frac{w^2+2vw}{2}}-\frac{1}{w^2}$. We see that for
$w>\sqrt{2\log(r)}$, $w>\sqrt{2}$ and $v\geqslant 0$, we have 
$$\left(\frac{w+v}{rw}-\frac{1}{rw^2}\right)e^{\frac{w^2+2vw}{2}}-\frac{1}{w^2}\geqslant
\left(\frac{w+v}{rw}-\frac{1}{rw^2}\right)e^{\log(r)}e^{vw}-1\geqslant
1+\frac{v}{w}-\frac{2}{w^2}\geqslant 0$$ so
$\frac{r+e^{\frac{w^2+2vw}{2}}}{rw}$ is increasing. Therefore
$\frac{rw}{r+e^{\frac{w^2+2vw}{2}}}$ is decreasing. This means that
assuming $r>e$, we have that 
$\frac{rw}{r+e^{\frac{w^2+2vw}{2}}}\leqslant
\frac{r\sqrt{2\log(r)}}{r(1+e^{v\sqrt{2\log(r)}})}\leqslant\sqrt{2\log(r)}$
for all $w>\sqrt{2\log(r)}$. Meanwhile, we always have
$\frac{rw}{r+e^{\frac{w^2+2vw}{2}}}\leqslant w$, so we always have
$\frac{rw}{r+e^{\frac{w^2+2vw}{2}}}\leqslant \sqrt{2\log(r)}$, and
therefore 
$$|\hat{\theta}-x|\leqslant |a|+\sigma\sqrt{2\log(r)}$$ 
\end{proof}

\begin{lemma}
For a sample of $n$ datapoints and their corresponding standard
errors, the MLE estimate for the prior distribution always assigns
probability at least $\frac{1-e^{-\frac{1}{2}}}{n}$ to the interval
$(x-\sigma\sqrt{2\log(n)+1},x+\sigma\sqrt{2\log(n)+1})$, for every
observed data point $(x,\sigma)$.
\end{lemma}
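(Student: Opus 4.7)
The plan is to exploit the first order optimality condition for the MLE against the simplest possible perturbation, namely moving an infinitesimal amount of mass onto a Dirac spike at the observed point $x_j$. Write $m_i(\pi) = \int \phi_{\sigma_i}(x_i - \theta)\,d\pi(\theta)$, where $\phi_\sigma$ is the normal density with mean $0$ and standard deviation $\sigma$, so that the likelihood is $\prod_i m_i(\pi)$. For a fixed datapoint $(x_j,\sigma_j)$, consider the convex perturbation $\pi_t = (1-t)\pi + t\delta_{x_j}$, which is a valid probability measure for $t\in[0,1]$. Since $\pi$ maximises the log likelihood, the one sided derivative at $t=0$ must be non positive, which gives
\begin{equation*}
\sum_{i=1}^n \frac{\phi_{\sigma_i}(x_i - x_j)}{m_i(\pi)} \leqslant n.
\end{equation*}

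Keeping only the $j$th term (all summands are non-negative) and using $\phi_{\sigma_j}(0) = \frac{1}{\sigma_j\sqrt{2\pi}}$ produces the key lower bound
\begin{equation*}
m_j(\pi) \geqslant \frac{1}{n\sigma_j\sqrt{2\pi}}.
\end{equation*}

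Next I would upper bound $m_j(\pi)$ by splitting the integral over the interval $I_j = (x_j - \sigma_j\sqrt{2\log(n)+1},\, x_j + \sigma_j\sqrt{2\log(n)+1})$ and its complement. Let $p_j$ be the mass $\pi$ assigns to $I_j$. On $I_j$ we use the trivial bound $\phi_{\sigma_j}(x_j-\theta)\leqslant \frac{1}{\sigma_j\sqrt{2\pi}}$; on the complement, the definition of $I_j$ forces $(x_j-\theta)^2/(2\sigma_j^2) \geqslant \log n + \tfrac{1}{2}$, so $\phi_{\sigma_j}(x_j-\theta) \leqslant \frac{e^{-1/2}}{n\sigma_j\sqrt{2\pi}}$. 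This yields
\begin{equation*}
m_j(\pi) \leqslant \frac{p_j}{\sigma_j\sqrt{2\pi}} + \frac{(1-p_j)e^{-1/2}}{n\sigma_j\sqrt{2\pi}}.
\end{equation*}

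Combining the two bounds and clearing the common factor $\frac{1}{n\sigma_j\sqrt{2\pi}}$ gives $1 \leqslant np_j + (1-p_j)e^{-1/2}$, which rearranges to $p_j \geqslant \frac{1-e^{-1/2}}{n - e^{-1/2}} \geqslant \frac{1-e^{-1/2}}{n}$, as required. The only real subtlety is justifying the one sided optimality condition: one should note that $\pi_t$ is a genuine probability measure in the feasible set for $t\in[0,1]$, so differentiability of $\log\prod_i m_i(\pi_t)$ from the right at $t=0$ (which is immediate since each $m_i(\pi_t)$ is affine and strictly positive in $t$) together with the MLE property forces the derivative there to be $\leqslant 0$. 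Everything else is a straightforward pointwise bound on the normal density.
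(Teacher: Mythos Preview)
Your proof is correct and follows essentially the same approach as the paper: perturb the MLE toward a point mass $\delta_{x_j}$, derive a lower bound on the marginal $m_j(\pi)$, and combine it with the same upper bound obtained by splitting the integral over $I_j$ and its complement. Your version is somewhat more streamlined --- you invoke the first-order optimality condition (directional derivative $\leqslant 0$ at $t=0$) directly, whereas the paper works with a finite perturbation $\pi' = (1-\alpha)\pi + \alpha\delta_x$, bounds the change in log-likelihood, and then sends $\alpha\to 0$, which in effect recomputes that same derivative.
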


\begin{proof}
Suppose the MLE assigns probability $\pi_i$ to point $b_i$. We will
separate the points $b_i$ into points that are in the interval
$I=(x-\sigma\sqrt{2\log(n)+1},x+\sigma\sqrt{2\log(n)+1})$,
and points that are not. Suppose the first $m$ points are in the
interval $I$ and the remaining points are not. We are aiming to show
that $\sum_{i=1}^m \pi_i>\frac{1}{n}$. Suppose this is not the
case. We will then show that the distribution assigning probability
$\pi_i$ to each point $b_i$ is not the MLE by constructing a prior
distribution with larger likelihood.  Let $\phi=\sum_{i=1}^m
\pi_i$. Let $\mathcal X$ be the data set, and let
$(x,\sigma)\in{\mathcal X}$ be a data point. The log-likelihood of the
data can be represented as $l({\mathcal X}\setminus
(x,\sigma))+l(x,\sigma)$, i.e. as the likelihood of the point
$(x,\sigma)$ plus the likelihood of the remainder of the data
points. For a data point $(y,\sigma_y)$, we will use $L_I(y,\sigma_y)$
to represent the conditional likelihood of $y$ given that its
corresponding value of $\theta$ is contained in $I$, and
$L_{\overline{I}}(y)$ for the conditional likelihood of $y$ given that its
corresponding value of $\theta$ is not contained in $I$. We have that
the likelihood of $y$ is $\phi
L_{I}(y)+(1-\phi)L_{\overline{I}}(y)$. If we change the prior to have
probability $\alpha$ at $x$ and $1-\alpha$ times the previous prior,
then the log-likelihood is larger than $l({\mathcal X}\setminus
(x,\sigma))+(n-1)\log(1-\alpha)+\log\left((1-\alpha)L(x)+\frac{\alpha}{\sqrt{2\pi}\sigma}\right)$
The increase in log-likelihood is therefore
$$(n-1)\log(1-\alpha)+\log\left(\frac{(1-\alpha)L(x)+\frac{\alpha}{\sqrt{2\pi}\sigma}}{L(x)}\right)=(n-1)\log(1-\alpha)+\log\left(1-\alpha+\frac{\alpha}{\sqrt{2\pi}\sigma
  L(x)}\right)$$
For this to be an increase, we need 
\begin{align*}
(n-1)\log(1-\alpha)+\log\left(1-\alpha+\frac{\alpha}{\sqrt{2\pi}\sigma
    L(x)}\right)&>0\\
(1-n)\log(1-\alpha)&<\log\left(1-\alpha+\frac{\alpha}{\sqrt{2\pi}\sigma
    L(x)}\right)\\
(1-\alpha)^{(1-n)}&<1-\alpha+\frac{\alpha}{\sqrt{2\pi}\sigma L(x)}\\
(1-\alpha)\left((1-\alpha)^{-n}-1\right)&<\frac{\alpha}{\sqrt{2\pi}\sigma
    L(x)}\\
\end{align*}
If we substitute $\alpha=\frac{\beta}{n}$, where $\beta<1$, then
$(1-\alpha)^{-n}\approx e^{\beta}$ for large $n$. We therefore need 
\begin{align*}
(1-\alpha)\left(e^{\beta}-1\right)&<\frac{\alpha}{\sqrt{2\pi}\sigma L(x)}\\
{\sqrt{2\pi}\sigma L(x)}&<\frac{\alpha}{(1-\alpha)\left(e^{\beta}-1\right)}\\
{\sqrt{2\pi}\sigma L(x)}&<\frac{\beta}{(n-\beta)\left(e^{\beta}-1\right)}\\
\end{align*}
Now we know that $L(x)=L_I(x)+L_{\overline I}(x)$, and
$L_I(x)\leqslant \frac{\phi}{\sqrt{2\pi}\sigma}$, since the prior
probability of the interval $I$ is at most $\phi$, and for each point
$b_i$, the likelihood is
$\frac{\pi_ie^{-\frac{(x-b_i)^2}{2\sigma^2}}}{\left(\sum_{i=1}^m
  \pi_i\right)\sqrt{2\pi}\sigma}$ so $L_{I}(x)=\sum_{i=1}^m\frac{\pi_ie^{-\frac{(x-b_i)^2}{2\sigma^2}}}{\left(\sum_{i=1}^m
  \pi_i\right)\sqrt{2\pi}\sigma}<\frac{1}{\sqrt{2\pi}\sigma}$. Also
$L_{\overline{I}}(x)\leqslant
(1-\phi)\frac{e^{-\left(\log(n)+\frac{1}{2}\right)}}{\sqrt{2\pi}\sigma}=(1-\phi)\frac{e^-{\frac{1}{2}}}{n\sqrt{2\pi}\sigma}$
so $L(x)\leqslant \frac{\phi+\frac{(1-\phi)}{n} e^{-\frac{1}{2}}}{\sqrt{2\pi}\sigma}$.
Therefore, provided that 
$$\frac{\beta}{(n-\beta)\left(e^{\beta}-1\right)}>\phi+\frac{(1-\phi)}{n}
e^{-\frac{1}{2}}$$
we will have an improvement in likelihood. we see that as
$\beta\rightarrow0$, $\frac{\beta}{e^\beta-1}\rightarrow 1$.
For small $\beta$, the left-hand side is approximately $\frac{1}{n}$. 
For the right-hand side, we are given that $\phi<\frac{1-e^{-\frac{1}{2}}}{n}$, so the
right-hand side is less than
$$\frac{1-e^{-\frac{1}{2}}}{n}+\frac{e^{-\frac{1}{2}}}{n}=\frac{1}{n}$$
so the required inequality holds.

\end{proof}

\section{Optimal Parameter Values and Loss Functions for Simulations}
\label{OptimalParameters}

Recall that for a distribution with density
function $\pi(\theta)$, we define
$\lambda(\theta)=-\frac{\pi'(\theta)}{\pi(\theta)}$, and that the best
choice of estimating prior to use for ranking is chosen to minimise
$$\int_a^\infty \pi(\theta)^2(\lambda(\theta)-\hat{\lambda}(\theta))^2\,d\theta$$
where $\lambda(\theta)$ and $\pi(\theta)$ are for the true
prior, while $\hat{\lambda}(\theta)$ is for the estimating
prior. We evaluate this loss function for each combination of
priors. 

\subsection{Loss functions}

\subsubsection{Normal Estimated by Normal}

If the true prior is normal with mean 0 variance $\tau^2$, and the
estimated prior has mean 0, variance $\hat{\tau}^2$, then the loss
function is given by
\begin{align*}
\int_a^\infty
\pi(\theta)^2\left(\frac{\theta}{\tau^2}-\frac{\theta}{\hat{\tau}^2}\right)^2\,d\theta&=\left(\frac{1}{\tau^2}-\frac{1}{\hat{\tau}^2}\right)^2\int_a^\infty
\theta^2\frac{e^{-\frac{\theta^2}{\tau^2}}}{2\pi\tau^2}\,d\theta\\
&=\left(\frac{1}{\tau^2}-\frac{1}{\hat{\tau}^2}\right)^2\left(\frac{\left[-\theta
e^{-\frac{\theta^2}{\tau^2}}  \right]_a^\infty}{4\pi}+\int_a^\infty
\frac{e^{-\frac{\theta^2}{\tau^2}}}{4\pi}\,d\theta\right)\\
&=\left(\frac{1}{\tau^2}-\frac{1}{\hat{\tau}^2}\right)^2\left(\frac{a
e^{-\frac{a^2}{\tau^2}} }{4\pi}+\frac{\tau}{4\sqrt{\pi}}\left(1-\Phi\left(\frac{\sqrt{2}a}{\tau}\right)\right)\right)\\
\end{align*}

\subsubsection{Exponential estimated by normal}

For the exponential true prior we have $\pi(\theta)=\lambda
e^{-\lambda\theta}$ and $\lambda(\theta)=\lambda$. Meanwhile, for the
normal estimating prior, we have that
$\hat{\lambda}(\hat{\theta})=\frac{\hat{\theta}}{\tau^2}$. We are
aiming to choose $\hat{\tau}$ so as to minimise
\begin{align*}
\int_a^\infty
\pi(\theta)^2\left(\lambda-\hat{\lambda}(\theta)\right)^2\,d\theta
&=\int_a^\infty
\pi(\theta)^2\left(\lambda-\frac{\theta}{\hat{\tau}^2}\right)^2\,d\theta\\
&=\int_a^\infty
\lambda^2e^{-2\lambda\theta}\left(\lambda-\frac{\theta}{\hat{\tau}^2}\right)^2\,d\theta\\
\end{align*}

We recall that 
\begin{align*}
\int_a^\infty e^{-2\lambda\theta}\,d\theta&=\frac{e^{-2\lambda a}}{2\lambda}\\
\int_a^\infty \theta e^{-2\lambda\theta}\,d\theta&=\frac{e^{-2\lambda
    a}(2\lambda a+1)}{4\lambda^2}\\
\int_a^\infty \theta^2e^{-2\lambda\theta}\,d\theta&=\frac{e^{-2\lambda
    a}(2\lambda^2a^2+2\lambda a+1)}{4\lambda^3}\\
\end{align*}

Therefore, the objective function is 
$$\frac{e^{-2\lambda a}}{\lambda}\left(\frac{\lambda^4}{2}-\frac{2\lambda^3
  a+\lambda^2}{2\hat{\tau}^2}+\frac{2\lambda^2a^2+2\lambda a+1}{4\hat{\tau}^4}\right)$$

\subsubsection{Pareto estimated by normal}

For the normal estimating prior, we have
$\hat{\lambda}(\theta)=\frac{\theta}{\hat{\tau}^2}$. For the Pareto
true prior, we have $\lambda(\theta)=\frac{\alpha+1}{\theta}$. The
objective function is therefore 
\begin{align*}
\int_a^\infty
\pi(\theta)^2(\lambda(\theta)-\hat{\lambda}(\theta))^2\,d\theta&=\alpha^2\int_a^\infty
\frac{\eta^{2\alpha}}{\theta^{2\alpha+2}}\left(\frac{\alpha+1}{\theta}-\frac{\theta}{\hat{\tau}^2}\right)^2\,d\theta\\
&=\alpha^2\eta^{2\alpha}\int_a^\infty \left(\frac{(\alpha+1)^2}{\theta^{2\alpha+4}}-\frac{2(\alpha+1)}{\theta^{2\alpha+2}\hat{\tau}^2}+\frac{1}{\theta^{2\alpha}\hat{\tau}^4}\right)\,d\theta\\
&=\alpha^2\eta^{2\alpha}\left[ -\frac{(\alpha+1)^2}{(2\alpha+3)\theta^{2\alpha+3}}+\frac{2(\alpha+1)}{(2\alpha+1)\theta^{2\alpha+1}\hat{\tau}^2}-\frac{1}{(2\alpha-1)\theta^{2\alpha-1}\hat{\tau}^4}\right]_a^\infty\\
&=\frac{\alpha^2\eta^{2\alpha}}{a^{2\alpha+3}}\left(\frac{(\alpha+1)^2}{(2\alpha+3)}-\frac{2(\alpha+1)a^2}{(2\alpha+1)\hat{\tau}^2}+\frac{a^4}{(2\alpha-1)\hat{\tau}^4}\right)\\
\end{align*}

\subsubsection{Normal estimated by exponential}

The expected loss is 
\begin{align*}
\hat{\lambda}^2\int_a^\infty
\pi(\theta)^2\,d\theta+2\hat{\lambda}\int_a^\infty
\pi(\theta)\pi'(\theta)\,d\theta+\int_a^\infty \pi'(\theta)^2\,d\theta&=\int_a^\infty \pi'(\theta)^2\,d\theta-\hat{\lambda}
\pi(a)^2+\hat{\lambda}^2\int_a^\infty
\pi(\theta)^2\,d\theta\\
&=\int_a^\infty
\frac{\theta^2}{2\pi\tau^6}e^{-\frac{\theta^2}{\tau^2}}\,d\theta-
\hat{\lambda} \frac{e^{-\frac{a^2}{\tau^2}}}{2\pi\tau^2}+
\hat{\lambda}^2\int_a^\infty \frac{e^{-\frac{\theta^2}{\tau^2}}}{2\pi\tau^2}\,d\theta\\
\end{align*}

We have 
\begin{align*}
\int_a^\infty
\theta\frac{2\theta}{\tau^2}e^{-\frac{\theta^2}{\tau^2}}\,d\theta&=\left[-\theta
  e^{-\frac{\theta^2}{\tau^2}}\right]_a^\infty+\int_a^\infty
e^{-\frac{\theta^2}{\tau^2}}\,d\theta\\
&=ae^{-\frac{a^2}{\tau^2}}+\sqrt{\pi}\tau\left(1-\Phi\left(\frac{\sqrt{2}a}{\tau}\right)\right)\\
\int_a^\infty \frac{e^{-\frac{\theta^2}{\tau^2}}}{2\pi\tau^2}\,d\theta&=\frac{1-\Phi\left(\frac{\sqrt{2}a}{\tau}\right)}{2\sqrt{\pi}\tau}\\
\end{align*}
so the expected loss is
$$\frac{\hat{\lambda}^2}{2\sqrt{\pi}\tau}\left(1-\Phi\left(\frac{\sqrt{2}a}{\tau}\right)\right)
-\frac{\hat{\lambda}e^{-\frac{a^2}{\tau^2}}}{2\pi\tau^2}+\frac{ae^{-\frac{a^2}{\tau^2}}}{4\pi\tau^4}+\frac{1-\Phi\left(\frac{\sqrt{2}a}{\tau}\right)}{4\sqrt{\pi}\tau^3}$$

\subsubsection{Exponential Estimated by Exponential}

If the true prior is exponential with rate $\lambda$, and the
estimated prior is exponential with rate $\hat{\lambda}$, then the
loss function is given by
\begin{align*}
\int_a^\infty
\pi(\theta)^2\left(\lambda-\hat{\lambda}\right)^2\,d\theta&=\left(\lambda-\hat{\lambda}\right)^2\int_a^\infty
    \lambda^2e^{-2\lambda\theta}\,d\theta\\
&=\left(\lambda-\hat{\lambda}\right)^2\left[\frac{-\lambda e^{-2\lambda\theta}}{2}\right]_a^\infty\\
&=\frac{\lambda\left(\lambda-\hat{\lambda}\right)^2}{2}e^{-2\lambda a}\\
\end{align*}

\subsubsection{Pareto estimated by exponential}

The loss function is 
\begin{align*}
\hat{\lambda}^2\int_a^\infty
\left(\frac{\alpha\eta^\alpha}{\theta^{\alpha+1}}\right)^2\,d\theta-2\hat{\lambda}\int_a^\infty
\left(\frac{\alpha\eta^\alpha}{\theta^{\alpha+1}}\right)\left(\frac{\alpha(\alpha+1)\eta^\alpha}{\theta^{\alpha+2}}\right)\,d\theta+\int_a^\infty
\left(\frac{\alpha(\alpha+1)\eta^\alpha}{\theta^{\alpha+2}}\right)^2\,d\theta&=\frac{\hat{\lambda}^2\alpha^2\eta^{2\alpha}}{(2\alpha+1)a^{2\alpha+1}}-2\frac{\hat{\lambda}\alpha^2(\alpha+1)\eta^{2\alpha}}{(2\alpha+2)a^{2\alpha+2}}+\frac{\alpha^2(\alpha+1)^2\eta^{2\alpha}}{(2\alpha+3)a^{2\alpha+3}}\\
&=\frac{\alpha^2\eta^{2\alpha}}{a^{2\alpha+1}}\left(\frac{\hat{\lambda}^2}{(2\alpha+1)}-\frac{\hat{\lambda}}{a}+\frac{(\alpha+1)^2}{(2\alpha+3)a^2}\right)
\end{align*}

\subsubsection{Normal Estimated by Pareto}

For the Normal estimated by Pareto, we have
$\hat{\lambda}(\theta)=\frac{\hat{\alpha}+1}{\theta}$. The loss function is therefore
\begin{align*}
\frac{1}{2\pi\tau^2}\int_a^\infty
e^{-\frac{\theta^2}{\tau^2}}\left(\frac{\theta}{\tau^2}-\frac{\hat{\alpha}+1}{\theta}\right)^2\,d\theta&=\frac{1}{2\pi\tau^2}\int_a^\infty
e^{-\frac{\theta^2}{\tau^2}}\left(\frac{\theta^2}{\tau^4}-\frac{2(\hat{\alpha}+1)}{\tau^2}+\frac{(\hat{\alpha}+1)^2}{\theta^2}\right)\,d\theta
\\
&=\frac{1}{2\pi\tau^2}\left((\hat{\alpha}+1)^2\int_a^\infty
\frac{1}{\theta^2}e^{-\frac{\theta^2}{\tau^2}}\,d\theta-2(\hat{\alpha}+1)\int_a^\infty
\frac{1}{\tau^2}e^{-\frac{\theta^2}{\tau^2}}\,d\theta +\int_a^\infty \frac{\theta^2}{\tau^4}e^{-\frac{\theta^2}{\tau^2}}\,d\theta\right)\\
&=\frac{1}{2\pi\tau^2}\left((\hat{\alpha}+1)^2\int_a^\infty
\frac{1}{\theta^2}e^{-\frac{\theta^2}{\tau^2}}\,d\theta-2(\hat{\alpha}+1)\int_a^\infty
\frac{1}{\tau^2}e^{-\frac{\theta^2}{\tau^2}}\,d\theta +\int_a^\infty
\frac{\theta}{\tau^4}\theta e^{-\frac{\theta^2}{\tau^2}}\,d\theta\right)\\
&=\frac{1}{2\pi\tau^2}\left((\hat{\alpha}+1)^2\int_a^\infty
\frac{1}{\theta^2}e^{-\frac{\theta^2}{\tau^2}}\,d\theta-2(\hat{\alpha}+1)\int_a^\infty
\frac{1}{\tau^2}e^{-\frac{\theta^2}{\tau^2}}\,d\theta +\left[-
\frac{\theta}{2\tau^2} e^{-\frac{\theta^2}{\tau^2}}\right]_a^\infty+\int_a^\infty\frac{1}{2\tau^2} e^{-\frac{\theta^2}{\tau^2}}\,d\theta\right)\\
&=\frac{1}{2\pi\tau^2}\left((\hat{\alpha}+1)^2\int_a^\infty
\frac{1}{\theta^2}e^{-\frac{\theta^2}{\tau^2}}\,d\theta-\left(2\hat{\alpha}+\frac{3}{2}\right)\frac{\sqrt{\pi}}{\tau}\left(1-\Phi\left(\frac{\sqrt{2}a}{\tau}\right)\right) +
\frac{a}{2\tau^2} e^{-\frac{a^2}{\tau^2}}\right)\\
\end{align*}

\subsubsection{Exponential estimated by Pareto}

For the Exponential estimated by Pareto, we have
$\hat{\lambda}(\theta)=\frac{\hat{\alpha}+1}{\theta}$. The loss is therefore
\begin{align*}
\lambda^2\int_a^\infty
e^{-2\lambda\theta}\left(\lambda-\frac{\hat{\alpha}+1}{\theta}\right)^2\,d\theta&=\lambda^2\int_a^\infty
e^{-2\lambda\theta}\left(\lambda^2-\frac{2\lambda(\hat{\alpha}+1)}{\theta}+\frac{(\hat{\alpha}+1)^2}{\theta^2}\right)\,d\theta
\\
&=\lambda^2\left( (\hat{\alpha}+1)^2\int_a^\infty
\frac{1}{\theta^2}e^{-2\lambda\theta}\,d\theta-2\lambda(\hat{\alpha}+1)\int_a^\infty
\frac{1}{\theta}e^{-2\lambda\theta}\,d\theta +\lambda^2\int_a^\infty e^{-2\lambda\theta}\,d\theta\right)\\
\end{align*}

\subsubsection{Pareto Estimated by Pareto}

If the true prior is Pareto with minimum $\eta$ and index $\alpha$, and the
estimated prior is Pareto with minimum $\eta$ and index $\hat{\alpha}$, then the
loss function is given by
\begin{align*}
\int_a^\infty
\pi(\theta)^2\left(\frac{\alpha+1}{\theta}-\frac{\hat{\alpha}+1}{\theta}\right)^2\,d\theta&=\left(\alpha-\hat{\alpha}\right)^2\int_a^\infty
   \frac{\alpha^2\eta^{2\alpha}}{\theta^{2\alpha+4}}\,d\theta\\
&=\left(\alpha-\hat{\alpha}\right)^2\left[-\frac{\alpha^2\eta^{2\alpha}}{(2\alpha+3)\theta^{2\alpha+3}}\right]_a^\infty\\
&=\left(\alpha-\hat{\alpha}\right)^2\frac{\alpha^2\eta^{2\alpha}}{(2\alpha+3)a^{2\alpha+3}}\\
\end{align*}

\subsubsection{MLE Ranking with Normal Prior}

For a normal true prior, we have that the loss from using the MLE ranking is 
\begin{align*}
\int_a^\infty \pi'(\theta)^2\,d\theta&=\int_a^\infty
\frac{\theta^2}{2\pi\tau^6}e^{-\frac{\theta^2}{\tau^2}}\,d\theta=\frac{1}{4\pi\tau^4}\left(\left[-\theta
  e^{-\frac{\theta^2}{\tau^2}}\right]_a^\infty+\int_a^\infty e^{-\frac{\theta^2}{\tau^2}}\,d\theta\right)=\frac{1}{4\pi\tau^4}\left(a
  e^{-\frac{a^2}{\tau^2}}+\sqrt{\pi}\tau\left(1-\Phi\left(\frac{\sqrt{2}a}{\tau}\right)\right)\right)\\
\end{align*}

For $\tau=1$, $a=1.281552$ this loss is 0.02466714.

\subsubsection{MLE Ranking with Exponential Prior}

The expected loss function using the MLE ranking is 
\begin{align*}
\int_a^\infty \pi'(\theta)^2\,d\theta&=\int_a^\infty
\left(-\lambda^2e^{-\lambda\theta}\right)^2\,d\theta=\lambda^4\int_a^\infty
e^{-2\lambda\theta}\,d\theta=\frac{\lambda^3}{2}\left[-e^{-2\lambda\theta}\right]_a^\infty=\frac{\lambda^3e^{-2\lambda a}}{2}\\
\end{align*}

Substituting $\lambda=1$ and $a=\log(10)$ this gives 0.005.

\subsubsection{MLE Ranking with Pareto Prior}

For the Pareto true prior, the expected additional loss from using
the MLE ranking is 
\begin{align*}
\int_a^\infty \pi'(\theta)^2\,d\theta&=\int_a^\infty
\left(\frac{\alpha(\alpha+1)\eta^\alpha}{\theta^{\alpha+2}}\right)^2\,d\theta=\alpha^2(\alpha+1)^2\eta^{2\alpha}\int_a^\infty \theta^{-(2\alpha+4)}\,d\theta=\frac{\alpha^2(\alpha+1)^2\eta^{2\alpha}}{2\alpha+3}\left[-\theta^{-(2\alpha+3)}\right]_a^\infty=\frac{\alpha^2(\alpha+1)^2\eta^{2\alpha}}{(2\alpha+3)a^{2\alpha+3}}\\
\end{align*}
Substituting  $\alpha=2$, $\eta=\frac{1}{2}$ and
$a=\frac{\sqrt{10}}{2}$, we get the loss is 
\begin{align*}
\frac{2^23^2\left(\frac{1}{2}\right)^4}{7\left(\frac{\sqrt{10}}{2}\right)^7}&=\frac{288}{7000\sqrt{10}}=0.01301051\\
\end{align*}

\subsection{Optimal Parameter estimates}

\subsubsection{Exponential estimated by normal}

The loss function is minimised by 
\begin{align*}
\frac{1}{\tau^2}&=\frac{2\lambda^3a+\lambda^2}{2\lambda^2a^2+2\lambda
  a+1}\\
\tau^2&=\frac{1}{\lambda^2}\left(\frac{2\lambda^2a^2+2\lambda
  a+1}{2\lambda a+1}\right)
\end{align*}

Substituting $\lambda=1$ and $a=\log(10)$ (the 90th percentile of the
exponential distribution) gives 
$$\hat{\tau}=\sqrt{\frac{2\log(10)^2+2\log(10)+1}{2\log(10)+1}}=1.700526$$
and the expected loss is
$$0.01\left(\frac{1}{2}-\frac{(2\log(10)+1)^2}{4(2\log(10)^2+2\log(10)+1)}\right)=0.0001542356$$

%The proportion of losses avoided using the normal prior is therefore 
%$$1-\frac{0.0001542356}{0.005}=0.96915288$$

\subsubsection{Pareto estimated by normal}

The loss function is minimised by 
\begin{align*}
\frac{1}{\tau^2}&=\frac{\left(\frac{\alpha+1}{(2\alpha+1)a^{2\alpha+1}}\right)}{\left(\frac{1}{(2\alpha-1)a^{2\alpha-1}}\right)}=\frac{(2\alpha-1)(\alpha+1)}{(2\alpha+1)a^2}\\
\tau^2&=\frac{2\alpha+1}{(\alpha+1)(2\alpha-1)}a^2\\
\end{align*}

For this value, the loss is 
\begin{align*}
\frac{4\alpha^2(\alpha+1)^2\eta^{2\alpha}}{(2\alpha+3)(2\alpha+1)^2a^{2\alpha+3}}
\end{align*}

Substituting the values $\alpha=2$, $\eta=\frac{1}{2}$ used in the
simulation and the corresponding 90th percentile
$a=\frac{\sqrt{10}}{2}$, we get that the optimal parameter $\tau$ has
\begin{align*}
\frac{1}{\tau^2}=\frac{(2\alpha-1)(\alpha+1)}{(2\alpha+1)a^2}=\frac{3\times 3}{5\times\frac{10}{4}}=0.72\\
\end{align*}
and the loss is 
\begin{align*}
\frac{4\alpha^2(\alpha+1)^2\eta^{2\alpha}}{(2\alpha+3)(2\alpha+1)^2a^{2\alpha+3}}&=
\frac{4\times2^2\times3^2\times\left(\frac{1}{2}\right)^{4}}{7\times5^2\times
  \left(\frac{\sqrt{10}}{2}\right)^{7}}
%&=\frac{9\times 128}{175000\sqrt{10}}\\
=0.002081682\\
\end{align*}

%The proportion of loss removed by the use of the normal prior is
%therefore $1-\frac{0.002081682}{0.01301051}= 0.84000$.

\subsubsection{Normal estimated by exponential}

If the true prior is normal, but we are using an exponential, then
recall that the best choice is 
$$\lambda=\frac{\pi(a)^2}{2\int_a^\infty \pi(\theta)^2\,d\theta}$$
We evaluate
\begin{align*}
\int_a^\infty
\pi(\theta)^2\,d\theta&=\frac{1}{2\pi\tau^2}\int_a^\infty
e^{-\frac{\theta^2}{\tau^2}}\,d\theta
=\frac{1}{2\sqrt{\pi}\tau}\int_a^\infty \frac{1}{\sqrt{2\pi}\frac{\tau}{\sqrt{2}}}e^{-\frac{\theta^2}{\tau^2}}\,d\theta
=\frac{1}{2\sqrt{\pi}\tau}\left(1-\Phi\left(\frac{\sqrt{2}a}{\tau}\right)\right)\\
\end{align*}

so the best choice of $\hat{\lambda}$ for the exponential estimating prior
is 
$$\hat{\lambda}=\frac{e^{-\frac{a^2}{\tau^2}}}{2\sqrt{\pi}\tau\left(1-\Phi\left(\frac{\sqrt{2}a}{\tau}\right)\right)}$$
For the simulation setting $\tau=1$, $a=1.281552$, this is 
$\hat{\lambda}=1.561386$
%=\frac{e^{-1.281552^2}}{2\sqrt{\pi}\left(1-\Phi\left(\sqrt{2}\times
%    1.281552\right)\right)}
and the expected loss for our simulation is 0.000622064.

%The proportion of this loss avoided by using an exponential prior is
%therefore $1-\frac{0.000622064}{0.02466714}=0.9747817$.

\subsubsection{Pareto estimated by exponential}

For the exponential prior, the best choice of $\lambda$ is given by 

$$\lambda=\frac{\pi(a)^2}{2\int_a^\infty \pi(\theta)^2\,d\theta}$$
We evaluate
\begin{align*}
\int_a^\infty
\pi(\theta)^2\,d\theta&=\alpha^2\int_a^\infty
\frac{\eta^{2\alpha}}{\theta^{2\alpha+2}}\,d\theta
=\alpha^2\left[\frac{-\eta^{2\alpha}}{(2\alpha+1)\theta^{2\alpha+1}}\right]_a^\infty
=\alpha^2\frac{\eta^{2\alpha}}{(2\alpha+1)a^{2\alpha+1}}\\
\end{align*}

so the best choice of $\hat{\lambda}$ for the exponential estimating prior
is 
$$\hat{\lambda}=\frac{\alpha^2\left(\frac{\eta^{2\alpha}}{a^{2\alpha+2}}\right)}{2\alpha^2\left(\frac{\eta^{2\alpha}}{(2\alpha+1)a^{2\alpha+1}}\right)}=\frac{2\alpha+1}{2a}$$
for this $\hat{\lambda}$ the expected loss is 
\begin{align*}
\frac{\alpha^2\eta^{2\alpha}}{4(2\alpha+3)a^{2\alpha+3}}\\
\end{align*}

Substituting the values $\alpha=2$, $\eta=\frac{1}{2}$ and the
corresponding 90th percentile $a=\frac{\sqrt{10}}{2}$, we get
$\hat{\lambda}=1.581139$ and the loss is
\begin{align*}
\frac{2^2\left(\frac{1}{2}\right)^{4}}{4\times 7\left(\frac{\sqrt{10}}{2}\right)^{7}}
&=\frac{8}{7000\sqrt{10}}=0.0003614032\\
\end{align*}

%The proportion of loss removed by the use of the exponential prior is
%therefore $1-\frac{0.0003614032}{0.01301051}=0.9722222$.

\subsubsection{Normal Estimated by Pareto}

For the Normal estimated by Pareto, the loss is minimised by 

\begin{align*}
\alpha+1&=\frac{\int_a^\infty
\frac{1}{\tau^2}e^{-\frac{\theta^2}{\tau^2}}\,d\theta}{\int_a^\infty
\frac{1}{\theta^2}e^{-\frac{\theta^2}{\tau^2}}\,d\theta}\\
\end{align*}

For this choice of $\alpha$, the loss is 
$$\frac{1}{2\pi\tau^2}\left(\int_a^\infty \frac{\theta^2}{\tau^4}e^{-\frac{\theta^2}{\tau^2}}\,d\theta-\frac{\left(\int_a^\infty \frac{1}{\tau^2}e^{-\frac{\theta^2}{\tau^2}}\,d\theta\right)^2}{\int_a^\infty \frac{1}{\theta^2}e^{-\frac{\theta^2}{\tau^2}}\,d\theta}\right)$$

For the case in our simulation, we have $\tau=1$ and $a=1.281552$. For
these values we calculate numerically 
\begin{align*}
\int_a^\infty \frac{1}{\theta^2}e^{-\frac{\theta^2}{\tau^2}}\,d\theta&=0.02706327\\
\int_a^\infty
\frac{\theta^2}{\tau^4}e^{-\frac{\theta^2}{\tau^2}}\,d\theta&=0.1549882\\
\int_a^\infty
\frac{1}{\tau^2}e^{-\frac{\theta^2}{\tau^2}}\,d\theta&=\sqrt{\pi}(1-\Phi(\sqrt{2}a))=0.06197059\\
\end{align*}

Substituting these into the formula, we get that the expected loss is 
$$\frac{1}{2\pi}\left(0.1549882-\frac{\left(0.06197059\right)^2}{0.02706327}\right)=0.002082605$$

%The proportion of loss avoided is $1-\frac{0.002082605}{0.02466714}=0.9155717$.

\subsubsection{Exponential estimated by Pareto}

The loss is minimised by 

\begin{align*}
\alpha+1&=\frac{\lambda\int_a^\infty
\frac{1}{\theta}e^{-2\lambda\theta}\,d\theta}{\int_a^\infty
\frac{1}{\theta^2}e^{-2\lambda\theta}\,d\theta}\\
\end{align*}

Integrating by parts gives 
\begin{align*}
\int_a^\infty
\frac{1}{\theta}e^{-2\lambda\theta}\,d\theta&=\left[-\frac{1}{2\lambda\theta}e^{-2\lambda\theta}\right]_a^\infty-\int_a^\infty
\frac{1}{2\lambda\theta^2}e^{-2\lambda\theta}\,d\theta
=\frac{e^{-2\lambda a}}{2\lambda a}-\frac{1}{2\lambda}\int_a^\infty
\frac{1}{\theta^2}e^{-2\lambda\theta}\,d\theta\\
\end{align*}

We therefore get

\begin{align*}
\alpha+1&=\frac{\frac{e^{-2\lambda a}}{2 a}-\frac{1}{2}\int_a^\infty
\frac{1}{\theta^2}e^{-2\lambda\theta}\,d\theta}{\int_a^\infty
\frac{1}{\theta^2}e^{-2\lambda\theta}\,d\theta}=\frac{e^{-2\lambda a}}{2 a\int_a^\infty
\frac{1}{\theta^2}e^{-2\lambda\theta}\,d\theta}-\frac{1}{2}\\
\alpha&=\frac{e^{-2\lambda a}}{2 a\int_a^\infty
\frac{1}{\theta^2}e^{-2\lambda\theta}\,d\theta}-\frac{3}{2}\\
\end{align*}

We have  $\lambda=1$ and $a=\log(10)$, so numerically we obtain
\begin{align*}
\int_a^\infty
\frac{1}{\theta}e^{-2\lambda\theta}\,d\theta&=0.001829743\\
\int_a^\infty
\frac{1}{\theta^2}e^{-2\lambda\theta}\,d\theta&=0.0006834578\\
\int_a^\infty
e^{-2\lambda\theta}\,d\theta&=\frac{e^{-2\lambda a}}{2\lambda}=0.005\\
\end{align*}

This gives the optimal parameter estimate as
$$\hat\alpha=\frac{0.01}{2\times 0.0006834578\log(10)}-\frac{3}{2}=1.677186$$
so the expected loss is 
$0.0001014394$

%The proportion of loss avoided is
%$1-\frac{0.0001014394}{0.02466714}=0.9958877$.

\subsection{MLE Estimates for Parameters of Estimating Priors}
\label{AppMLEestderivation}

We will assume that $a$ is given for each simulation, and that our
objective is to estimate the parameters from the data for each
estimating prior so that the distribution fits the data well on the
tail. We will use maximum likelihood for this purpose. We have already
seen that the loss function is different from the Kullback-Leibler
divergence that the MLE estimate attempts to optimise, so the MLE is
not optimal in terms of minimising our expected loss function, and
further work could go into devising better estimation methods for the
misspecified prior case. For the MLE estimation, the details in each
case are presented here:

\subsubsection{Normal Distribution}

We have $n$ samples which we model as having mean $\theta_i$ following
a normal distribution with mean 0 and variance $\tau^2$, and each
observation $x_i$ following a normal distribution with mean $\theta_i$
and variance $\sigma_i{}^2$. We want to maximise the log-likelihood of
all the data points with $\theta_i>a$ for some cuttoff $a$. To
simplify this procedure, we will maximise the log-likelihood of all
data points for which $x_i>a$. The log-likelihood is then written
$$\sum_{x_i>a}
\left(-\frac{x_i{}^2}{2(\tau^2+\sigma_i{}^2)}-\frac{\log(\tau^2+\sigma_i{}^2)}{2}-\log\left(1-\Phi\left(\frac{a}{\sqrt{\tau^2+\sigma_i{}^2}}\right)\right)\right)$$
(The last term is because we must take the conditional log-likelihood
conditional on $x_i>a$.) Setting the derivative with respect to $\tau$ to zero, we get
$$\sum_{x_i>a}\left(
\frac{\tau
  x_i{}^2}{(\tau^2+\sigma_i{}^2)^2}-\frac{\tau}{(\tau^2+\sigma_i{}^2)}-\frac{\tau
  ae^{-\frac{a^2}{2(\tau^2+\sigma_i{}^2)}}}{\sqrt{2\pi}(\tau^2+\sigma_i{}^2)^{\frac{3}{2}}\left(1-\Phi\left(\frac{a}{\sqrt{\tau^2+\sigma_i{}^2}}\right)\right)}\right)=0$$

We can solve this numerically using Newton's method. We can use the
following method to obtain a good starting value. 
Since $a$ is reasonably large compared to $\tau$, we can approximate 
$$\frac{
  e^{-\frac{a^2}{2(\tau^2+\sigma_i{}^2)}}}{\sqrt{2\pi(\tau^2+\sigma_i{}^2)}\left(1-\Phi\left(\frac{a}{\sqrt{\tau^2+\sigma_i{}^2}}\right)\right)}\approx
\frac{a}{\tau^2+\sigma_i{}^2}$$ [NOTE: this is a poor
  approximation. Using it gives fairly bad approximations for
  $\hat{\tau}$. The approximations for other estimating priors later
  are better.]  so that the final term in the derivative of the
log-likelihood is approximately
$$\frac{\tau a^2}{(\tau^2+\sigma_i{}^2)^{2}}$$
We have assumed that
$\sigma_i$ is small compared to $\tau$, so we can set 

\begin{align*}
\sum_{x_i>a}\left(
\frac{\tau
  (x_i{}^2-a^2)}{(\tau^2+\sigma_i{}^2)^2}-\frac{\tau}{(\tau^2+\sigma_i{}^2)}\right)
&=\sum_{x_i>a}\left(\tau^{-3}
  (x_i{}^2-a^2)\left(1+\frac{\sigma_i{}^2}{\tau^2}\right)^{-2}
-\tau^{-1}\left(1+\frac{\sigma_i{}^2}{\tau^2}\right)^{-1}\right)\\
&\approx \sum_{x_i>a}\left(\tau^{-3}
  (x_i{}^2-a^2)\left(1-2\frac{\sigma_i{}^2}{\tau^2}\right)
-\tau^{-1}\left(1-\frac{\sigma_i{}^2}{\tau^2}\right)\right)\\
&= \tau^{-5}\sum_{x_i>a}\left(-\tau^{4}+\tau^2\left((x_i{}^2-a^2)+\sigma_i{}^2\right)-
 2\sigma_i{}^2(x_i{}^2-a^2)\right)\\
&= \tau^{-5}\left(-n_a\tau^{4}+\tau^2\sum_{x_i>a}\left((x_i{}^2-a^2)+\sigma_i{}^2\right)-
 2\sum_{x_i>a}\sigma_i{}^2(x_i{}^2-a^2)\right)
\end{align*}
where $n_a$ is the number of points with $x_i>a$.

 We solve for when this is equal to zero using the quadratic formula to
get:
$$\tau^2\approx\frac{\sum_{x_i>a}\left((x_i{}^2-a^2)+\sigma_i{}^2\right)+\sqrt{\left(\sum_{x_i>a}\left((x_i{}^2-a^2)+\sigma_i{}^2\right)\right)^2-8n_a\sum_{x_i>a}(x_i{}^2-a^2)\sigma_i^2}}{2n_a}$$
which should give an approximation to the true value of $\tau$. If
we further make the approximation that
$\frac{8n_a\sum_{x_i>a}(x_i{}^2-a^2)\sigma_i^2}{\left(\sum_{x_i>a}\left((x_i{}^2-a^2)+\sigma_i{}^2\right)\right)^2}$
is small, then we have 
$$\sqrt{\left(\sum_{x_i>a}\left((x_i{}^2-a^2)+\sigma_i{}^2\right)\right)^2-8n_a\sum_{x_i>a}(x_i{}^2-a^2)\sigma_i^2}\approx \sum_{x_i>a}\left((x_i{}^2-a^2)+\sigma_i{}^2\right)-\frac{8n_a\sum_{x_i>a}(x_i{}^2-a^2)\sigma_i^2}{2\sum_{x_i>a}\left((x_i{}^2-a^2)+\sigma_i{}^2\right)}$$
which gives us
$$\tau^2\approx\frac{\sum_{x_i>a}\left((x_i{}^2-a^2)+\sigma_i{}^2\right)}{n_a}-\frac{2\sum_{x_i>a}(x_i{}^2-a^2)\sigma_i^2}{\sum_{x_i>a}\left((x_i{}^2-a^2)+\sigma_i{}^2\right)}$$

We can compare this approximate MLE estimate of $\tau^2$ to the
theoretically best estimate for the exponential and Pareto cases. If
we assume that $\sigma_i$ are all small, then the term
$\frac{\sum_{x_i>a}(x_i{}^2-a^2)\sigma_i^2}{\sum_{x_i>a}\left((x_i{}^2-a^2)+\sigma_i{}^2\right)}$
is approximately $\sum_{x_i>a}\frac{(x_i{}^2-a^2)}{\sum_{x_i>a}
  (x_i{}^2-a^2)}\sigma_i^2$, which is a weighted mean of the
$\sigma_i{}^2$. Therefore the expected value is the expected value of
$\sigma_i{}^2$, so we have 
$${\mathbb E}(\hat{\tau}^2)\approx{\mathbb E}((x_i{}^2-a^2))-{\mathbb E}(\sigma_i{}^2)$$
Since $x_i$ is normally distributed with mean $\theta_i$ and variance
$\sigma_i{}^2$, we have that $${\mathbb
  E}(x_i{}^2|\theta_i)=\left({\mathbb
  E}(x_i|\theta_i)\right)^2+\sigma_i{}^2=\theta_i{}^2+\sigma_i{}^2$$
Therefore we have 
$${\mathbb E}(\hat{\tau}^2)\approx{\mathbb E_{x_i>a}}(\theta_i{}^2-a^2)\approx{\mathbb E_{\theta_i>a}}(\theta_i{}^2-a^2)$$
For the exponential true prior, we have that conditional on
$\theta_i>a$, we have $T=\theta_i-a$ follows an exponential
distribution with $\lambda=1$ and $a=\log(10)$. Therefore
\begin{align*}
{\mathbb E_{\theta_i>a}}(\theta_i{}^2)&={\mathbb
  E}((T+a)^2)=a^2+2a{\mathbb E}(T)+{\mathbb E}(T^2)
=a^2+2\frac{a}{\lambda}+\frac{2}{\lambda^2}\\
{\mathbb E_{\theta_i>a}}(\theta_i{}^2)-a^2&=2\frac{a}{\lambda}+\frac{2}{\lambda^2}=2\log(10)+2= 6.60517\\
\end{align*}
Therefore, for a large sample 
$$\hat{\tau}\approx\sqrt{6.60517}=2.570053$$
This is quite far from the optimal estimate of $1.700526$.

For the Pareto true prior, the variance is infinite (since
$\alpha\leqslant 2$), so the distribution of the MLE $\hat{\tau}^2$
has infinite mean. This means we cannot apply the law of large numbers
to assert that for large sample size $\hat{\tau}^2$ will converge in
distribution to a constant. More specifically, $\theta_i{}^2$ follows
a Pareto distribution with $\alpha=1$ and $\eta=\frac{1}{4}$.
%This has moment generating function
%$M(t)=-\frac{t}{4}\int_{-\frac{t}{4}}^\infty x^{-1}e^{-x}\,dt$
%The sum of the $\theta_i{}^2$ has moment
%generating function 
%$$M(t)=-\frac{t^{n_a}}{4^{n_a}}\left(\int_{-\frac{t}{4}}^\infty x^{-1}e^{-x}\,dt\right)^{n_a}$$
The sum of Pareto distributions with small $\alpha$ is approximately
equal to the maximum value, which has distribution function 
$$F_{\sum\theta_i{}^2}(x)=\left(1-\frac{\eta}{x}\right)^{n}$$
We also have 
$$F_{\hat{\tau}^2}(x)=F_{n_a\hat{\tau}^2}(n_ax)=F_{\sum\theta_i{}^2}(n_a(x+a^2))=\left(1-\frac{\eta}{n_a(x+a^2)}\right)^{n}\approx
e^{-\frac{\eta n}{\left(x+a^2\right)n_a}}$$
We are interested in $\frac{1}{\hat{\tau}^2}$, because this is the value
that is important for our posterior mean estimate. The survival
function of $\frac{1}{\hat{\tau}^2}$ is 
$$S_{\frac{1}{\hat{\tau}^2}}(x)=F_{\hat{\tau}^2}\left(\frac{1}{x}\right)\approx
e^{-\frac{\eta nx}{n_a\left(1+a^2x\right)}}$$ That is, $\frac{1}{\hat{\tau}^2}$ approximately follows
an exponential distribution with parameter
$\frac{1}{4P(\theta_i>a)}=2.5$. This can be quite different from the optimal $0.72$. Indeed we get 
\begin{align*}
{\mathbb
  E}\left(\left(\frac{1}{\hat{\tau}^2}-0.72\right)^2\right)&={\mathbb
  E}\left(\left(\frac{1}{\hat{\tau}^2}-0.4\right)^2\right)+(0.4-0.72)^2=\Var\left(\frac{1}{\hat{\tau}^2}\right)+0.32^2=0.16+0.1024=0.2624\\
\end{align*}

Meaning that the MLE estimate for $\tau^2$ does not give a good estimate.

%We can look at the effect of this misestimation on the loss
%function. Recall that for the exponential estimated by normal, the
%expected loss is 

\subsubsection{Exponential Distribution}

The likelihood of a point $(x_i,\sigma_i)$ is 
\begin{align*}
\int_0^\infty \lambda e^{-\lambda
  \theta}\frac{e^{-\frac{(x_i-\theta)^2}{2\sigma_i{}^2}}}{\sqrt{2\pi}\sigma_i}\,d\theta&=
\frac{\lambda e^{\frac{\lambda^2\sigma_i{}^2}{2}-\lambda
    x_i}}{\sqrt{2\pi}\sigma_i}\int_0^\infty
e^{-\frac{(\theta+\lambda\sigma_i{}^2-x_i)^2}{2\sigma_i{}^2}}\,d\theta\\
&=
\lambda e^{\frac{\lambda^2\sigma_i{}^2}{2}-\lambda x_i}\Phi\left(\frac{x_i}{\sigma_i}-\lambda\sigma_i\right)\\
\end{align*}

Since $x_i>a$ and $\sigma_i$ is small, we can approximate 
$$\Phi\left(\frac{x_i}{\sigma_i}-\lambda\sigma_i\right)\approx 1$$
so the log-likelihood is approximately
$$\sum_{x_i>a}\left(\log(\lambda)+\frac{\lambda^2\sigma_i{}^2}{2}-\lambda
x_i\right)=\lambda^2 \sum_{x_i>a}\frac{\sigma_i{}^2}{2}-\lambda
\sum_{x_i>a} x_i+n_a\log(\lambda)$$
However, we want the conditional log-likelihood given $x_i>a$. Since
$\sigma_i$ is small, we will set this approximately equal to the
likelihood conditional on $\theta_i>a$, which is 
$$\sum_{x_i>a}\left(\log(\lambda)+\frac{\lambda^2\sigma_i{}^2}{2}-\lambda
x_i\right)=\lambda^2 \sum_{x_i>a}\frac{\sigma_i{}^2}{2}-\lambda
\sum_{x_i>a} (x_i-a)+n_a\log(\lambda)$$
Setting the derivative with respect to $\lambda$ to zero gives 
\begin{align*}
\lambda \sum_{x_i>a}\sigma_i{}^2-\sum_{x_i>a}
(x_i-a)+\frac{n_a}{\lambda}&=0\\
\lambda^2 \sum_{x_i>a}\sigma_i{}^2-\lambda\sum_{x_i>a}
(x_i-a)+n_a&=0\\
\lambda&=\frac{\sum_{x_i>a}(x_i-a)\pm\sqrt{\left(\sum_{x_i>a}(x_i-a)\right)^2-4n_a\sum_{x_i>a}\sigma_i{}^2}}{2\sum_{x_i>a}\sigma_i{}^2}
\end{align*}
so the log-likelihood is maximised by 
$$\lambda=\frac{\sum_{x_i>a}(x_i-a)-\sqrt{\left(\sum_{x_i>a}(x_i-a)\right)^2-4n_a\sum_{x_i>a}\sigma_i{}^2}}{2\sum_{x_i>a}\sigma_i{}^2}$$
(the other zero is because the
approximation $$\Phi\left(\frac{x_i}{\sigma_i}-\lambda\sigma_i\right)\approx
1$$ does not hold for $\lambda\approx \frac{\sum_{x_i>a}x_i}{\sum_{x_i>a}\sigma_i{}^2}$)
Since $\sigma_i$ is small, we can approximate 
$$\sqrt{\left(\sum_{x_i>a}(x_i-a)\right)^2-4n_a\sum_{x_i>a}\sigma_i{}^2}\approx \sum_{x_i>a}(x_i-a)-\frac{4n_a\sum_{x_i>a}\sigma_i{}^2}{2\sum_{x_i>a}(x_i-a)}$$
Which gives 
$$\hat{\lambda}\approx\frac{4n_a\sum_{x_i>a}\sigma_i{}^2}{4\left(\sum_{x_i>a}(x_i-a)\right)\left(\sum_{x_i>a}\sigma_i{}^2\right)}=\frac{n_a}{\sum_{x_i>a}(x_i-a)}$$

When the true prior is normal, we see that
${\mathbb E}(x_i-a|x_i>a)$ is the mean of a truncated normal
distribution, and is given by $$\tau\frac{e^{-\frac{a^2}{2\tau^2}}}{\sqrt{2\pi}\left(1-\Phi\left(\frac{a}{\tau}\right)\right)}-a$$
Substituting $\tau=1$ and $\Phi(a)=0.9$, we get that 
${\mathbb E}(x_i-a|x_i>a)=\frac{e^{-\frac{1.281552^2}{2}}}{0.1\sqrt{2\pi}}-a=1.754982-1.281552=0.4734308$
Therefore, for a large sample, our estimate $\hat{\lambda}$ will
converge to $\frac{1}{0.4734308}=2.112241$.

For the Pareto true prior, we have ${\mathbb E}(x_i|x_i>a)=2a$.
%$a+\int_a^\infty \frac{a^2}{(x)^2}\,dx=a+\left[-\frac{a^2}{x}\right]_a^\infty=2a$ 
Despite the fact that the variance is infinite, the law of large
numbers still ensures that the sample mean of the $x_i$ does converge to
$2a$ as sample size tends to infinity. We can therefore substitute
$2a$ for this sum in the expression to get
$$\hat{\lambda}\approx\frac{n_a}{\sum_{x_i>a}(x_i-a)}=\frac{1}{a}=\frac{2}{\sqrt{10}}=0.6324555$$

\subsubsection{Pareto Distribution}

For the Pareto estimating prior, the likelihood of $\theta_i$
is $$\alpha\frac{\eta^\alpha}{\theta_i{}^{\alpha+1}}$$ and the probability
of a value exceeding $a$ is $\frac{\eta^\alpha}{a^\alpha}$. The
likelihood of $(x_i,\sigma_i)$ is therefore 
$$\int_{\eta}^\infty \alpha\frac{\eta^\alpha}{\theta^{\alpha+1}\sqrt{2\pi}\sigma_i}e^{-\frac{(x_i-\theta)^2}{2\sigma_i{}^2}}\,d\theta$$
Letting $\frac{\theta}{x_i}=1+\xi$, this integral becomes 
\begin{align*}
\alpha\frac{\eta^\alpha}{x_i^{\alpha+1}\sqrt{2\pi}\sigma_i}\int_{\eta}^\infty
(1+\xi)^{-\alpha-1}e^{-\frac{x_i{}^2\xi^2}{2\sigma_i{}^2}}\,d\theta&=\alpha\frac{\eta^\alpha}{x_i^{\alpha}\sqrt{2\pi}\sigma_i}\int_{\frac{\eta}{x_i}-1}^\infty
(1+\xi)^{-\alpha-1}e^{-\frac{x_i{}^2\xi^2}{2\sigma_i{}^2}}\,d\xi\\
&=\alpha\frac{\eta^\alpha}{x_i^{\alpha+1}}\int_{\frac{\eta}{x_i}-1}^\infty
\left(1-(\alpha+1)\xi+\frac{(\alpha+1)(\alpha+2)}{2}\xi^2-\cdots\right)\frac{x_ie^{-\frac{x_i{}^2\xi^2}{2\sigma_i{}^2}}}{\sqrt{2\pi}\sigma_i}\,d\xi\\
&\approx\alpha\frac{\eta^\alpha}{x_i^{\alpha+1}}{\mathbb E}_{\xi\sim N\left(0,\frac{\sigma_i{}^2}{x_i{}^2}\right)}
\left(1-(\alpha+1)\xi+\frac{(\alpha+1)(\alpha+2)}{2}\xi^2-\cdots\right)\\
&\approx\alpha\frac{\eta^\alpha}{x_i^{\alpha+1}}
\left(1+\frac{(\alpha+1)(\alpha+2)\sigma_i{}^2}{2x_i{}^2}+\cdots\right)\\
\end{align*}

so the
conditional likelihood of $x_i$ given that $\theta_i>a$ is
approximately
$$\alpha\frac{a^\alpha}{x_i^{\alpha+1}}
\left(1+\frac{(\alpha+1)(\alpha+2)\sigma_i{}^2}{2x_i{}^2}\right)$$
The conditional log-likelihood is therefore
$$\alpha\log(a)-(\alpha+1)\log(x_i)+\log(\alpha)+\log\left(1+\frac{(\alpha+1)(\alpha+2)\sigma_i{}^2}{2x_i{}^2}\right)$$
Setting the derivative with respect to $\alpha$ to zero gives
\begin{align*}
\sum_{x_i>a}\left(\log(a)-\log(x_i)+\frac{1}{\alpha}+\frac{(2\alpha+3)\sigma_i{}^2}{2x_i{}^2\left(1+\frac{(\alpha+1)(\alpha+2)\sigma_i{}^2}{2x_i{}^2}\right)}\right)&=0\\
\sum_{x_i>a}\left(\log(a)-\log(x_i)+\frac{1}{\alpha}+\frac{(2\alpha+3)\sigma_i{}^2}{2x_i{}^2+(\alpha+1)(\alpha+2)\sigma_i{}^2}\right)&=0\\
\frac{n_a}{\alpha}+\sum_{x_i>a}\left(\log(a)-\log(x_i)\right)+(2\alpha+3)\sum_{x_i>a}\frac{\sigma_i{}^2}{2x_i{}^2}-(\alpha+1)(\alpha+2)(2\alpha+3)\sum_{x_i>a}\frac{\sigma_i{}^4}{4x_i{}^4}&=0\\
n_a+\alpha\sum_{x_i>a}\left(\log(a)-\log(x_i)+3\frac{\sigma_i{}^2}{2x_i{}^2}-6\frac{\sigma_i{}^4}{4x_i{}^4}\right)+\alpha^2\sum_{x_i>a}\left(\frac{\sigma_i{}^2}{x_i{}^2}-13\frac{\sigma_i{}^4}{4x_i{}^4}\right)-9\alpha^3\sum_{x_i>a}\frac{\sigma_i{}^4}{4x_i{}^4}-\alpha^4\sum_{x_i>a}\frac{\sigma_i{}^4}{2x_i{}^4}&=0\\
\end{align*}

$$\approx \alpha\log(a)-(\alpha+1)\log(x_i)+\log(\alpha)+\frac{(\alpha+1)(\alpha+2)\sigma_i{}^2}{2x_i{}^2}$$
Setting the derivative with respect to $\alpha$ to zero gives
\begin{align*}
\sum_{x_i>a}\left(\log(a)-\log(x_i)+\frac{1}{\alpha}+\frac{(2\alpha+3)\sigma_i{}^2}{2x_i{}^2}\right)&=0\\
\frac{n_a}{\alpha}+\sum_{x_i>a}\left(\log(a)-\log(x_i)\right)+(2\alpha+3)\sum_{x_i>a}\frac{\sigma_i{}^2}{2x_i{}^2}&=0\\
n_a+\alpha\sum_{x_i>a}\left(\log(a)-\log(x_i)+3\frac{\sigma_i{}^2}{2x_i{}^2}\right)+\alpha^2\sum_{x_i>a}\frac{\sigma_i{}^2}{x_i{}^2}&=0\\
\end{align*}
Which has solution
\begin{align*}
\alpha&=\frac{\sum_{x_i>a}\left(\log(x_i)-\log(a)-3\frac{\sigma_i{}^2}{2x_i{}^2}\right)\pm\sqrt{\left(\sum_{x_i>a}\left(\log(x_i)-\log(a)-3\frac{\sigma_i{}^2}{2x_i{}^2}\right)\right)^2-4n_a\sum_{x_i>a}\frac{\sigma_i{}^2}{x_i{}^2}}}{2\sum_{x_i>a}\frac{\sigma_i{}^2}{x_i{}^2}}\\
\end{align*}
Assuming that $\frac{\sigma_i{}^2}{x_i{}^2}$ is small. we have the
approximation
\begin{align*}
&\sqrt{\left(\sum_{x_i>a}\left(\log(x_i)-\log(a)-3\frac{\sigma_i{}^2}{2x_i{}^2}\right)\right)^2-4n_a\sum_{x_i>a}\frac{\sigma_i{}^2}{x_i{}^2}}\\
\approx&\sum_{x_i>a}\left(\log(x_i)-\log(a)-3\frac{\sigma_i{}^2}{2x_i{}^2}\right)-\frac{2n_a\sum_{x_i>a}\frac{\sigma_i{}^2}{x_i{}^2}}{\sum_{x_i>a}\left(\log(x_i)-\log(a)-3\frac{\sigma_i{}^2}{2x_i{}^2}\right)}-\frac{2n_a{}^2\left(\sum_{x_i>a}\frac{\sigma_i{}^2}{x_i{}^2}\right)^2}{\left(\sum_{x_i>a}\left(\log(x_i)-\log(a)-3\frac{\sigma_i{}^2}{2x_i{}^2}\right)\right)^3}\\
\end{align*}
which gives the MLE 
\begin{align*}
\alpha&=\frac{n_a\sum_{x_i>a}\frac{\sigma_i{}^2}{x_i{}^2}}{\left(\sum_{x_i>a}\frac{\sigma_i{}^2}{x_i{}^2}\right)\left(\sum_{x_i>a}\left(\log(x_i)-\log(a)-3\frac{\sigma_i{}^2}{2x_i{}^2}\right)\right)}+\frac{n_a{}^2\left(\sum_{x_i>a}\frac{\sigma_i{}^2}{x_i{}^2}\right)^2}{2\left(\sum_{x_i>a}\frac{\sigma_i{}^2}{x_i{}^2}\right)\left(\sum_{x_i>a}\left(\log(x_i)-\log(a)-3\frac{\sigma_i{}^2}{2x_i{}^2}\right)\right)^3}\\
&=\frac{n_a}{\sum_{x_i>a}\left(\log(x_i)-\log(a)-3\frac{\sigma_i{}^2}{2x_i{}^2}\right)}+\frac{n_a{}^2\left(\sum_{x_i>a}\frac{\sigma_i{}^2}{x_i{}^2}\right)}{\left(\sum_{x_i>a}\left(\log(x_i)-\log(a)-3\frac{\sigma_i{}^2}{2x_i{}^2}\right)\right)^3}\\
&\approx\frac{n_a}{\sum_{x_i>a}\left(\log(x_i)-\log(a)\right)}+\left(\sum_{x_i>a}\frac{\sigma_i{}^2}{2x_i{}^2}\right)\left(\frac{3n_a}{\left(\sum_{x_i>a}\left(\log(x_i)-\log(a)\right)\right)^2}+\frac{n_a{}^2}{\left(\sum_{x_i>a}\left(\log(x_i)-\log(a)\right)\right)^3}\right)\\
\end{align*}

For our specific case, the normal true prior has $\tau=1$ and
$a=\Phi^{-1}(0.9)=1.281552$. Empirically, for these parameters,
${\mathbb E}(\log(X))=0.538$, so ${\mathbb
  E}(\log(X))-\log(a)=0.29$, and ${\mathbb
  E}\left(\frac{1}{X^2}\right)=0.37$. Therefore, the expected value of
$\hat{\alpha}$ is  
$${\mathbb E}(\hat{\alpha})=\frac{1}{0.29}+0.3694015{\mathbb E}{\sigma_i{}^2}\left(\frac{3}{0.29^2}+\frac{1}{0.39^3}\right)=3.45+28.32{\mathbb E}{\sigma_i{}^2}$$ 
Since $\sigma_i$ follows an exponential distribution with
$\lambda=50$, so ${\mathbb E}(\sigma_i{}^2)=\frac{2}{50^2}=0.0008$,
which means that 
$${\mathbb E}(\hat{\alpha})=3.45+0.02=3.47$$

For the exponential true prior, we have
\begin{proposition}
If $X$ follows an exponential distribution with rate $\lambda$, then
the function
$f(\lambda)={\mathbb E}(\log(1+X))$ satisfies the differential
equation
$$f'(\lambda)=f(\lambda)-\frac{1}{\lambda}$$
\end{proposition}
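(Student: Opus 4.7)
The plan is to reduce $f(\lambda)$ to a simpler integral by integration by parts, then differentiate under the integral sign and recognize the right-hand side algebraically.

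First I would write
\begin{equation*}
f(\lambda)=\int_0^\infty \log(1+x)\,\lambda e^{-\lambda x}\,dx,
\end{equation*}
and integrate by parts with $u=\log(1+x)$ and $dv=\lambda e^{-\lambda x}\,dx$, so that $du=\frac{1}{1+x}\,dx$ and $v=-e^{-\lambda x}$. The boundary term vanishes at both endpoints (at $0$ because $\log(1)=0$, and at $\infty$ because $e^{-\lambda x}$ decays faster than $\log(1+x)$ grows), leaving
\begin{equation*}
f(\lambda)=\int_0^\infty \frac{e^{-\lambda x}}{1+x}\,dx.
\end{equation*}

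Next I would differentiate this representation under the integral (dominated convergence easily justifies interchanging $\frac{d}{d\lambda}$ and $\int$ on any compact subinterval of $(0,\infty)$, since $\left|\frac{x e^{-\lambda x}}{1+x}\right|\leqslant x e^{-\lambda_0 x}$ for $\lambda\geqslant \lambda_0>0$, which is integrable). This gives
\begin{equation*}
f'(\lambda)=-\int_0^\infty \frac{x\,e^{-\lambda x}}{1+x}\,dx.
\end{equation*}
The key algebraic step is then the identity $\frac{x}{1+x}=1-\frac{1}{1+x}$, which splits the integral into
\begin{equation*}
f'(\lambda)=-\int_0^\infty e^{-\lambda x}\,dx+\int_0^\infty \frac{e^{-\lambda x}}{1+x}\,dx=-\frac{1}{\lambda}+f(\lambda),
\end{equation*}
which is the claimed differential equation.

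I do not expect any real obstacle here; the only point that requires a moment's care is justifying the differentiation under the integral sign and checking that the boundary terms from the integration by parts really do vanish. Both are standard and can be dispatched in a sentence.
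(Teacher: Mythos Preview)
Your proof is correct and uses essentially the same ingredients as the paper's: integration by parts to obtain $f(\lambda)=\int_0^\infty \frac{e^{-\lambda x}}{1+x}\,dx$, differentiation under the integral sign, and the identity $\frac{x}{1+x}=1-\frac{1}{1+x}$. The only difference is order: the paper differentiates the original expression first (picking up two terms from the product $\lambda e^{-\lambda x}$) and then integrates by parts, whereas you integrate by parts first and differentiate the simpler representation, which is slightly more economical.
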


\begin{proof}
We have $f(\lambda)=\int_0^\infty \lambda e^{-\lambda
  x}\log(1+x)\,dx$. This gives 
\begin{align*}
f'(\lambda)&=\int_0^\infty  e^{-\lambda
  x}\log(1+x)\,dx-\int_0^\infty  \lambda xe^{-\lambda
  x}\log(1+x)\,dx\\
&=\int_0^\infty  e^{-\lambda
  x}\log(1+x)\,dx-\left[-e^{-\lambda x}x\log(1+x)\right]_0^\infty-\int_0^\infty  \left(\log(1+x)+\frac{x}{1+x}\right)e^{-\lambda
  x}\,dx\\
&=-\int_0^\infty  \frac{x}{1+x}e^{-\lambda
  x}\,dx\\
&=-\int_0^\infty  \left(1-\frac{1}{1+x}\right)e^{-\lambda
  x}\,dx\\
&=\int_0^\infty  \frac{e^{-\lambda  x}}{1+x}\,dx-\frac{1}{\lambda}\\
\end{align*}
On the other hand, integration by parts gives 
\begin{align*}
f(\lambda)&=\int_0^\infty \lambda e^{-\lambda
  x}\log(1+x)\,dx\\
&=\left[-e^{-\lambda
  x}\log(1+x)\right]_0^\infty+\int_0^\infty  \frac{1}{1+x}e^{-\lambda
  x}\,dx\\
&=\int_0^\infty  \frac{1}{1+x}e^{-\lambda
  x}\,dx\\
\end{align*}
Substituting this into the previous equation gives 
$$f'(\lambda)=f(\lambda)-\frac{1}{\lambda}$$
\end{proof}

\begin{proposition}
If $X$ follows an exponential distribution with rate $\lambda$, then
the function $g(\lambda)={\mathbb E}\left(\frac{1}{(1+X)^2}\right)$
satisfies the differential equation
$$g'(\lambda)=\left(1+\frac{2}{\lambda}\right)g(\lambda)-1$$
\end{proposition}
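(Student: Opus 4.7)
The plan is to mirror the strategy used in the proof of the previous proposition: differentiate $g$ under the integral sign, use a partial-fraction identity to split the resulting integrand, and then eliminate any auxiliary integrals via a second integration by parts, thereby expressing $g'(\lambda)$ purely in terms of $g(\lambda)$. Differentiation under the integral is routine to justify since both $\lambda e^{-\lambda x}(1+x)^{-2}$ and its $\lambda$-derivative are dominated by integrable functions of $x$ on any compact $\lambda$-interval inside $(0,\infty)$. This gives
\begin{equation*}
g'(\lambda) = \int_0^\infty \frac{e^{-\lambda x}}{(1+x)^2}\,dx - \int_0^\infty \frac{\lambda x\, e^{-\lambda x}}{(1+x)^2}\,dx,
\end{equation*}
and the first of these integrals is immediately $g(\lambda)/\lambda$.

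For the second integral, I would use the identity $\frac{x}{(1+x)^2} = \frac{1}{1+x} - \frac{1}{(1+x)^2}$ to split it as
\begin{equation*}
\int_0^\infty \frac{\lambda x\, e^{-\lambda x}}{(1+x)^2}\,dx = \lambda\!\int_0^\infty \frac{e^{-\lambda x}}{1+x}\,dx - g(\lambda).
\end{equation*}
The first piece is exactly $\lambda f(\lambda)$ with $f$ as in the previous proposition, so the remaining task is to express $f(\lambda)$ in terms of $g(\lambda)$.

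To close the loop I would apply a single further integration by parts to $f(\lambda)=\int_0^\infty \frac{e^{-\lambda x}}{1+x}\,dx$, taking $u=(1+x)^{-1}$ and $dv=e^{-\lambda x}\,dx$. The boundary term contributes $1/\lambda$ and the remaining integral equals $g(\lambda)/\lambda^2$, yielding $\lambda f(\lambda) = 1 - g(\lambda)/\lambda$. Substituting back eliminates $f$ and gives
\begin{equation*}
g'(\lambda) \;=\; \frac{g(\lambda)}{\lambda} + g(\lambda) - \left(1 - \frac{g(\lambda)}{\lambda}\right) \;=\; \left(1+\frac{2}{\lambda}\right)g(\lambda) - 1,
\end{equation*}
as required. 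The only step that requires any insight beyond mechanical calculation is recognizing that the auxiliary quantity $f(\lambda)$ can be eliminated in favor of $g(\lambda)$ via a second integration by parts, so that the resulting ODE is autonomous in $g$; everything else is routine bookkeeping.
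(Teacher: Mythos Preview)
Your proof is correct and follows essentially the same route as the paper's: differentiate under the integral, use a partial-fraction identity to split off the auxiliary integral $\int_0^\infty \frac{e^{-\lambda x}}{1+x}\,dx$, and then eliminate that integral in terms of $g(\lambda)$ via the integration by parts $\int_0^\infty \frac{\lambda e^{-\lambda x}}{1+x}\,dx = 1 - g(\lambda)/\lambda$. The only cosmetic difference is that the paper writes the decomposition as $\frac{1-\lambda x}{(1+x)^2}=\frac{1+\lambda}{(1+x)^2}-\frac{\lambda}{1+x}$ in one step, whereas you first peel off the $g(\lambda)/\lambda$ term and then apply $\frac{x}{(1+x)^2}=\frac{1}{1+x}-\frac{1}{(1+x)^2}$; the resulting integrals and the final substitution are identical.
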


\begin{proof}
We have $g(\lambda)=\int_0^\infty \frac{\lambda e^{-\lambda
  x}}{(1+x)^2}\,dx$. This gives 
\begin{align*}
g'(\lambda)&=\int_0^\infty  e^{-\lambda
  x}\frac{1-\lambda x}{(1+x)^2}\,dx\\
&=(1+\lambda)\int_0^\infty  \frac{e^{-\lambda
  x}}{(1+x)^2}\,dx-\lambda \int_0^\infty  \frac{e^{-\lambda
  x}}{(1+x)}\,dx \\
\end{align*}
On the other hand, integration by parts gives 
\begin{align*}
\int_0^\infty \frac{\lambda e^{-\lambda
  x}}{1+x}\,dx&=\left[-\frac{e^{-\lambda
  x}}{1+x}\right]_0^\infty-\int_0^\infty  \frac{e^{-\lambda
  x}}{(1+x)^2}\,dx\\
&=1-\frac{g(\lambda)}{\lambda}\\
\end{align*}
This gives us 
\begin{align*}
g'(\lambda)&=\frac{(1+\lambda)}{\lambda}g(\lambda)-\left(1-\frac{g(\lambda)}{\lambda}\right)\\
&=\left(1+\frac{2}{\lambda}\right)g(\lambda)-1\\
\end{align*}
\end{proof}

This means that for an exponential with parameter $\lambda=1$ and
cut-off $a=\log(10)$, $Z=\frac{X}{a}-1$ follows an exponential
distribution with rate $a$, so $\log(X)-\log(a)=\log(1+Z)$,
so its expected value is $f\left(a\right)$, where $f$ is the
solution to $$f'(\lambda)=f(\lambda)-\frac{1}{\lambda}$$
Similarly, $X_i{}^{-2}=(a(1+Z))^{-2}$, so ${\mathbb E}(X_i{}^{-2})=a^{-2}g\left(a\right)$.
The expected value of $\hat{\alpha}$ is then
$$\frac{1}{f\left(\log(10)\right)}+\frac{{\mathbb E}(\sigma_i{}^2)g\left(\log(10)\right)}{2\log(10)^2}\left(\frac{3}{f\left(\log(10)\right)^2}+\frac{1}{f\left(\log(10)\right)^3}\right)$$
Numerically, we find $f\left(\log(10)\right)=0.3239$ and
$g\left(\log(10)\right)=0.5853$. Substituting these
values into the equation gives 
$$\hat{\alpha}=\frac{1}{0.3239}+{\mathbb
  E}(\sigma_i{}^2)\frac{0.5853}{2\log(10)^2}\left(\frac{3}{0.3239^2}+\frac{1}{0.3239^3}\right)= 3.087+3.203{\mathbb
  E}(\sigma_i{}^2)\approx 3.151$$

\end{document}